\g@addto@macro{\endabstract}{\@setabstract}
\newcommand{\authorfootnotes}{\renewcommand\thefootnote{\@fnsymbol\c@footnote}}%
\definecolor{cblue}{rgb}{0.16, 0.32, 0.75}
\definecolor{cred}{rgb}{0.7, 0.11, 0.11}
\renewcommand{\d}{\mathrm{d}}
\newcommand{\D}{\mathcal{D}}
\renewcommand{\H}{\mathcal{H}}
\newcommand{\R}{\mathbb{R}}
\DeclareMathOperator{\dom}{dom}
\newcommand{\norm}[1]{\left\Vert #1 \right\Vert}
\newcommand{\argdot}{{\hspace{0.18em}\cdot\hspace{0.18em}}}
\newcommand{\scalar}[2]{\langle #1, #2\rangle}
\renewcommand{\Re}{\mathrm{Re}\,}
\renewcommand{\Im}{\mathrm{Im}\,}
\newtheorem{theorem}{Theorem}[section]
\newtheorem{corollary}[theorem]{Corollary}
\newtheorem{lemma}[theorem]{Lemma}
\newtheorem{proposition}[theorem]{Proposition}
\theoremstyle{definition}
\newtheorem{definition}[theorem]{Definition}
\theoremstyle{remark}
\newtheorem{remark}[theorem]{Remark}
\numberwithin{equation}{section}
\newtheorem{example}{Example}
\newcommand{\triang}{\hfill$\triangle$}
\title[Quantum controllability on graph-like manifolds]{}
\subjclass[2010]{81Q93, 81Q35, 35Q41, 35J10}
\begin{document}
	
		\begin{center}
		\LARGE
    Quantum controllability on graph-like manifolds \\
    through magnetic potentials and boundary conditions
		\par \bigskip
		
		\normalsize
		\authorfootnotes
		Aitor Balmaseda\footnote{abalmase@math.uc3m.es}\textsuperscript{,1,2},
		Davide Lonigro\footnote{davide.lonigro@ba.infn.it}\textsuperscript{,3,4,5}, and
		Juan Manuel Pérez-Pardo\footnote{jmppardo@math.uc3m.es, corresponding author}\textsuperscript{,1,6} \par \bigskip
		
		\textsuperscript{1}\footnotesize Departamento de Matemáticas, Universidad Carlos III de Madrid, Avda.\ de la Universidad 30, 28911 Madrid, Spain \par
		\textsuperscript{2}\footnotesize Departamento de Análisis Matemático y Matemática Aplicada, Facultad de Ciencias Matemáticas, Universidad Complutense de Madrid, Madrid 28040. Spain\par
		\textsuperscript{3}\footnotesize Dipartimento di Matematica, Università di Bari, I-70125 Bari, Italy \par
		\textsuperscript{4}\footnotesize Dipartimento di Fisica, Università di Bari, I-70126 Bari, Italy \par
		\textsuperscript{5}\footnotesize INFN, Sezione di Bari, I-70126 Bari, Italy \par
		\textsuperscript{6}\footnotesize Instituto de Ciencias Matemáticas (CSIC - UAM - UC3M - UCM) ICMAT, C/ Nicolás Cabrera 13--15, 28049 Madrid, Spain
		\par \bigskip
		
		\today
	\end{center}

  \begin{abstract}
We investigate the controllability of an infinite-dimensional quantum system: a quantum particle confined on a Thick Quantum Graph, a generalisation of Quantum Graphs whose edges are allowed to be manifolds of arbitrary dimension with quasi-$\delta$ boundary conditions.
This is a particular class of self-adjoint boundary conditions compatible with the graph structure.
We prove that global approximate controllability can be achieved using two physically distinct protocols: either using the boundary conditions as controls, or using time-dependent magnetic fields.
Both cases have time-dependent domains for the Hamiltonians.
    \end{abstract}

\maketitle
\vspace{-1cm}

%%%%%%%%%%%%%%%%%%%%%%%%%%%%%%%%%%%%%%%%%%%%%%%%%%%%%%%%%%%%%%%%%%%%%%%%%%%%%%%%%%%%%%%%%%%%%%%%%%%%%%%%%%%%%%%%%%%%%%%%%%%%%%%%%%%%%%%%%%%%%%%%%%%%%%%%%%%%%%%%%%%%%%%%%%%%%%%%%%%%%%%%%%%%%%%%%%%%%%%%%%%%%%%%%%%%%%%%%%%%%%%%%%%%%%%%%%%%%%%%%%%

\section{Introduction}

The theory of Quantum Control has had a great success when applied to finite-dimensional quantum systems and finite-dimensional approximations of infinite-dimensional systems.
This success is, partly, due to the development of the geometric theory of control \cite{AgrachevSachkov2004,Jurdjevic1996}{, which applies to finite-dimensional quantum systems.
In \cite{DAlessandro2007} there is a good account on quantum control techniques, including optimal control theory, applied to the finite-dimensional case.
}
The transition from finite-dimensional to infinite-dimensional control {(that is, from control of dynamical systems with a finite-dimensional space of states to infinite-dimensional ones)} is not an easy task.
In fact, many of the results that apply for finite-dimensional quantum systems do not carry over to infinite-dimensional ones.
A remarkable example of this is the notion of exact controllability, which is not suitable for infinite-dimensional systems \cite{BallMarsdenSlemrod1982,Turinici2000,IbortPerezPardo2009}.
Most of the quantum systems that appear in the applications to develop quantum computers, like ion traps or superconducting circuits, are infinite-dimensional.
However, the models used to describe them are finite-dimensional approximations \cite{Wendin2017,LawEberly1996}.
This introduces errors in their description, as an interaction of the system with itself must be neglected and forces to work in the lowest energy levels of the system to avoid such unwanted interactions.
It is therefore natural to look for better mathematical models for them.

{
The theory of control of finite-dimensional quantum systems is well developed and can be considered mature, including the development of optimal control techniques in the finite-dimensional case, see for instance the recent reviews \cite{KochBoscainCalarcoEtAl2022,GlaserBoscainCalarcoEtAl2015}.
As accounted there, the control of infinite-dimensional quantum systems is at an earlier stage and there are not yet robust general characterisations of controllability.
However, this research field is attracting more and more attention.
}

{
During the last decade, relevant results have been obtained regarding the controllability of infinite-dimensional quantum systems.
For instance, results for adiabatic control \cite{AugierBoscainSigalotti2022,DucaJolyTuraev2023,IbortMarmoPerezPardo2014a,RobinAugierBoscainEtAl2022} or for bilinear quantum control systems, e.g.\ \cite{Beauchard2005a, beauchard2010local, beauchard2010controllability, mirrahimi2009lyapunov, nersesyan2010global}.
Most of these results are obtained on a case-by-case basis.
}
Of particular interest are so-called Lie-Galerkin techniques that have been obtained and successfully applied, for instance, to the control of molecules \cite{ChambrionMasonSigalottiEtAl2009, BoscainCaponigroChambrionEtAl2012, BoussaidCaponigroChambrion2019, BoscainPozzoliSigalotti2021}.
The latter results apply under fairly general conditions and the family of controls considered are piecewise constant functions.
This is a simplifying hypothesis that allows one to bypass the problem of existence of solutions of the time-dependent Schrödinger equation.

{
However, piecewise constant controls are not suitable to address certain situations, for instance \cite{ErvedozaPuel2009} or when the domains of the Hamiltonian operators are not constant.
}
A remarkable particular case is when time-dependent magnetic fields are considered, as the boundary conditions determining {self-adjoint Hamiltonians are typically time-dependent in that case.}
One of the purposes of this article is to take a step forward in addressing this problem.
By using weak solutions of the Schrödinger equation we are able to deal with Hamiltonians with time-dependent domains.

Most results in infinite-dimensional control theory rely on external fields, but an alternative and physically appealing option is to seek controllability by manipulating the \textit{boundary} of the system, for example by modifying the boundary conditions.
Similarly, an option for controlling a system with boundary is to actively deform and/or translate the region in which the system is confined.
The simplest example of such a framework is a quantum particle confined in a one-dimensional box with moving walls (see, e.g., \cite{DiMartinoAnzaFacchiEtAl2013}).
The possibility of controlling such a system by manipulating the motion of the walls was investigated in \cite{BalmasedaLonigroPerezPardo2022,BeauchardCoron2006a,Rouchon2003}, where controllability between eigenstates of a quantum particle via a rigidly moving potential well is proven, and in \cite{CarrascoRoganValdivia2017,DuffinDijkstra2019}, where the case of an oscillating well is numerically investigated, with the main focus again being controllability between eigenstates.
The results presented in this article set up a framework to deal with this kind of systems and to study their controllability.

In this work we present a scheme for controlling the state of infinite-dimensional quantum systems whose dynamics are governed by Schrödinger equations in which the self-adjoint domains of the Hamiltonian operators are possibly time-dependent.
This allows us to prove approximate controllability results using generic magnetic fields as controls and also to introduce a new scheme of quantum control by using time-dependent boundary conditions as controls.
This control scheme does not rely on the action of external fields to drive the state of the system.
The aim is to minimise the interaction with the system hoping that this allows for longer coherent times in the manipulation of quantum systems.
This is a complete new type of control of quantum systems that is possible only in the context of infinite-dimensional quantum systems.
It is worth to stress that the type of control by changing the boundary conditions that we present here is different from other schemes of boundary control studied in the literature during the last 50 years, cf.\ \cite{Fattorini1968, Lions1971}, where the value of the solution at the boundary is prescribed at given times.

{
  As a generic set of models to present this theory, we introduce a class of quantum systems that we call \emph{Thick Quantum Graphs}.
  }
  Thick Quantum Graphs are a generalisation of Quantum Graphs, i.e., metric graphs equipped with a differential operator defined on its edges and appropriate boundary conditions to determine a self-adjoint extension, c.f.\ \cite{KostrykinSchrader2003,Kuchment2004}; where the edges can be manifolds of any dimension.
  Quantum Graphs are thus a particular case of Thick Quantum Graphs and, therefore, the controllability results contained in this article also apply to them.
  {It is worth to stress that} the controllability results proven in this work are also novel from the point of view of Quantum Graphs.

  {
  The purpose of our notion of Thick Quantum Graphs is to describe a quantum particle moving on a circuit-like space made out a number of wires connected according to a network structure.
  Let us remark now that our construction is different from other generalisations of Quantum Graphs with similar names such as ``fat graphs'', ``thick graphs'', ``graph-like spaces'', ``graph neighbourhoods'', ``tubular branched manifolds'', etc.
  Those notions are well suited when considering approximations of thin branched manifolds, while Thick Quantum Graphs are meant to describe a network of wires with an arbitrary shape (see Section~\ref{sec:circuits} for further discussion).
  }%
  For a comprehensive and detailed explanation of these ideas, we refer to \cite{Post2012} and references therein.
  Another related concept is that of Quantum Waveguides, which correspond to Fat Quantum Graphs with Dirichlet boundary conditions \cite{ExnerPost2005,ExnerPost2007,KrejcirikRaymond2014,KrejcirikRaymondTusek2015,MolchanovVainberg2006}.
  {These constructions, being focused in studying how thin network-shaped systems converge to Quantum Graphs, do not provide an easy approach to consider general vertex conditions, i.e., general boundary conditions, which motivates the construction of Thick Quantum Graphs.
  }

  {
  Quantum Graphs and their generalisations provide a way to model different physical systems potentially relevant for the development of quantum technologies.
  }%
  For instance, any vertex coupling on a Quantum Graph can be approximated by Schrödinger operators on thin manifolds \cite{ExnerPost13}, a Quantum Graph model of the anomalous Hall effect has been proposed recently \cite{StredaKucera2015, StredaVyborny2022}, {and there are applications to }thin superconducting networks \cite{Alexander1985, rubinstein1998multiply, rubinstein2006}, quantum wires circuits \cite{exner1989electrons} or photonic crystals \cite{kuchment2002differential}.
  More directly connected with the research presented in this article are the possible applications to quantum computation based on them \cite{cheon2004quantum, tanaka2007quasienergy, tanaka2010adiabatic}.
  There are simply too many examples in the literature to give a comprehensive account.
  Instead, we refer to the thorough overview by P.~Exner in Appendix~K of \cite{AlbeverioGesztesyHoeghKrohnEtAl2005}.
  Besides their relevance, time-dependent problems and, in particular, control problems on these models have been scarcely addressed \cite{BalmasedaPerezPardo2019, DellAntonioFigariTeta2000}.
  This justifies a thorough analysis of the dynamics and controllability on Thick Quantum Graphs.
{
  Some explicit examples of their construction can be found in Section~\ref{sec:preliminaries} (see Examples~\ref{ex:basicTQGexample}, \ref{ex:1d-example} and \ref{ex:2d-example}) and at the end of Section~\ref{sec:TQG} (cf.\ Figure~\ref{fig:examples}).
}

  Motivated by these problems{,} we focus on two selected families of control problems on Thick Quantum Graphs that involve magnetic Laplacians and related boundary conditions.
  {First, }we use time-varying magnetic fields to control the state of the system while we take into account the induced electric fields.
  {Secondly,} we consider the situation where the state of the system is driven by changing the boundary conditions.
  We prove, under certain assumptions, that these systems are approximately controllable.
  In order to address the problem of controllability we need, in particular, to address the problem of the existence of solutions of the Schrödinger equation with time-dependent boundary conditions, a problem that has its own significance.
  Situations similar to the ones presented here have been considered before in some particular quantum systems \cite{DellAntonioFigariTeta2000,DiMartinoAnzaFacchiEtAl2013,PerezPardoBarberoLinanIbort2015}.

  {
  The Hamiltonians considered on this work contain either Laplacians or magnetic Laplacians defined on manifolds with boundary.
  There are different approaches to characterise the self-adjoint extensions of such Laplacians: quasi-boundary triples \cite{BehrndtLanger2012}, the formalism of boundary pairs for positive sesquilinear forms \cite{Post2016} or the use of pseudo-differential operators that allow one construct a boundary triple \cite{Grubb1968}.
  Instead, we will follow the approach in \cite{IbortLledoPerezPardo2015}, which we extended from the case of the Laplace-Beltrami operator to the case of magnetic Laplacians (see Section~\ref{subsec:sa_extensions_quad_forms} for further details).
  }%
  {In this approach, self-adjoint extensions are parametrised} in terms of unitary operators acting on the space of boundary data.
  These unitaries are suitable to carry representations of the symmetry groups of the underlying manifold \cite{IbortLledoPerezPardo2015a}.
  This fact was used in \cite{BalmasedaDiCosmoPerezPardo2019} to characterise the boundary conditions compatible with the graph structure that we consider here.
  Finally, let us remark that the characterisation of self-adjoint extensions that we use in this work can be used for other differential operators like Dirac operators \cite{IbortPerezPardo2015,PerezPardo2017,IbortLlavonaLledoEtAl2021}.
  Therefore, it is likely that similar results can be proven also in these other situations.
  Moreover, this characterisation is also suitable for numerics \cite{IbortPerezPardo2013,LopezYelaPerezPardo2017}, so that further developments of the control problem with applications to optimal control may also be enabled by the results presented here.

{
  The main contributions of this work are a characterisation of controllability of a quantum system by modifying the boundary conditions, a characterisation of the controllability of charged particles by magnetic fields that takes into account the induced electric field and a new set of models, Thick Quantum Graphs, that can accommodate the above description in dimension higher than one and include Quantum Graphs as a particular case.
  The results about controllability are also new in this latter case.
  In order to prove those results, we have generalised the characterisation of self-adjoint extensions in \cite{IbortLledoPerezPardo2015} to the case of magnetic Laplacians.
  }

This article is organised as follows.
In Section~\ref{sec:preliminaries} some {illustrative examples} are {introduced} and the main results of this work are presented.
The self-adjoint extensions of the magnetic Laplacians on generic manifolds with boundary are {characterised} in Section~\ref{sec:magneticLaplacian} and {our} construction of Thick Quantum Graphs is given in Section~\ref{sec:TQG}.
In Section~\ref{sec:circuits} we prove controllability results on Thick Quantum Graphs, cf.\ Theorem~\ref{thm:controllability_piecewise_smooth} and Theorem~\ref{thm:controllability_boundary}.
Finally, in Section~\ref{sec:examples} we show how results of the previous sections can be used to extend controllability results to other meaningful situations.

%%%%%%%%%%%%%%%%%%%%%%%%%%%%%%%%%%%%%%%%%%%%%%%%%%%%%%%%%%%%%%%%%%%%%%%%%%%%%%%%%%%%%%%%%%%%%%%%%%%%%%%%%%%%%%%%%%%%%%%%%%%%%%%%%%%%%%%%%%%%%%%%%%%%%%%%%%%%%%%%%%%%%%%%%%%%%%%%%%%%%%%%%%%%%%%%%%%%%%%%%%%%%%%%%%%%%%%%%%%%%%%%%%%%%%%%%%%%%%%%%%%

\section{Mathematical Preliminaries and Main Results}\label{sec:preliminaries}

  In this section we will present the main results of this work, Theorem~\ref{thm:controllability_piecewise_smooth} and Theorem~\ref{thm:controllability_boundary}, together with some meaningful examples to which they can be applied to.
  Before doing so we need to introduce several useful notions.
  As stated in the introduction, we will study the feasibility of controlling certain quantum systems using their boundary conditions as controls.
  First, one needs to address the problem of the existence of solutions for the time-dependent and non-autonomous Schrödinger equation on these systems, the latter being a linear evolution equation on a complex separable Hilbert space $\mathcal{H}$ defined by a family of self-adjoint operators $\{H(t)\}_{t \in\R}$, densely defined on subsets of $\mathcal{H}$, called the time-dependent Hamiltonian.
  To simplify the notation we will denote this family as $H(t)$, $t\in I$.

  In the most general case, the time-dependent Hamiltonian is a family of unbounded self-adjoint operators whose domains depend on $t$.
  These facts make the problem of the existence of solutions for the Schrödinger equation highly non-trivial.
  There are general sufficient conditions for the existence of solutions of these equations, see for instance \cite{Kato1973,Kisynski1964,Simon1971}.
  An important notion to address this problem is the notion of sesquilinear form associated with an operator.
  For details about them and the associated representation theorems we refer to \cite[Sec.~VI.2]{Kato1995}.

  The following families of time-dependent Hamiltonians define the evolution problems that we are interested in.
  \begin{definition} \label{def:hamiltonian_const_form}
    Let $I \subset \mathbb{R}$ be a compact interval, let $\mathcal{H}^+$ be a dense subspace of $\mathcal{H}$ and $H(t)$, ${t \in I}$, a family of self-adjoint operators on $\mathcal{H}$, where for any $t\in I$ the operator $H(t)$ is densely defined on $\D(t)$.
    We say that $H(t)$, ${t \in I}$, is a \emph{time-dependent Hamiltonian with constant form domain} $\mathcal{H}^+$ if:
    \begin{enumerate}[label=\textit{(\roman*)},nosep,leftmargin=*]
      \item There is $m >0$ such that, for any $t\in I$, $\langle \Phi, H(t)\Phi \rangle \geq -m \|\Phi\|^2$ for all $\Phi\in\D(t)$.
      \item For any $t \in I$, the domain of the Hermitian sesquilinear form $h_t$ associated with $H(t)$ by the representation theorem, i.e.\ the form domain, is $\mathcal{H}^+$.
    \end{enumerate}
  \end{definition} 
    \begin{definition}\label{def:unitarypropagator} A \emph{unitary propagator} is a two-parameter family of unitary operators $U(t,s)$, $s,t\in\R$ that satisfies:
  \begin{enumerate}[label=\textit{(\roman*)},nosep,leftmargin=*]
     \item $U(t,s)U(s,r)= U( t , r)$.
     \item $U(t,t)= \mathbb{I}$.
     \item $U(t,s)$ is jointly strongly continuous in $t$ and $s$.
  \end{enumerate}
  \end{definition}
  Next we introduce several notions of solutions of the Schrödinger equation.

  \begin{definition}\label{def:strong_Schrodinger}
  Let $I\subset\R$ be a compact interval and $H(t)$, $t\in I$, be a time-dependent Hamiltonian.
  Consider the equation
  \begin{equation}\label{eq:schro0}
    i\frac{\mathrm{d}}{\mathrm{d}t}\Psi(t)=H(t)\Psi(t).
  \end{equation}
  We say that a unitary propagator $U(t,s)$, $t,s\in I$, is a \textit{(strong) solution of the Schrödinger equation} if, for all $\Psi_0\in\mathcal{D}(H(s))$, the function $t\in I \mapsto \Psi(t):=U(t,s)\Psi_0$ solves Eq.~\eqref{eq:schro0} with initial condition $\Psi(s)=\Psi_0$.
  \end{definition}

  \begin{definition}\label{def:weak_Schrodinger}
  Let $I\subset\R$ be a compact interval and $H(t)$, $t\in I$, be a time-dependent Hamiltonian with constant form domain $\H^+$,
  and let $\Phi\in\H^+$.
    Consider the equation	
  \begin{equation}\label{eq:schro_weak}
    i\frac{\mathrm{d}}{\mathrm{d}t}\scalar{\Phi}{\Psi(t)}=h_t\left(\Phi,\Psi(t)\right),
  \end{equation}
  with $h_t(\cdot,\cdot)$ being the sesquilinear form uniquely associated with $H(t)$.
    We say that a unitary propagator $U(t,s)$, $t,s\in I$, is a \textit{weak solution of the Schrödinger equation} if, for $\Psi_0\in\H^+$, the function $t\in I\mapsto \Psi(t):=U(t,s)\Psi_0$ solves Eq.~\eqref{eq:schro_weak} with initial condition $\Psi(0)=\Psi_0$ for all $\Phi\in\H^+$.
  \end{definition}

  For quantum control purposes, we will also need an even weaker notion of solution that considers propagators solving Eq.~\eqref{eq:schro_weak} for all but finitely many values of $t$, that is, admitting finitely many time singularities, see e.g.~\cite{BoscainCaponigroChambrionEtAl2012,boussaid2013weakly,ChambrionMasonSigalottiEtAl2009}.
  This leads us to the following definition:
  \begin{definition}\label{def:piecewise-sol}
  Let $I\subset\R$ be a compact interval and $H(t)$, $t\in I$, be a time-dependent Hamiltonian with constant form domain $\H^+$.
  We say that a unitary propagator $U(t,s)$, $t,s\in I$, is a \textit{{piecewise weak} solution of the Schrödinger equation} if, for all $\Psi_0\in\H^+$, there exist $t_0<t_1<\dots<t_d\in I$ and a family of weak solutions of the Schrödinger equation $\{U_i(t,s) \mid t,s \in (t_{i-1},t_i)\}_{i=1,\dots,d}$ such that for $t\in(t_{i-1},t_i)$, $s\in(t_{j-1},t_j)$, $1\leq j<i \leq d$ the unitary propagator $U(t,s)$ can be expressed as
  \begin{equation}
    U(t,s)=U_i(t,t_{i-1})U_{i-1}(t_{i-1},t_{i-2})\cdots U_j(t_{j},s).
  \end{equation}
  \end{definition}

  It is straightforward to prove that strong solutions of the Schrödinger equation are also weak solutions and that the latter are also {piecewise weak} solutions.
  Several results of this article are obtained for families of controls that are piecewise differentiable.
  The precise definition of these sets of functions is given next.

  \begin{definition} \label{def:piecewise_differentiable}
    We will say that a function $f$ is $n$-times piecewise differentiable on $I \subset \mathbb{R}$, denoted $f \in {C_\mathrm{{pw}}^n(I)}$, if there exists a finite collection of open subintervals of $I$, $\{I_j\}_{j=1}^\nu$, such that they are pairwise disjoint with $I = \bigcup_{j} \overline{I_j}$, and there exists a collection of $n$-times differentiable functions $\{g_j\}_{j=1}^\nu \subset C^n(I)$ satisfying $f|_{I_j} = g_j|_{I_j}$ for $j = 1, \dots, \nu$.
  \end{definition}

  The rest of this section is devoted to present the particular families of quantum systems that we consider in this work, and their controllability properties.
  We consider two families of control problems defined on Thick Quantum Graphs, cf.\ Definition~\ref{def:thick-quantum-graph}.

{
  Our notion of Thick Quantum Graph aims to describe the situation of a quantum particle confined to move on a circuit-like space.
  For that purpose, the main ingredients of our description must be the wires of the circuits and its connections.
  The structure of the circuit can be depicted by a graph whose edges represent the wires and whose vertices represent the connections between wires.
  The shape of each wire is given by a Riemannian manifold with boundary, which is associated with each of the edges.
}

  {Therefore,} Thick Quantum Graphs are graph-like manifolds of any dimension and are denoted by a triple $(G,\Omega, \Gamma)$, where $G$ is a graph, $\Omega$ is a non-connected differentiable manifold and $\Gamma$ is a partition of the boundary of the manifold $\Omega$.
  We will denote by $V$ the vertex set of the graph and by $E$ the edges set.
  For $v\in V$ the set $E_v$ will denote the set of edges that share the vertex $v$.

  We consider different self-adjoint extension of the Laplace-Beltrami operator and of magnetic Laplacians defined on Thick Quantum Graphs.
  These extensions are going to be determined by a family of boundary conditions that we call quasi-$\delta$ type boundary conditions.
  These are a natural generalisation of self-adjoint boundary conditions on Quantum Graphs.
  We leave the precise details of the definition of Thick Quantum Graphs and of quasi-$\delta$ type boundary conditions to Section~\ref{sec:TQG}.
  Magnetic Laplacians, cf.\ Definition~\ref{def:magnetic-laplacian}, are defined in Section~\ref{sec:magneticLaplacian}.

  Thick Quantum Graphs contain Quantum Graphs as a particular case, cf.\ \cite{BerkolaikoKuchment2012}; therefore, the results and theorems presented below and in Section~\ref{sec:circuits}, that apply to the general context of Thick Quantum Graphs, also hold in the one-dimensional situation represented by Quantum Graphs.
  It is worth to stress that the results that we present are also novel in the context of Quantum Graphs.

  {
  Instead of giving at this point the precise definition of Thick Quantum Graphs, which is given in Definition~\ref{def:thick-quantum-graph}, we present some meaningful illustrative examples.
}
  
{
  \begin{example}\label{ex:basicTQGexample}
    For a concrete example consider the one shown on Figure~\ref{fig:TQGexample}, where both a one-dimensional Thick Quantum Graph (that is, a Quantum Graph), and a two-dimensional Thick Quantum Graph are depicted (Subfigures~\ref{subfig:TQG-graph} and \ref{subfig:TQG-wire} respectively).
    Both examples share the same structure: a single wire connected to itself.
    Therefore, both Thick Quantum Graphs share the graph describing their structure (which coincides with the one depicted on Figure~\ref{subfig:TQG-graph}), while the shape of the wires itself is what differentiates both cases.
    For the one-dimensional case, Fig.~\ref{subfig:TQG-graph}, the manifold describing the wire is just one interval.
    On the other hand, for the case depicted on Fig.~\ref{subfig:TQG-wire}, the manifold describing the wire is a bended two-dimensional strip.\triang
  \end{example}
  \begin{remark}
    It is worth stressing the fact that Quantum Graphs are included inside our notion of Thick Quantum Graphs with one-dimensional manifolds as wires.
    Also the standard notion of Fattened Graphs is contained in our definition, since it corresponds with wire manifolds consisting on (usually tubular) neighbourhoods of the associated graph.
    The notion presented here is, however, more general since it allows arbitrary shapes for the related manifolds (see Section~\ref{sec:TQG} for further discussion).
  \end{remark}
}

\renewcommand{\thesubfigure}{\Alph{subfigure}}
  \begin{figure}[b]
    \centering
    \begin{subfigure}[b]{0.45\textwidth}
      \centering
      \includegraphics[scale=1]{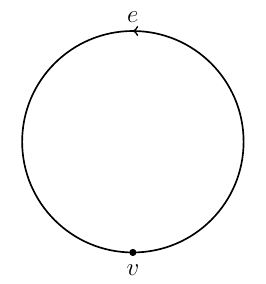}
      \subcaption{{One-dimensional Thick Quantum Graph.}}
      \label{subfig:TQG-graph}
    \end{subfigure}
    \begin{subfigure}[b]{0.45\textwidth}
      \centering
      \includegraphics[scale=1]{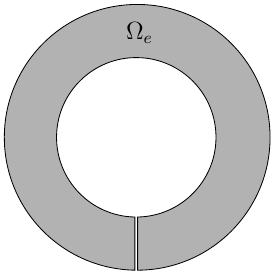}
      \subcaption{{Two-dimensional Thick Quantum Graph.}}
      \label{subfig:TQG-wire}
    \end{subfigure}
    \caption{
      {
        Two examples of Thick Quantum Graph whose wires have different dimensions but both having the same network structure: a wire connected to itself.
      }
    }
    \label{fig:TQGexample}
  \end{figure}

{
  The main results on this work apply for two families of control systems defined on Thick Quantum Graphs: \emph{quantum induction control systems} (Definition~\ref{def:induction_control_system}) and \emph{quasi-$\delta$ boundary control systems} (Definition~\ref{def:boundary_control_system}).
  Let us now introduce the relevant notation and ideas through two examples using different Thick Quantum Graphs, while keeping the precise definitions for latter sections.

\begin{example}\label{ex:1d-example}
  Consider the Thick Quantum Graph $\Omega$ depicted on Figure~\ref{subfig:TQG-graph}, consisting of a quantum graph with a single edge of length $2\pi$ and a single vertex (the identification of $\{0\}\sim\{2\pi\}$).
  We can define the following control systems with time-dependent self-adjoint boundary conditions, i.e., time-dependent self-adjoint domain, for the Laplace operator and the magnetic Laplacian:
  \begin{enumerate}[label=\textit{\alph*}),nosep,leftmargin=*]
    \item Consider the quantum dynamical system
      \begin{equation*}
        i \frac{\d}{\d t} \Psi(t) = - \left(\frac{\d}{\d x} + i a(t) A_0\right)^2 \Psi(t) + a'(t) A_0 x \Psi(t),
      \end{equation*}
      with $\Psi(t) \in \bigl\{\Phi \in H^2(\Omega) \mid \Phi(0) = \Phi(2\pi), \Phi'(0) = \Phi'(2\pi)\bigr\}$, $H^2(\Omega)$ being the Sobolev space of order 2 on the Thick Quantum Graph, and $A_0 \in \mathbb{R}$.
      This system represents, assuming natural units, a quantum particle on a spire of radius $1$, that encircles a solenoid that generates a magnetic flux of magnitud $2\pi a(t)A_0$ through the spire \cite{PerezPardoBarberoLinanIbort2015}.
      The term $a'(t)xA_0$ is the electric potential associated with Faraday's induction law corresponding to the time-varying magnetic flux determined by the magnetic potential $a(t)A_0$.
      This kind of control problem, using the function $a(t)$ as control, is an instance of what we call quantum induction control system in this work.
    \item Consider the quantum dynamical system
      \begin{equation*}
        i\frac{\d}{\d t}\Psi(t) = -\frac{\d^2}{\d x^2}\Psi(t),
      \end{equation*}
      with $\Psi(t)\in\bigl\{ \Psi\in\H^2([0,2\pi]) \mid \Psi(0) = e^{i\chi(t)}\Psi(2\pi), \Psi'(0) = e^{i\chi(t)}\Psi'(2\pi) \bigr\}.$
      This kind of control problem, using the function $\chi(t)$ as control, is an instance of what we call quasi-$\delta$ boundary control problem.
    \triang
  \end{enumerate}
\end{example}
}

{
\begin{example}\label{ex:2d-example}
  We consider now the two-dimensional Thick Quantum Graph on Fig.~\ref{subfig:TQG-wire}.
  Once again we have a single wire connected to itself.
  However, in this case the wire is a bended two-dimensional strip.
  We use polar coordinates $\theta \in [0, 2\pi]$ and $r \in [a, b]$, and define the following systems:
  \begin{enumerate}[label=\textit{\alph*}),nosep,leftmargin=*]
    \item Consider the quantum dynamical system
      \begin{equation*}
        i \frac{\d}{\d t} \Psi(t) = \left[- \frac{\partial^2}{\partial r^2} - \frac{1}{r} \frac{\partial}{\partial r} - \frac{1}{r^2} \left( \frac{\partial}{\partial \theta} + ia(t)A_0\theta \right)^2 + a'(t) A_0 \theta\right] \Psi(t),
      \end{equation*}
      with $\Psi(t) \in \bigl\{ \Phi \in H^2(\Omega) \mid \Phi(a,\theta) = \Phi(b,\theta) = 0, \Phi(r,0) = \Phi(r, 2\pi), \frac{1}{r} \frac{\partial \Phi}{\partial \theta}(r, 0) = \frac{1}{r} \frac{\partial\Phi}{\partial \theta}(r, 2\pi) + \delta \Phi(r, 0) \bigr\}.$
      As before, the term $a'(t)A_0\theta$ is the electric potential associated with Faraday's induction law.
      This system, with the function $a(t)$ as control, is another instance of a quantum induction control system.
    \item Consider the dynamical system
      \begin{equation*}
        i \frac{\d}{\d t} \Psi(t) = - \left( \frac{\partial^2}{\partial r^2} + \frac{1}{r} \frac{\partial}{\partial r} + \frac{1}{r^2} \frac{\partial^2}{\partial \theta^2} \right) \Psi(t),
      \end{equation*}
      with $\Psi(t) \in \bigl\{\Phi \in H^2(\Omega) \mid \Phi(a,\theta) = \Phi(b,\theta) = 0, e^{i\chi(t)}\Phi(r, 2\pi) = \Phi(r,0), \frac{e^{i\chi(t)}}{r} \frac{\partial\Phi}{\partial \theta}(r, 2\pi) = \frac{1}{r} \frac{\partial\Phi}{\partial\theta}(r, 0) + \delta \Phi(r,0)\bigr\}$.
      This kind of control system, using the function $\chi(t): I \to \mathbb{R}$ as control and $\delta \in \mathbb{R}$ a fixed parameter, is another instance of a quasi-$\delta$ boundary control system.
    \triang
  \end{enumerate}
\end{example}

}

The following controllability theorems on systems with time-dependent domains constitute the main contributions of this work.
In particular (Theorem~\ref{thm:controllability_boundary}), we are able to prove controllability by using the space of self-adjoint extensions, i.e., the boundary conditions, as controls.

{
\begin{theorem}\label{thm:controllability_piecewise_smooth}
  Let $H(t)$ be the Hamiltonian of a quantum induction control system (cf.\ Def.~\ref{def:induction_control_system}).
  Let $\Psi_0, \Psi_T\in\H$ with $\norm{\Psi_0} = \norm{\Psi_T}$ and $r\in\R$ a positive number.
  Then, for all $\varepsilon>0$ there exists $T>0$ and a control $u\in C_\mathrm{{pw}}^1([0,T])$ with $|u'|<r$ almost everywhere such that the controlled Schrödinger equation with Hamiltonian $H(t)$ has a {piecewise weak solution} $\Psi(t)$ satisfying
    
  $$\Psi(0)=\Psi_0 \text{ and }\norm{\Psi_T-\Psi(T)} < \varepsilon.$$
\end{theorem}

\begin{theorem}\label{thm:controllability_boundary}
  Let $H(t)$ be the Hamiltonian of a quasi-$\delta$ boundary control system (cf.\ Def.~\ref{def:boundary_control_system}).
  Let $\Psi_0, \Psi_T\in\H$, with $\norm{\Psi_0} = \norm{\Psi_T}$, $r\in\R$ a positive number and $u_0, u_1\in\R$.
  Then, for all $\varepsilon>0$ there exists $T>0$ and a control $u\in C_\mathrm{{pw}}^1([0,T])$ with $|u'|<r$ almost everywhere, $u(0)=u_0$ and $u(T)=u_1$ such that the controlled Schrödinger equation with Hamiltonian $H(t)$ has a {piecewise weak solution} $\Psi(t)$ satisfying

  $$\Psi(0)=\Psi_0 \text{ and }\norm{\Psi_T-\Psi(T)} < \varepsilon.$$
\end{theorem}
}

Theorem~\ref{thm:controllability_piecewise_smooth} and Theorem~\ref{thm:controllability_boundary} establish that the dynamical systems of Example~\ref{ex:1d-example} and Example~\ref{ex:2d-example} are approximately controllable, see Definition~\ref{strong_controllability}, with controls given in the respective theorems and whose solutions of the Schrödinger equation are {piecewise weak solutions}.
In Section~\ref{sec:examples} we present some further examples, being either direct applications or straightforward modifications of these theorems.

%%%%%%%%%%%%%%%%%%%%%%%%%%%%%%%%%%%%%%%%%%%%%%%%%%%%%%%%%%%%%%%%%%%%%%%%%%%%%%%%%%%%%%%%%%%%%%%%%%%%%%%%%%%%%%%%%%%%%%%%%%%%%%%%%%%%%%%%%%%%%%%%%%%%%%%%%%%%%%%%%%%%%%%%%%%%%%%%%%%%%%%%%%%%%%%%%%%%%%%%%%%%%%%%%%%%%%%%%%%%%%%%%%%%%%%%%%%%%%%%%%%%%%%%%%%%%%%%%%%%%%%%

\section{Self-adjoint extensions of the Laplacian and the Magnetic Laplacian}\label{sec:magneticLaplacian}

 In this work we consider the particular case of time-dependent Hamiltonians defined by self-adjoint extensions of the Laplace-Beltrami operator on Riemannian manifolds and related differential operators like magnetic Laplacians.
 We will now present some important results concerning this class of examples.

\subsection{Sobolev spaces and the Laplace-Beltrami operator}
 
  Let $\Omega$ be a Riemannian manifold with boundary $\partial \Omega$ and let $\eta$ denote the Riemannian metric.
  We will consider only orientable, compact and smooth manifolds with piecewise smooth boundary{, that is $\partial\Omega = \cup_{i=1}^N\overline{\partial\Omega_i}$ where each $\partial\Omega_i$ is an open submanifold of $\partial\Omega$ and is itself a smooth manifold with smooth boundary.}
  We will assume that the boundary is Lipschitz, and will denote by $\Lambda(\Omega)$ the space of smooth, complex-valued 1-forms on $\Omega$.
 
  For $\alpha, \beta\in \Lambda(\Omega)$ smooth, complex-valued 1-forms in $\Omega$, we define the inner product
  \begin{equation*}
    \langle \alpha, \beta \rangle_{\Lambda} = \int_\Omega \eta^{-1}(\overline{\alpha}, \beta) \, \d\mu_\eta,
  \end{equation*}
  and $\|\cdot\|_{\Lambda}$ the associated norm, which allows for defining the space of differential 1-forms with square-integrable coefficients.
  For the rest of the work we will drop the subindex in the norm and inner product defined above whenever it leads to no confusion.

  We denote by $H^k(\Omega)$ the Sobolev space of class $k$ on $\Omega$ (see \cite{Taylor2011,LionsMagenes1972}), and its norm $\|\Phi\|^2_{H^k(\Omega)}$.
  When there is no risk of confusion about what the manifold $\Omega$ we are working with is, we will simplify the notation and write $\|\cdot\|_k \coloneqq \|\cdot\|_{H^k(\Omega)}$.

  The Laplace-Beltrami operator can be defined on the space of smooth functions $C^{\infty}(\Omega)$ in terms of the exterior differential, $\d$, and the codifferential, $\delta$, cf.\ \cite{MarsdenAbrahamRatiu1988},
  \begin{equation*}
    \Delta = \delta\d.
  \end{equation*}
  For the rest of this work all the derivatives acting on the manifold $\Omega$ are going to be considered in the weak sense.
The boundary $\partial\Omega$ of the Riemannian manifold inherits the structure of a Riemannian manifold with the metric given by the pull-back of the Riemannian metric on $\Omega$, which we will denote by $\partial\mu$.

In what follows we will denote by $\gamma: H^k(\Omega) \to H^{k-1/2}(\partial\Omega)$, $k > 1/2$, the continuous trace map between Sobolev spaces on the manifold and its boundary, cf.\ \cite{AdamsFournier2003, LionsMagenes1972}.
Finally, let us define some particular extensions of the Laplace-Beltrami operator which we
  will use in the following sections.
  We refer to \cite{LionsMagenes1972} for further details.

  \begin{definition}
    \label{def:min_max_laplacian}
    Let $\Delta$ denote the Laplace-Beltrami operator on a Riemannian manifold with boundary and $\Delta_0 \coloneqq \Delta |_{C_c^\infty(\Omega^\circ)}$, its restriction to smooth functions compactly supported on the interior of $\Omega$.
    We define the following extensions:
    \begin{enumerate}[label=\textit{(\roman*)},nosep, leftmargin=*]
      \item The \emph{minimal closed extension}, $\Delta_{\text{min}}$, is the closure of $\Delta_0$.
      \item The \emph{maximal closed extension} $\Delta_{\text{max}}$ is the adjoint of the
        minimal closed extension, whose domain can be characterised by
        $\dom\Delta_{\text{max}} = \{\Phi \in L^2(\Omega) \mid\Delta \Phi \in L^2(\Omega)\}$.
    \end{enumerate}
  \end{definition}
  For the rest of the work we will denote by Greek capital letters the functions on the spaces
  $H^k(\Omega)$ and by the corresponding lower case letter their trace.
  For $k \geq 2$ the trace of the normal derivatives is well defined as well, and in such a case we will denote it by dotted lower case letters.
  That is, for $\Psi \in H^k(\Omega)$, $k \geq 2$, we denote
  \begin{equation*}
    \psi \coloneqq \gamma(\Psi), \qquad \dot{\psi} \coloneqq \gamma(\d\Psi(\nu)),
  \end{equation*}
  where $\nu \in \mathfrak{X}(\Omega)$ is a vector field pointing outwards, and the orientation of the manifold $\Omega$ and its boundary are chosen in such a way that $\mathrm{i}_\nu \d\mu_\eta = \d\mu_{\partial\eta}$.

    A \emph{magnetic potential} is a real-valued 1-form $A \in \Lambda(\Omega)$.
    We consider a deformed differential associated
    with a magnetic potential,
    \begin{equation*}
      \d_A: C^\infty(\Omega) \to \Lambda(\Omega), \quad \Phi \mapsto \d \Phi + i \Phi A,
    \end{equation*}
    where $\d$ is the exterior derivative and $i$ is the imaginary unit.
    Its formal adjoint,
    \begin{equation*}
      \delta_A: \Lambda(\Omega) \to C^{\infty}(\Omega),
    \end{equation*}
    can be defined by the identity
    \begin{equation*}
      \langle \d_A \Phi, \omega \rangle = \langle \Phi, \delta_A \omega \rangle, \quad
      \Phi \in C^\infty_c(\Omega^\circ),\; \omega \in \Lambda(\Omega).
    \end{equation*}

    We can now define the magnetic Laplacian $\Delta_A$ associated with the potential
    $A \in \Lambda(\Omega)$ by the formula
    \begin{equation*}
      \Delta_A = \delta_A \d_A: C^\infty(\Omega) \to C^\infty(\Omega).
    \end{equation*}
    As in the case of the Laplace-Beltrami operator, Stokes' theorem implies that $\d_A^* = \delta_A$ for a
    boundaryless manifold, and therefore the magnetic Laplacian $\Delta_A$ defined above is essentially self-adjoint on such manifolds.

    A straightforward calculation leads to the relation
    \begin{equation}
      \label{eq:magnetic_laplacian_expanded}
      \Delta_A \Phi = \Delta \Phi - 2i (A, \d \Phi) + \bigl(i \delta A + (A,A)\bigr) \Phi,
    \end{equation}
    where $(\alpha, \beta) = \eta^{-1}(\overline{\alpha}, \beta)$, $\alpha, \beta \in \Lambda(\Omega)$, represents the canonical scalar product on the cotangent bundle induced by the Riemannian metric.
    We shall now concentrate on the definition of the boundary conditions that make these operators self-adjoint, therefore giving rise to well defined quantum dynamics.

\subsection{Self-adjoint extensions and sesquilinear forms}
\label{subsec:sa_extensions_quad_forms}

  One of the goals of the research presented in this work is to characterise the controllability of quantum systems using as space of controls the boundary conditions, i.e., the set of self-adjoint extensions of the Laplace-Beltrami operator.
  It is therefore necessary to have a convenient parametrisation of such a set.
   In the one-dimensional case, the set of self-adjoint extensions of the Laplacian is in one-to-one correspondence with
  the unitary operators on the Hilbert space of boundary data (i.e., the trace of the function and
  its normal derivative at the boundary).
  The correspondence is given through the boundary
  equation,
  \begin{equation}
    \label{eq:boundary_equation}
    \varphi - i\dot{\varphi} = U(\varphi + i \dot{\varphi}),
  \end{equation}
  that defines the domain of the self-adjoint extension associated with the unitary $U$ \cite{AsoreyIbortMarmo2005, BruningGeylerPankrashkin2008, Kochubei1975}.

  In the case of a general manifold $\Omega$, the statement above is no longer true.
  Instead, there are several characterisations of the self-adjoint extensions of symmetric differential operators based on the boundary data.
  For instance, G.~Grubb \cite{Grubb1968} gave a complete characterisation in terms of pseudo-differential operators acting on the Sobolev spaces over the boundary.
  In general it is not an easy task to connect this characterisation with the boundary data.
  {
  The theory of boundary triples can be generalised to quasi boundary triples, which apply to this case \cite{BehrndtLanger2012}.
  Another generalisation of the boundary triples formalism which exploits the relationship between positive self-adjoint operators and quadratic forms is that of boundary pairs \cite{Post2016}.
  For a discussion on the relationship between these approaches, see \cite[Sections~1.4 and 1.5]{Post2016}.
  Instead, we are following the approach in \cite{IbortLledoPerezPardo2015} which also bases on positive sesquilinear forms, being in that sense closer to that of boundary pairs.
  This approach enables a straightforward characterisation of the form-domains considered in terms of an implicit equation like Eq.~\eqref{eq:boundary_equation}, which allows us to select easily self-adjoint extensions with constant form domain.
  Moreover, this formalism enables us to find a uniform lower bound for our operators that depends directly on the parameters describing the boundary conditions.
  The importance of this will become clear on later sections since both, the constant form-domain and the uniform lower bound, play a crucial role in the existence of solutions for the Schrödinger Equation and the proof of Theorem~\ref{lemma:magnetic_family_formlinear}.
  }%
  Let us introduce some definitions before summarising the characterisation result.

  \begin{definition}
    A unitary operator $U: L^2(\partial \Omega) \to L^2(\partial \Omega)$ is said to have \emph{gap at $-1$} if
    either $U + \mathbb{I}$ is invertible or $-1$ is in the spectrum of $U$ but it is not an accumulation
    point of $\sigma(U)$.
  \end{definition}

  Consider a unitary $U$ with gap at $-1$.
  Let $P$ denote the orthogonal projector onto the eigenspace associated with $-1$ and $P^\perp = \mathbb{I} - P$.
  The \emph{partial Cayley transform} of $U$ is the linear operator on
  $L^2(\partial \Omega)$ defined by
  \begin{equation*}
    \mathcal{C}_U \varphi \coloneqq i P^\perp \frac{U - \mathbb{I}}{U + \mathbb{I}} \varphi.
  \end{equation*}
  Using the spectral resolution of $U$, it can be shown that $\mathcal{C}_U$ is a bounded, self-adjoint
  operator on $L^2(\partial \Omega)$, cf.\ \cite[Prop.~3.11]{IbortLledoPerezPardo2015}.

  \begin{definition} \label{def:admissible_unitary}
    A unitary $U$ with gap at $-1$ is said to be \emph{admissible} if its partial Cayley transform,
    $\mathcal{C}_U$, leaves invariant the fractional Sobolev space $H^{1/2}(\partial \Omega)$ and it is
    continuous on it, i.e.,
    \begin{equation*}
      \norm{\mathcal{C}_U \varphi}_{H^{1/2}(\partial \Omega)} \leq K \norm{\varphi}_{H^{1/2}(\partial \Omega)},
      \qquad \varphi \in H^{1/2}(\partial \Omega).
    \end{equation*}
  \end{definition}

This implies that the projection $P$ is continuous on $H^{1/2}(\partial \Omega)$ and therefore $P\circ\gamma:H^1\to H^{1/2}(\partial \Omega)$ is also continuous.
  We can now restrict our attention to admissible unitary operators and
  associate to them a sesquilinear form.
  Eventually, this sesquilinear form will allow us to associate a
  self-adjoint extension of the Laplace-Beltrami operator to each of such unitaries.

  \begin{definition}
    \label{def:associated_form}
    Let $U$ be an admissible unitary operator and let $A \in \Lambda(\Omega)$ be a magnetic potential.
    The \emph{sesquilinear form associated with $U$ and $A$} is defined by
    \begin{equation*}
      Q_{A,U}(\Phi, \Psi) = \langle \d_A\Phi, \d_A\Psi \rangle
      - \langle \varphi, \mathcal{C}_U \psi \rangle, \qquad
      \dom Q_{A,U} = \{\Phi \in H^1(\Omega) \mid P \varphi = 0\},
    \end{equation*}
    where $P$ is the projector onto the eigenspace associated with the eigenvalue $-1$ of $U$ and $\varphi = \gamma(\Phi)$, $\psi = \gamma(\Psi)$.
  \end{definition}

  Notice that the unitary $U=-\mathbb{I}$ corresponds to Dirichlet boundary conditions, a fact that will be used later.
  Whenever $A$ is the zero magnetic potential, denoted $\mathcal{O}$, we will shorten the notation and denote by $Q_U := Q_{\mathcal{O}, U}$ the sesquilinear form
  associated with the admissible unitary $U$.
  For such a case, i.e., $A = \mathcal{O}$, it is proven in \cite[Thm.~4.9]{IbortLledoPerezPardo2015} that forms constructed this way are semibounded from below, that is, there is $m>0$ depending only on $\Omega$ and $\|\mathcal{C}_U\|$ such that $Q_{U}(\Phi, \Phi) \geq -m \norm{\Phi}^2$.

  We will show next that this is also the case for non-vanishing magnetic potentials.
  Before we proceed, let us show the following technical lemma.
  \begin{lemma} \label{lemma:dphi_Q_bounded}
    For any unitary $U$ with gap at $-1$, there is a constant $C$ depending only on $\Omega$ and $\|\mathcal{C}_U\|$ such that $C$ is linear on $\|\mathcal{C}_U\|$ and the associated sesquilinear form $Q_U$ satisfies the inequality
    \begin{equation*}
      \|\d\Phi\|^2 \leq 2 Q_U(\Phi,\Phi) + (C + 1) \|\Phi\|^2.
    \end{equation*}
  \end{lemma}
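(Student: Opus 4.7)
The plan is to read Definition~\ref{def:associated_form} with $A = \mathcal{O}$ and $\Psi = \Phi$ as the exact identity
\begin{equation*}
\|\d\Phi\|^2 = Q_U(\Phi,\Phi) + \langle \varphi, \mathcal{C}_U \varphi \rangle_{L^2(\partial\Omega)},
\end{equation*}
so the whole lemma reduces to controlling the boundary pairing on the right by a small fraction of $\|\d\Phi\|^2$ plus a multiple of $\|\Phi\|^2$. I would first bound that pairing by $\|\mathcal{C}_U\|\,\|\varphi\|_{L^2(\partial\Omega)}^2$, using the fact (recalled just before Definition~\ref{def:admissible_unitary}) that $\mathcal{C}_U$ is a bounded self-adjoint operator on $L^2(\partial\Omega)$ whose spectrum lies in $[-\|\mathcal{C}_U\|,\|\mathcal{C}_U\|]$.

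The key analytic ingredient is then an interpolation-style trace inequality of the form
\begin{equation*}
\|\varphi\|_{L^2(\partial\Omega)}^2 \leq \epsilon\,\|\d\Phi\|^2 + K_\epsilon\,\|\Phi\|^2, \qquad \epsilon > 0,
\end{equation*}
with $K_\epsilon$ depending only on $\Omega$. This follows from Ehrling's lemma applied to the continuous trace $\gamma\colon H^1(\Omega) \to H^{1/2}(\partial\Omega)$ of Theorem~\ref{thm:lions_trace} composed with the compact embedding $H^{1/2}(\partial\Omega) \hookrightarrow L^2(\partial\Omega)$ on the compact manifold $\Omega$, together with Proposition~\ref{prop:sobolev_norm_equivalence} to rewrite $\|\Phi\|_{H^1}^2$ as $\|\Phi\|^2+\|\d\Phi\|^2$.

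Choosing $\epsilon$ so that $\|\mathcal{C}_U\|\,\epsilon = \tfrac{1}{2}$, combining this with the spectral bound for $\mathcal{C}_U$, substituting into the identity above, absorbing $\tfrac{1}{2}\|\d\Phi\|^2$ to the left and multiplying through by $2$, produces the claimed inequality with $C+1 = 2\|\mathcal{C}_U\|\,K_{1/(2\|\mathcal{C}_U\|)}$.

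The main obstacle I anticipate is the additional assertion that $C$ is \emph{linear} in $\|\mathcal{C}_U\|$. The Ehrling constant $K_\epsilon$ grows as $\epsilon \to 0$, so the strategy above yields a $C$ that is at best polynomial (typically quadratic) in $\|\mathcal{C}_U\|$. Achieving genuine linearity would require replacing the Ehrling step by a sharper Gagliardo--Nirenberg-type trace inequality, e.g.\ $\|\varphi\|_{L^2(\partial\Omega)}^2 \leq C_0\bigl(\|\d\Phi\|\,\|\Phi\| + \|\Phi\|^2\bigr)$, and then splitting the cross term with a Young inequality tuned precisely so that exactly $\tfrac12\|\d\Phi\|^2$ is absorbed while the coefficient multiplying $\|\Phi\|^2$ stays linear in $\|\mathcal{C}_U\|$; tracking the constants through that delicate balancing is the subtle part of the argument.
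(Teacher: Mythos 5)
Your argument is essentially identical to the paper's: the authors also bound $|\langle\varphi,\mathcal{C}_U\varphi\rangle|\leq\|\mathcal{C}_U\|\,\|\varphi\|^2\leq\tfrac12\|\Phi\|_1^2+\tfrac{C}{2}\|\Phi\|^2$ via the $\epsilon$-trace interpolation inequality (they cite Grisvard, Theorem 1.5.1.10, which is exactly the estimate you derive from Ehrling's lemma), then substitute into $Q_U(\Phi,\Phi)=\|\d\Phi\|^2-\langle\varphi,\mathcal{C}_U\varphi\rangle$ and rearrange. Your concern about the claimed linearity of $C$ in $\|\mathcal{C}_U\|$ is well taken, but it applies equally to the paper's own proof, which simply writes ``for some constant $C$'' without tracking that dependence; fortunately, only boundedness of $C$ for bounded $\|\mathcal{C}_U\|$ is used in the subsequent results (Proposition~\ref{prop:graph_norm_magnetic}, Theorem~\ref{thm:quadratic_form_H1bounded}, Corollary~\ref{corol:uniform_semibound_equivalence_formnorms}).
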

  \begin{proof}
    By the Cauchy-Schwarz inequality and \cite[Theorem 1.5.1.10]{Grisvard1985}, we have
    \begin{equation*}
      |\langle \varphi, \mathcal{C}_U \varphi \rangle| \leq \norm{\mathcal{C}_U} \norm{\varphi}^2
      \leq \frac{1}{2} \norm{\Phi}_1^2 + \frac{C}{2} \norm{\Phi}^2,
    \end{equation*}
    for some constant $C$.
    
    Substituting into the definition of $Q_U$ we get that
    \begin{equation*}
      Q_U(\Phi, \Phi) = \norm{\d\Phi}^2 - \langle \varphi, \mathcal{C}_U \varphi \rangle
      \geq \frac{1}{2} \norm{\d\Phi}^2 - \frac{1}{2}(C + 1) \norm{\Phi}^2,
    \end{equation*}
    from which the result follows.
  \end{proof}

  \begin{proposition}\label{prop:graph_norm_magnetic}
    For any magnetic potential $A \in \Lambda(\Omega)$ and any admissible unitary operator $U$ with gap at $-1$ the following holds:
    \begin{enumerate}[label=\textit{(\roman*)},nosep, leftmargin=*]
      \item \label{prop:graph_norm_magnetic_1}
        $Q_{A,U}$ is semibounded from below.
      \item \label{prop:graph_norm_magnetic_2}
        The graph norm $\norm{\argdot}_{Q_{A,U}}$ is equivalent to $\norm{\argdot}_{Q_U}$.
    \end{enumerate}
  \end{proposition}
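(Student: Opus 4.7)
The cornerstone of the proof is the direct identity
\begin{equation*}
  Q_{A,U}(\Phi,\Phi) - Q_U(\Phi,\Phi) = \norm{\Phi A}^2 - 2\,\Im \scalar{\Phi A}{\d \Phi},
\end{equation*}
obtained by expanding $\norm{\d_A \Phi}^2 = \norm{\d\Phi + i\Phi A}^2$ and noting that the boundary term $-\scalar{\varphi}{\mathcal{C}_U\varphi}$ is common to both forms. Since $\Omega$ is compact and $A\in\Lambda(\Omega)$ is smooth, one has $\norm{\Phi A} \leq \norm{A}_\infty \norm{\Phi}$, so the right-hand side is controlled entirely by bulk quantities.

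The plan is to apply Young's inequality, for any $\varepsilon>0$, to the cross term:
\begin{equation*}
  2\,|\scalar{\Phi A}{\d\Phi}| \leq \varepsilon \norm{\d\Phi}^2 + \varepsilon^{-1}\norm{A}_\infty^2 \norm{\Phi}^2,
\end{equation*}
and then to invoke Lemma~\ref{lemma:dphi_Q_bounded} to replace $\norm{\d\Phi}^2$ by a multiple of $Q_U(\Phi,\Phi)$ plus a multiple of $\norm{\Phi}^2$. Combining these two estimates I expect to obtain constants $D_\varepsilon, D_\varepsilon' > 0$, depending only on $\varepsilon$, $\norm{A}_\infty$, $\norm{\mathcal{C}_U}$ and $\Omega$, such that
\begin{equation*}
  (1-2\varepsilon)\, Q_U(\Phi,\Phi) - D_\varepsilon \norm{\Phi}^2
  \;\leq\; Q_{A,U}(\Phi,\Phi) \;\leq\;
  (1+2\varepsilon)\, Q_U(\Phi,\Phi) + D_\varepsilon' \norm{\Phi}^2.
\end{equation*}
This two-sided sandwich is the engine of the whole proposition.

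Part \textit{(i)} is immediate: fix any $\varepsilon < 1/2$ and use the known semibound $Q_U(\Phi,\Phi) \geq -m_U\norm{\Phi}^2$ of \cite[Thm.~4.9]{IbortLledoPerezPardo2015} in the left inequality, yielding $Q_{A,U}(\Phi,\Phi) \geq -[(1-2\varepsilon)m_U + D_\varepsilon]\norm{\Phi}^2$. For part \textit{(ii)}, I add $(1+m_{A,U})\norm{\Phi}^2$ to the sandwich and use the elementary observation that, by semiboundedness, the graph norms dominate $\norm{\argdot}$, i.e.\ $\norm{\Phi}^2 \leq \norm{\Phi}_{Q_U}^2$ (choosing $m_U$ large enough). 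Reassembling the terms gives constants $c_1, c_2 > 0$ with $c_1 \norm{\Phi}_{Q_U}^2 \leq \norm{\Phi}_{Q_{A,U}}^2 \leq c_2 \norm{\Phi}_{Q_U}^2$; equivalent graph norms are insensitive to the specific value of the semibound used in Definition~\ref{def:graph-norm}, so this suffices.

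The main technical point is the juggling of constants in the Young step so that the coefficient of $Q_U$ remains strictly positive after absorbing $\norm{\d\Phi}^2$ via Lemma~\ref{lemma:dphi_Q_bounded}; once $\varepsilon < 1/2$ is fixed, everything else is bookkeeping. No additional input about the boundary structure or about the projection $P$ is required beyond what is already encoded in the admissibility of $U$ and the semiboundedness of $Q_U$.
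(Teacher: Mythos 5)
Your proposal is correct and follows essentially the same route as the paper: expand $\norm{\d_A\Phi}^2$, control the cross term $2\,\Im\scalar{\Phi A}{\d\Phi}$ by Cauchy--Schwarz and Young's inequality, and absorb the resulting $\norm{\d\Phi}^2$ into $Q_U$ via Lemma~\ref{lemma:dphi_Q_bounded}. The only cosmetic differences are the placement of the Young parameter (the paper weights $\norm{\Phi A}^2$ with $\varepsilon=4$, you weight $\norm{\d\Phi}^2$ with $\varepsilon<1/2$) and that you package both directions of the norm equivalence into a single two-sided sandwich, whereas the paper estimates $\norm{\Phi}_{Q_{A,U}}^2$ from above by a slightly different direct computation.
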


  \begin{proof}
    By definition one has
    \begin{equation*}
      \langle \d_A \Phi, \d_A \Phi \rangle = \|\d\Phi\|^2 + \|\Phi A\|^2 + 2 \Im \langle \d\Phi, \Phi A \rangle.
    \end{equation*}
    Now, 
    \begin{alignat*}{2}
      Q_{A,U}(\Phi, \Phi) &= \langle \d_A \Phi, \d_A \Phi \rangle - \langle \varphi, \mathcal{C}_U \varphi \rangle \\
      &\geq Q_U(\Phi, \Phi) + \|\Phi A\|^2 - 2 |\langle \d\Phi, \Phi A \rangle|.
    \end{alignat*}
    By the Cauchy-Schwarz inequality and Young inequality with $\varepsilon$ we have
    \begin{equation*}
      Q_{A,U}(\Phi, \Phi) \geq Q_U(\Phi, \Phi) + (1 - \varepsilon) \|\Phi A\|^2 - \frac{1}{\varepsilon}\|\d\Phi\|^2
    \end{equation*}
    for every $\varepsilon>0$.
    Taking $\varepsilon=4$ and applying Lemma~\ref{lemma:dphi_Q_bounded} it follows
    \begin{equation*}
      Q_{A,U}(\Phi,\Phi) \geq \frac{1}{2} Q_U(\Phi,\Phi) - \left(3 \|A\|^2 + \frac{C+1}{4}\right) \|\Phi\|^2.
    \end{equation*}
    The lower semibound of $Q_{A,U}$ follows now from that of $Q_U$.
    
    Let us now show the equivalence \ref{prop:graph_norm_magnetic_2}.
    On the one hand, from the previous inequality it follows immediately that for some $K>0$
    \begin{equation*}
      \norm{\Phi}_{Q_U} \leq K\norm{\Phi}_{Q_{A,U}}.
    \end{equation*}
    On the other hand, let $m>0$ be the lower bound of $Q_{A,U}$ and $\alpha := \sqrt{\sup_{\Omega} \eta^{-1}(A,A)}$.
    Then,
    \begin{equation*}
      \begin{alignedat}{2}
        \norm{\Phi}_{Q_{A,U}}^2
        &= (m + 1) \norm{\Phi}^2 + \|\d\Phi\|^2 + \|\Phi A\|^2 + 2 \Im \langle \d\Phi, \Phi A \rangle - \langle \varphi, \mathcal{C}_U \varphi \rangle\\
        &\leq (m+\alpha^2+1) \norm{\Phi}^2 + \norm{\d\Phi}^2 + 2 \alpha \|\d\Phi\| \|\Phi\| - \langle\varphi, \mathcal{C}_U \varphi\rangle\\
        &= (m+\alpha^2+1) \norm{\Phi}^2 + Q_U(\Phi, \Phi) + 2 \alpha \|\d\Phi\|\|\Phi\| \\
        &\leq K \|\Phi\|_{Q_U}^2
      \end{alignedat}
    \end{equation*}
    where $K = \max\{m + \alpha^2 + 1, 2\alpha\}$, since all the addends are proportionally smaller than $\|\Phi\|_{Q_U}^2$.
  \end{proof}

  \begin{theorem} \label{thm:quadratic_form_H1bounded}
    For any magnetic potential $A$, the sesquilinear form associated with an admissible unitary with
    gap at $-1$ is closed, i.e., $\dom Q_{A,U}$ is closed with respect to the graph norm,
    $\norm{\argdot}_{Q_{A,U}}$.
  \end{theorem}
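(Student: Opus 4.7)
The plan is to reduce the statement to the case $A = \mathcal{O}$ and then identify the graph norm with the standard Sobolev norm on the subspace cut out by $P\varphi = 0$, which is manifestly closed in $H^1(\Omega)$.

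First I would invoke Proposition~\ref{prop:graph_norm_magnetic}\ref{prop:graph_norm_magnetic_2}, which asserts that $\norm{\argdot}_{Q_{A,U}}$ and $\norm{\argdot}_{Q_U}$ are equivalent on the common domain $\dom Q_{A,U} = \dom Q_U = \{\Phi \in H^1(\Omega) \mid P \varphi = 0\}$. Consequently, it suffices to prove that this domain is closed with respect to $\norm{\argdot}_{Q_U}$.

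Next I would establish that on $\dom Q_U$ the graph norm $\norm{\argdot}_{Q_U}$ is equivalent to the Sobolev norm $\norm{\argdot}_1$. The bound $\norm{\Phi}_1 \leq K \norm{\Phi}_{Q_U}$ follows directly from Lemma~\ref{lemma:dphi_Q_bounded}, combined with Proposition~\ref{prop:sobolev_norm_equivalence} (which says $\norm{\Phi}_1^2$ is equivalent to $\norm{\Phi}^2 + \norm{\d\Phi}^2$) and with the lower semiboundedness of $Q_U$. For the reverse inequality, I would estimate the boundary term by the Cauchy--Schwarz inequality, the boundedness of $\mathcal{C}_U$ on $L^2(\partial\Omega)$ and the trace inequality $\norm{\varphi}_{L^2(\partial\Omega)} \leq K' \norm{\Phi}_1$ from Theorem~\ref{thm:lions_trace} (or more precisely the continuity of $\gamma:H^1(\Omega) \to H^{1/2}(\partial\Omega) \hookrightarrow L^2(\partial\Omega)$), obtaining
\begin{equation*}
\norm{\Phi}_{Q_U}^2 \leq \norm{\d\Phi}^2 + \norm{\mathcal{C}_U}\, \norm{\varphi}_{L^2(\partial\Omega)}^2 + (m+1)\norm{\Phi}^2 \leq K''\norm{\Phi}_1^2.
\end{equation*}

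Finally, to conclude that $\dom Q_U$ is closed in the $Q_U$-graph norm, it is enough, by the equivalence just established, to show that it is closed in $H^1(\Omega)$. But this follows from the observation recorded after Definition~\ref{def:admissible_unitary}: since $U$ is admissible, the orthogonal projector $P$ is continuous on $H^{1/2}(\partial\Omega)$, and therefore the map $P \circ \gamma \colon H^1(\Omega) \to H^{1/2}(\partial\Omega)$ is continuous. Its kernel $\{\Phi \in H^1(\Omega) \mid P \varphi = 0\}$ is then automatically a closed subspace of $H^1(\Omega)$, completing the argument.

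The main obstacle is the comparison of $\norm{\argdot}_{Q_U}$ with $\norm{\argdot}_1$: controlling the boundary term $\langle \varphi, \mathcal{C}_U \varphi\rangle$ by the $H^1$-norm requires the trace theorem and the fact that $\mathcal{C}_U$ is bounded on $L^2(\partial\Omega)$, whereas bounding $\norm{\d\Phi}^2$ from above by $Q_U(\Phi,\Phi)$ up to a multiple of $\norm{\Phi}^2$ is exactly the content of Lemma~\ref{lemma:dphi_Q_bounded}; admissibility of $U$ (needed only for the closure of the domain, not for the equivalence of norms) enters cleanly through the continuity of $P\circ\gamma$.
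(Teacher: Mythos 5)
Your proposal is correct and follows essentially the same route as the paper: reduction to $A=\mathcal{O}$ via Proposition~\ref{prop:graph_norm_magnetic}, equivalence of $\norm{\argdot}_{Q_U}$ with $\norm{\argdot}_1$ using the trace estimate for the boundary term in one direction and Lemma~\ref{lemma:dphi_Q_bounded} together with Proposition~\ref{prop:sobolev_norm_equivalence} in the other, and closedness of the domain in $H^1(\Omega)$ from admissibility. Your only addition is to spell out why admissibility yields closedness (continuity of $P\circ\gamma$), which the paper merely asserts.
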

  \begin{proof}
    The admissibility condition guarantees that $\dom Q_{A,U}$ is closed in $H^1(\Omega)$, thus it suffices to show that $\norm{\argdot}_{Q_{A,U}}$ is equivalent to $\norm{\argdot}_1$.
    By Proposition~\ref{prop:graph_norm_magnetic} it is enough to show the equivalence for $A$ equal to the zero magnetic potential.
    
    First we have the following inequalities:
    \begin{equation*}
      |\langle \varphi, \mathcal{C}_U \varphi \rangle| \leq \norm{\varphi} \norm{\mathcal{C}_U \varphi}
      \leq \norm{\mathcal{C}_U} \norm{\varphi}^2
      \leq C \norm{\mathcal{C}_U} \norm{\varphi}_{\frac{1}{2}}^2
      \leq C' \norm{\mathcal{C}_U} \norm{\Phi}^2_1,
    \end{equation*}
    where we have used the Sobolev inclusions and the continuity of the trace map.

    Let $m$ be the lower bound of $Q_U$.
    Substituting the previous inequality into the definition of the graph norm, one gets
    \begin{equation*}
      \begin{alignedat}{2}
        \norm{\Phi}_{Q_U}^2 &= \norm{\d \Phi}^2 + (1 + m) \norm{\Phi}^2 -
        \langle \varphi, \mathcal{C}_U \varphi \rangle \\
        &\leq \norm{\d \Phi}^2 + (1 + m) \norm{\Phi}^2 + |\langle \varphi, \mathcal{C}_U \varphi \rangle| \\
        &\leq K^2 \norm{\Phi}_1^2
      \end{alignedat}
    \end{equation*}

    Let us prove now the reverse inequality.
    By Lemma~\ref{lemma:dphi_Q_bounded} we have
    \begin{equation*}
      Q_U(\Phi, \Phi) \geq \frac{1}{2} \norm{\d\Phi}^2 - \frac{1}{2}(C + 1) \norm{\Phi}^2,
    \end{equation*}
from which it follows
    \begin{equation*}
      Q_U(\Phi, \Phi) + \frac{1}{2}(C + 2) \norm{\Phi}^2 \geq \frac{1}{2} \norm{\Phi}_1^2,
    \end{equation*}
which completes the proof.
  \end{proof}

  The previous results lead to the following corollary:
  \begin{corollary} \label{corol:uniform_semibound_equivalence_formnorms}
    Let $\mathcal{A}$ be a family of magnetic potentials and $\mathcal{U}$ a family of admissible unitaries such that the associated forms have domain $\mathcal{H}^+$ and
    \begin{equation*}
      \sup_{U \in \mathcal{U}} \|\mathcal{C}_U\| < \infty
      \quad\text{and}\quad
      \sup_{A \in \mathcal{A}} \|A\|_{\infty} < \infty.
    \end{equation*}
    Then:
    \begin{enumerate}[label=\textit{(\roman*)},nosep,leftmargin=*]
      \item There exists $m>0$ such that for every $A \in \mathcal{A}$ and every $U \in \mathcal{U}$ we have $Q_{A,U}(\Phi, \Phi) \geq -m \|\Phi\|^2$ for every $\Phi\in\H^+$.
      \item Let $A_0\in\mathcal{A}$ and $U_0\in\mathcal{U}$.
        There is a constant $K>0$ such that for every $A \in \mathcal{A}$ and $U \in \mathcal{U}$ we have
        \begin{equation*}
          K^{-1} \|\Phi\|_{Q_{A,U}} \leq \|\Phi\|_{Q_{A_0,U_0}} \leq K \|\Phi\|_{Q_{A,U}}, \qquad \forall\Phi \in \mathcal{H}^+ 
        \end{equation*}
        for every $U \in \mathcal{U}$ and every $A \in \mathcal{A}$.
    \end{enumerate}
  \end{corollary}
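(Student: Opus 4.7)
The plan is to revisit the proofs of Proposition~\ref{prop:graph_norm_magnetic} and Theorem~\ref{thm:quadratic_form_H1bounded} keeping track of how the constants depend on $\|\mathcal{C}_U\|$ and $\|A\|_\infty$, and to show that under the hypothesis $\sup_{U\in\mathcal{U}}\|\mathcal{C}_U\|<\infty$ and $\sup_{A\in\mathcal{A}}\|A\|_\infty<\infty$ the relevant constants can be chosen independently of $A\in\mathcal{A}$ and $U\in\mathcal{U}$.

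For part \textit{(i)}, first I would apply the result cited after Definition~\ref{def:associated_form} (i.e., \cite[Thm.\ 4.9]{IbortLledoPerezPardo2015}) which gives $Q_U(\Phi,\Phi)\geq -m_U\|\Phi\|^2$ with $m_U$ depending only on $\Omega$ and $\|\mathcal{C}_U\|$. Since $\|\mathcal{C}_U\|$ is uniformly bounded, $m_U$ can be replaced by a single $m_0$ valid for all $U\in\mathcal{U}$. Next, I would reuse the chain of inequalities from the proof of Proposition~\ref{prop:graph_norm_magnetic}\ref{prop:graph_norm_magnetic_1}, namely
\begin{equation*}
Q_{A,U}(\Phi,\Phi)\geq \tfrac{1}{2}Q_U(\Phi,\Phi)-\bigl(3\|A\|_\infty^2+\tfrac{C+1}{4}\bigr)\|\Phi\|^2,
\end{equation*}
where $C$ comes from Lemma~\ref{lemma:dphi_Q_bounded} and is linear in $\|\mathcal{C}_U\|$, hence uniformly bounded. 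Setting $m:=\tfrac{1}{2}m_0+3\sup_A\|A\|_\infty^2+\tfrac{C+1}{4}$ yields the desired uniform semibound.

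For part \textit{(ii)}, the strategy is to go through the $H^1$ norm. By Theorem~\ref{thm:quadratic_form_H1bounded}, $\|\cdot\|_{Q_{A,U}}$ is equivalent to $\|\cdot\|_1$; the equivalence constants obtained there are composed of $\|\mathcal{C}_U\|$ (from the Sobolev/trace estimate $|\langle\varphi,\mathcal{C}_U\varphi\rangle|\leq C'\|\mathcal{C}_U\|\|\Phi\|_1^2$), the constant $C$ of Lemma~\ref{lemma:dphi_Q_bounded} (linear in $\|\mathcal{C}_U\|$), the uniform semibound $m$ produced in \textit{(i)}, and the sup-norm $\|A\|_\infty$ that enters through the factor $\alpha=\sqrt{\sup_\Omega \eta^{-1}(A,A)}\leq\|A\|_\infty$ appearing in the proof of Proposition~\ref{prop:graph_norm_magnetic}\ref{prop:graph_norm_magnetic_2}. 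All of these are uniformly controlled by hypothesis, so there exists a single $K_1>0$ such that
\begin{equation*}
K_1^{-1}\|\Phi\|_1\leq \|\Phi\|_{Q_{A,U}}\leq K_1\|\Phi\|_1,\qquad \forall\,A\in\mathcal{A},\;U\in\mathcal{U},\;\Phi\in\mathcal{H}^+.
\end{equation*}
The same estimate in particular applies to the pair $(A_0,U_0)$, and combining the two inequalities through the $H^1$ norm gives the claimed uniform two-sided bound with $K=K_1^2$.

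The main obstacle is purely bookkeeping: one must verify that every constant appearing in the earlier proofs factors through either $\|\mathcal{C}_U\|$, $\|A\|_\infty$, or quantities depending only on $\Omega$. No new analytic ideas are needed; the key observation is that Lemma~\ref{lemma:dphi_Q_bounded}'s constant is \emph{linear} in $\|\mathcal{C}_U\|$ (as explicitly stated in its conclusion), which is exactly what permits the uniformisation of the estimates above.
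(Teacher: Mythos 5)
Your proposal is correct and follows essentially the same route the paper intends: the corollary is stated without an explicit proof precisely because it is meant to follow from Lemma~\ref{lemma:dphi_Q_bounded}, Proposition~\ref{prop:graph_norm_magnetic} and Theorem~\ref{thm:quadratic_form_H1bounded} by observing that every constant there depends only on $\Omega$, $\|\mathcal{C}_U\|$ and $\|A\|_\infty$, all uniformly bounded by hypothesis. Your bookkeeping of those constants (including the linearity of the constant in Lemma~\ref{lemma:dphi_Q_bounded} and the passage through the $H^1$ norm for part \textit{(ii)}) matches the intended argument.
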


  We have established that, for any admissible unitary operator and a magnetic potential satisfying the conditions above, the sesquilinear form associated with them is closed and semibounded from below.
  We will apply the representation theorem, cf.\ \cite[Sec.~VI.2]{Kato1995}, to define the self-adjoint magnetic Laplacian.

  \begin{definition}\label{def:magnetic-laplacian}
    Let $\Omega$ be a compact, Riemannian manifold with boundary and let $U\colon L^2(\partial\Omega)\to L^2(\partial\Omega)$ be an admissible unitary operator.
    Let $A\in\Lambda(\Omega)$.
    The \emph{magnetic Laplacian} operator associated with the unitary $U$ and the magnetic potential $A$, denoted by $\Delta_{A,U}$, with domain $\dom \Delta_{A,U}$, is the unique self-adjoint operator associated with the closed, semibounded sesquilinear form $Q_{A,U}$.
    In the case in which the magnetic potential $A$ is identically zero, we will refer to this operator as the \emph{Laplace-Beltrami} operator and denote it and its domain respectively by $\Delta_U$ and $\dom \Delta_U$.
  \end{definition}
  
  For the Laplace-Beltrami operator associated with the unitary $U$, one can provide the following characterisation of its domain.
  \begin{theorem}
    \label{thm:sa_extensions_laplacian}
    Let $\Delta_U$, densely defined on $\dom \Delta_U$, be the Laplace-Beltrami operator associated with the unitary $U$.
    Then 
    \begin{equation*}
      H^2(\Omega) \cap \dom \Delta_U = \{\Phi \in H^2(\Omega) \mid \varphi - i\dot{\varphi} = U(\varphi + i \dot{\varphi})\}.
    \end{equation*}
  \end{theorem}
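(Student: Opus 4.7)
The plan is to combine Kato's representation theorem (Theorem~\ref{thm:repKato}) with Green's identity on $\Omega$, and then translate the resulting boundary identity into the unitary boundary equation. Throughout, let $\Phi \in H^2(\Omega)$ with trace data $\varphi = \gamma(\Phi)$, $\dot\varphi = \gamma(\d\Phi(\nu))$, and recall that by Theorem~\ref{thm:quadratic_form_H1bounded} the form domain is $\dom Q_U = \{\Psi\in H^1(\Omega) \mid P\psi = 0\}$.

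First I would establish the key identity. Applying Green's formula, for any $\Psi \in H^1(\Omega)$ and $\Phi \in H^2(\Omega)$,
\begin{equation*}
  \langle \d\Psi, \d\Phi\rangle = \langle \Psi, \Delta\Phi\rangle + \langle \psi, \dot\varphi\rangle_{L^2(\partial\Omega)}.
\end{equation*}
Substituting into Definition~\ref{def:associated_form} gives, for $\Phi \in H^2(\Omega) \cap \dom Q_U$ and $\Psi \in \dom Q_U$,
\begin{equation*}
  Q_U(\Psi, \Phi) = \langle \Psi, \Delta\Phi\rangle + \langle \psi, \dot\varphi - \mathcal{C}_U\varphi\rangle_{L^2(\partial\Omega)}.
\end{equation*}

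Now I would prove the forward inclusion ($\subseteq$). If $\Phi \in H^2(\Omega) \cap \dom\Delta_U$, then by Theorem~\ref{thm:repKato} there exists $\chi \in L^2(\Omega)$ with $Q_U(\Psi,\Phi) = \langle \Psi, \chi\rangle$ for every $\Psi \in \dom Q_U$. Testing against $\Psi \in C^\infty_c(\Omega^\circ) \subset \dom Q_U$ kills the boundary term and forces $\chi = \Delta\Phi$ in the distributional sense. Hence for every $\Psi \in \dom Q_U$ the boundary pairing $\langle \psi, \dot\varphi - \mathcal{C}_U\varphi\rangle_{L^2(\partial\Omega)}$ must vanish. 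By Theorem~\ref{thm:lions_trace} the trace map $\gamma: H^1(\Omega) \to H^{1/2}(\partial\Omega)$ is surjective, and combined with the continuity of $P$ on $H^{1/2}(\partial\Omega)$ coming from admissibility, the set $\gamma(\dom Q_U)$ equals $P^\perp H^{1/2}(\partial\Omega)$, which is dense in $P^\perp L^2(\partial\Omega)$. Since $\mathcal{C}_U\varphi \in P^\perp L^2(\partial\Omega)$ by construction, the vanishing of the pairing yields $P^\perp\dot\varphi = \mathcal{C}_U\varphi$. Together with $P\varphi = 0$ (which is built into $\Phi \in \dom Q_U$), this gives the boundary equation: decomposing $\varphi$ and $\dot\varphi$ into $P$ and $P^\perp$ parts and using $\mathcal{C}_U\varphi = iP^\perp\tfrac{U-I}{U+I}\varphi$, a direct computation shows that $\varphi - i\dot\varphi = U(\varphi + i\dot\varphi)$ is equivalent to the pair of conditions $P\varphi = 0$ and $(I-U)P^\perp\varphi = i(I+U)P^\perp\dot\varphi$; on the range of $P^\perp$ the operator $(I+U)$ is boundedly invertible by the gap-at-$-1$ property, so this second condition is precisely $P^\perp\dot\varphi = \mathcal{C}_U\varphi$.

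For the reverse inclusion ($\supseteq$), I would run the same algebraic equivalence backwards: if $\Phi \in H^2(\Omega)$ satisfies $\varphi - i\dot\varphi = U(\varphi + i\dot\varphi)$, projecting with $P$ (on which $U$ acts as $-I$) gives $P\varphi = 0$, so $\Phi \in \dom Q_U$, and projecting with $P^\perp$ gives $P^\perp\dot\varphi = \mathcal{C}_U\varphi$. Plugging these into the Green identity shows $Q_U(\Psi, \Phi) = \langle \Psi, \Delta\Phi\rangle$ for every $\Psi \in \dom Q_U$, so Theorem~\ref{thm:repKato} places $\Phi$ in $\dom\Delta_U$ with $\Delta_U\Phi = \Delta\Phi$.

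The main obstacle is the density step identifying $\gamma(\dom Q_U)$ inside $P^\perp L^2(\partial\Omega)$: one must be careful that admissibility of $U$ is what allows $P$ to act continuously on $H^{1/2}(\partial\Omega)$, so that the image $\gamma(\{\Psi \in H^1(\Omega) : P\gamma\Psi = 0\})$ really is $P^\perp H^{1/2}(\partial\Omega)$ and hence dense in $P^\perp L^2$. All other steps are algebraic manipulations with the spectral decomposition of $U$ with respect to the $-1$ eigenspace.
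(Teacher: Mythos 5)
Your proof follows essentially the same route as the paper's: Green's identity turns $Q_U(\Psi,\Phi)$ into $\langle\Psi,\Delta\Phi\rangle$ plus the boundary pairing $\langle\psi,\dot\varphi-\mathcal{C}_U\varphi\rangle$, interior test functions identify $\Delta_U\Phi=\Delta\Phi$, the vanishing boundary term yields $P^\perp\dot\varphi=\mathcal{C}_U\varphi$, and the $P$/$P^\perp$ decomposition converts this pair of conditions into the boundary equation. The only differences are cosmetic: you make explicit the density of $\gamma(\dom Q_U)$ in $P^\perp L^2(\partial\Omega)$ and spell out the reverse inclusion via the representation theorem, both of which the paper leaves implicit.
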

  \begin{proof}
    Let $\Phi \in H^2(\Omega) \cap \dom \Delta_U$ and $\Psi\in\dom{Q_U}$.
    Integrating by parts in the definition of $Q_U$, cf.\ Definition~\ref{def:associated_form}, yields 
    \begin{equation*}
      Q_U(\Psi, \Phi) = \langle \d\Psi, \d\Phi \rangle - \langle \psi, \mathcal{C}_U\varphi \rangle
      = \langle \Psi, \Delta\Phi \rangle + \langle \psi, \dot{\varphi} - \mathcal{C}_U\varphi \rangle.
    \end{equation*}
    From this it follows that
    \begin{equation}
      \label{eq:Delta_U_dom}
      \langle \Psi, \Delta_U \Phi - \Delta \Phi \rangle
      = \langle \psi, \dot{\varphi} - \mathcal{C}_U \varphi \rangle.
    \end{equation}
    Thus, for every $\Psi \in H^1_0(\Omega) \cap H^2(\Omega) \subset \mathcal{D}(Q_U)$, we have
    \begin{equation*}
      \langle \Psi, \Delta_U \Phi - \Delta \Phi \rangle = 0,
    \end{equation*}
    which implies $\Delta_U \Phi = \Delta \Phi$ since $H^1_0(\Omega) \cap H^2(\Omega)$ is dense in
    $L^2(\Omega)$.
    Substituting this in Eq.~\eqref{eq:Delta_U_dom}, it follows
    \begin{equation*}
      \langle \psi, \dot{\varphi} - \mathcal{C}_U \varphi \rangle = 0.
    \end{equation*}
    Since $\Psi \in \mathcal{D}(Q_U)$, $\psi = P^{\bot}\psi$ and the equation above is equivalent to
    \begin{equation*}
      P^\perp \dot{\varphi} = \mathcal{C}_U \varphi.
    \end{equation*}
    Additionally, since $\Phi \in \dom \Delta_U\subset \dom Q_U$, we have $P\varphi = 0$.
       
    Applying the orthogonal projector onto $\ker(U + 1)$, $P$, to the equation $\varphi - i\dot{\varphi} = U(\varphi + i \dot{\varphi})$
    it follows
    \begin{equation*}
      P\varphi - iP\dot{\varphi} = PU (\varphi + i \dot{\varphi}) = -P\varphi - i P\dot{\varphi},
    \end{equation*}
    which is equivalent to $P\varphi = 0$.
    On the other hand, projecting with $P^\perp = \mathbb{I} - P$,
    one gets
    \begin{equation*}
      P^\perp\dot{\varphi} = iP^\perp \frac{U - \mathbb{I}}{U + \mathbb{I}} \varphi = \mathcal{C}_U \varphi.
    \end{equation*}
  \end{proof}

  The previous result motivates the following definition.

  \begin{definition}\label{def:boundary_equation_operator}
    Let $U\colon L^2(\partial\Omega) \to L^2(\partial\Omega)$ be an admissible unitary operator and $A\in\Lambda(\Omega)$ a magnetic potential.
    The \emph{domain associated with the unitary $U$ and the magnetic potential $A$} is the set
    $$\D_{U,A} = \left\{ \Phi\in\H^2(\Omega) \mid \varphi - i\dot{\varphi}_A = U(\varphi + i \dot{\varphi}_A) \right\},$$
    where $\dot{\varphi}_A:=\gamma\left(\mathrm{d}_A\Phi(\nu)\right)$, and $\nu$ is the normal vector field to the boundary $\partial \Omega$ pointing outwards.
    In the case in which the magnetic potential is identically zero, we will drop the subindex and denote it by $\D_U$.
  \end{definition}

  \begin{proposition} \label{prop:magnetic_form_equivalence}
    Let $A \in \Lambda(\Omega)$ be an exact, magnetic potential, i.e.\ $A = \d\Theta$ for some function $\Theta \in C^2(\Omega)$, and denote $\theta = \gamma(\Theta)$.
    Define the unitary operator $J: \Phi \in L^2(\Omega) \mapsto e^{i\Theta}\Phi \in L^2(\Omega)$.
    Then $J(\dom Q_{A,U}) = \dom Q_{e^{i\theta}Ue^{-i\theta}}$ and
    \begin{equation*}
      Q_{A,U}(\Psi, \Phi) = Q_{e^{i\theta} U e^{-i\theta}}(J\Psi, J\Phi).
    \end{equation*}
  \end{proposition}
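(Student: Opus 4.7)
The plan is to establish a gauge intertwining identity between $\d_A$ and $\d$, and then propagate it through all three ingredients of the sesquilinear form (the volume term, the trace, and the partial Cayley transform).

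The first step is the pointwise identity
\begin{equation*}
  \d(J\Phi) = \d(e^{i\Theta}\Phi) = e^{i\Theta}\bigl(\d\Phi + i\Phi\,\d\Theta\bigr) = e^{i\Theta}\,\d_A\Phi = J(\d_A\Phi),
\end{equation*}
valid on $C^\infty(\Omega)$ by Leibniz and the hypothesis $A=\d\Theta$, and extending by density to $H^1(\Omega)$. Since $|e^{i\Theta}|=1$ pointwise, $\eta^{-1}(\overline{J\alpha},J\beta)=\eta^{-1}(\overline\alpha,\beta)$ for 1-forms $\alpha,\beta$, which immediately gives
\begin{equation*}
  \scalar{\d_A\Psi}{\d_A\Phi} = \scalar{\d(J\Psi)}{\d(J\Phi)}.
\end{equation*}
This identity also shows that $J$ maps $H^1(\Omega)$ bijectively onto itself (using that $\Theta\in C^2(\Omega)$ on the compact $\Omega$ makes $e^{\pm i\Theta}$ a bounded multiplier on $H^1$).

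Next I would handle the boundary term. Writing $V$ for multiplication by $e^{i\theta}$ on $L^2(\partial\Omega)$, we have $\gamma(J\Phi)=V\varphi$. Since $V$ is unitary, the identity $VUV^{-1}+I = V(U+I)V^{-1}$ shows that the eigenspace of $VUV^{-1}$ at $-1$ is $V\ker(U+I)$, so that if $P'$ denotes the projector onto that eigenspace then $P'=VPV^{-1}$ and $(P')^\perp = V P^\perp V^{-1}$. From the functional calculus,
\begin{equation*}
  \mathcal{C}_{VUV^{-1}} = i(P')^\perp\frac{VUV^{-1}-I}{VUV^{-1}+I} = V\,\mathcal{C}_U\,V^{-1}.
\end{equation*}
Consequently $\scalar{V\psi}{\mathcal{C}_{VUV^{-1}}V\varphi}=\scalar{\psi}{\mathcal{C}_U\varphi}$, and combining this with the volume term yields the claimed equality $Q_{A,U}(\Psi,\Phi)=Q_{VUV^{-1}}(J\Psi,J\Phi)$.

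For the domain statement, the admissibility of $VUV^{-1}$ is not needed here since we are only comparing form domains. Using $\gamma J = V\gamma$ and $P'=VPV^{-1}$, the boundary constraint $P'(V\varphi)=0$ is equivalent to $VP\varphi=0$, i.e.\ $P\varphi=0$. Together with the already noted bijectivity of $J$ on $H^1(\Omega)$, this gives $J(\dom Q_{A,U})=\dom Q_{VUV^{-1}}$. The main (and essentially only) bookkeeping obstacle is the conjugation identity for the partial Cayley transform; once $\mathcal{C}_{VUV^{-1}}=V\mathcal{C}_U V^{-1}$ is in hand, both assertions follow immediately from the gauge intertwining of $\d_A$ with $\d$.
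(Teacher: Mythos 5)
Your proposal is correct and follows essentially the same route as the paper: the gauge identity $\d(J\Phi)=J\d_A\Phi$, the conjugation of the $-1$-eigenprojector and of the partial Cayley transform by multiplication with $e^{i\theta}$, and unitarity of $J$ and of $e^{i\theta}$ on the boundary. The only point you wave away that the paper records is that $e^{i\theta}Ue^{-i\theta}$ is itself admissible (which follows from $\mathcal{C}_{e^{i\theta}Ue^{-i\theta}}=e^{i\theta}\mathcal{C}_U e^{-i\theta}$ and the fact that multiplication by $e^{i\theta}$, $\theta=\gamma(\Theta)$ with $\Theta\in C^2(\Omega)$, is bounded on $H^{1/2}(\partial\Omega)$), and this is needed for $Q_{e^{i\theta}Ue^{-i\theta}}$ to be defined at all as in Definition~\ref{def:associated_form}.
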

  \begin{proof}
    Let us first show that $J(\dom Q_{A,U}) = \dom Q_{e^{i\theta}Ue^{-i\theta}}$.
    Since $\Theta \in C^2(\Omega)$, for any $\Phi \in H^1(\Omega)$ we have $J\Phi \in H^1(\Omega)$ and $\gamma(J\Phi) = e^{i\theta} \varphi$.
    Denote $\tilde{U} = e^{i\theta}Ue^{-i\theta}$ and let $\tilde{P}$ be the orthogonal projector onto $\ker(\tilde{U}+1)$.
    It is straightforward to check that $\tilde{U} \gamma(J\Phi) = - \gamma(J\Phi)$ if and only if $\Phi \in \ker(U+1)$, and therefore $\tilde{P} = e^{i\theta} P e^{-i\theta}$ where $P$ is the orthogonal projector onto $\ker(U+1)$.
    A straightforward calculation shows the following relation between the partial Cayley transforms:
    \begin{equation*}
      \mathcal{C}_{\tilde{U}} = e^{i\theta} \mathcal{C}_U e^{-i\theta}.
    \end{equation*}
    From the above conditions it follows that $\tilde{U}$ is an admissible unitary operator and that $\Phi \in \dom Q_{A,U}$ if and only if $J\Phi \in \dom Q_{\tilde{U}}$.

    Finally, $\d(J\Phi) = J\d_A \Phi$, and the following identity holds:
    \begin{equation*}
      Q_{A,U}(\Psi,\Phi) = \langle J\d_A\Psi, J\d_A\Phi \rangle - \langle \psi, e^{-i\theta}\mathcal{C}_{\tilde{U}}e^{i\theta}\varphi \rangle
      = Q_{\tilde{U}}(J\Psi, J\Phi),
    \end{equation*}
    where we have used the unitarity of $J$.
  \end{proof}
  
   As a corollary of Theorem~\ref{thm:sa_extensions_laplacian}, one can show an equivalent result for magnetic Laplacians.
 
  \begin{corollary}\label{corol:magnetic_laplacian_equivalence}
    Let $A \in \Lambda(\Omega)$ be an exact magnetic potential, i.e.\ $A = \d\Theta$ for some function $\Theta \in C^2(\Omega)$, $U: L^2(\partial\Omega)\to L^2(\partial\Omega)$ an admissible unitary operator, $J$ the unitary operator $J: \Phi \in L^2(\Omega) \mapsto e^{i\Theta} \Phi \in L^2(\Omega)$ and $\theta = \gamma(\Theta)$.
    Let $\Delta_{A,U}$ be the magnetic Laplacian, densely defined on $\dom \Delta_{A,U}$, associated with $U$ and $A$.
    Then
    \begin{equation*}
      H^2(\Omega) \cap \dom \Delta_{A,U} = \mathcal{D}_{A,U}.
    \end{equation*}
    Moreover, $\dom {J(}\Delta_{A,U}{)} = \dom \Delta_{e^{i\theta}Ue^{-i\theta}}$ and $\Delta_{A, U} = J^{-1}\Delta_{e^{i\theta} U e^{-i\theta}}J$, where $\Delta_{e^{i\theta}Ue^{-i\theta}}$ is the Laplace-Beltrami operator associated with the unitary $e^{i\theta} U e^{-i\theta}$.
  \end{corollary}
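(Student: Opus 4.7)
The plan is to derive the corollary from Proposition~\ref{prop:magnetic_form_equivalence}, which already identifies the sesquilinear forms $Q_{A,U}$ and $Q_{\tilde U}$ (with $\tilde U := e^{i\theta}Ue^{-i\theta}$) via the unitary gauge transformation $J$, combined with the uniqueness clause of the Representation Theorem (Theorem~\ref{thm:repKato}) and the explicit boundary characterization of Theorem~\ref{thm:sa_extensions_laplacian}. Throughout I would use that, since $\Theta \in C^2(\Omega)$, multiplication by $e^{i\Theta}$ is a bijection of $H^k(\Omega)$ onto itself for $k=0,1,2$ and sends $\dom Q_{A,U}$ bijectively onto $\dom Q_{\tilde U}$.

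First I would establish the operator identity $\Delta_{A,U}=J^{-1}\Delta_{\tilde U}J$. From Proposition~\ref{prop:magnetic_form_equivalence} and the unitarity of $J$, for every $\Psi \in \dom Q_{A,U}$ and every $\Phi$ with $J\Phi \in \dom \Delta_{\tilde U}$,
\begin{equation*}
  Q_{A,U}(\Psi,\Phi) = Q_{\tilde U}(J\Psi,J\Phi) = \langle J\Psi,\Delta_{\tilde U}J\Phi\rangle = \langle \Psi, J^{-1}\Delta_{\tilde U}J\Phi\rangle.
\end{equation*}
The operator $J^{-1}\Delta_{\tilde U}J$ is self-adjoint on $J^{-1}(\dom \Delta_{\tilde U})$ and represents $Q_{A,U}$, so by the uniqueness clause of Theorem~\ref{thm:repKato} it must coincide with $\Delta_{A,U}$ on the corresponding domain, yielding both the operator identity and the identification of domains stated in the corollary.

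For the characterization $H^2(\Omega)\cap\dom \Delta_{A,U} = \mathcal{D}_{A,U}$, I would apply Theorem~\ref{thm:sa_extensions_laplacian} to $\Delta_{\tilde U}$ and transport the resulting boundary equation through $J$. The key computation is
\begin{equation*}
  \d(J\Phi) = e^{i\Theta}(\d\Phi + i\Phi\,\d\Theta) = J(\d_A \Phi),
\end{equation*}
which, via the trace map, gives $\gamma(J\Phi)=e^{i\theta}\varphi$ and $\gamma(\d(J\Phi)(\nu)) = e^{i\theta}\dot\varphi_A$. Substituting these into $\psi - i\dot\psi = \tilde U(\psi + i\dot\psi)$ and expanding $\tilde U = e^{i\theta}U e^{-i\theta}$, the common factor $e^{i\theta}$ cancels from both sides and one is left with exactly $\varphi - i\dot\varphi_A = U(\varphi + i\dot\varphi_A)$, the defining relation of $\mathcal{D}_{A,U}$. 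The main delicate point I anticipate is the careful bookkeeping of the boundary data under the gauge transformation, including the projection onto $\ker(\tilde U + 1)$ versus $\ker(U+1)$; this is handled by the relation $\tilde P = e^{i\theta}Pe^{-i\theta}$ used already in the proof of Proposition~\ref{prop:magnetic_form_equivalence}, so the full domain assertion reduces to the already established operator identity plus the algebraic cancellation of phases displayed above.
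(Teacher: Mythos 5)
Your proposal is correct and follows essentially the same route as the paper's proof: the operator identity and domain identification come from Proposition~\ref{prop:magnetic_form_equivalence} together with the uniqueness in the Representation Theorem, and the $H^2$ characterization is obtained by applying Theorem~\ref{thm:sa_extensions_laplacian} to $\Delta_{e^{i\theta}Ue^{-i\theta}}$ and transporting the boundary equation through $J$ via $\d(J\Phi)=J\d_A\Phi$, after which the phase $e^{i\theta}$ cancels. The only detail the paper leaves as implicit, as you do, is that $J$ preserves $H^2(\Omega)$ because $\Theta\in C^2(\Omega)$, which you correctly record.
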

  \begin{proof}
    Denote $\tilde{U} = e^{i\theta} U e^{-i\theta}$.
    From Proposition~\ref{prop:magnetic_form_equivalence} we have $J(\dom Q_{A,U}) = \dom Q_{e^{i\theta}Ue^{-i\theta}}$ and $Q_{A,U}(\Psi, \Phi) = Q_{\tilde{U}}(J\Psi, J\Phi)$.
    Therefore, cf.\ \cite[Sec VI.2]{Kato1995}, $\Phi \in \dom \Delta_{A,U}$ if and only if $J\Phi \in \dom \Delta_{\tilde{U}}$ and it follows that $\Delta_{A, U} = J^{-1}\Delta_{\tilde{U}}J.$

    Assume that $\Phi\in \H^2(\Omega) \cap \dom \Delta_{A, U}$.
    Then $J\Phi \in \H^2(\Omega) \cap \dom \Delta_{\tilde{U}} = \D_{\tilde{U}}$, which implies
    $$\gamma(J\Phi) - i\gamma (\mathrm{d}(J\Phi)(\nu)) = \tilde{U} (\gamma(J\Phi) + i\gamma (\mathrm{d}(J\Phi)(\nu))),$$
    for $\nu$ the normal vector field to the boundary.
    Now noticing that {$\mathrm{d}(J\Phi)(\beta) = J\mathrm{d}_A\Phi(\beta)$}, $\beta\in\mathfrak{X}(\Omega)$, we get 
$e^{i\theta}\varphi-i e^{i\theta}\dot{\varphi}_A = \tilde{U}(e^{i\theta}\varphi+i e^{i\theta}\dot{\varphi}_A).$
The converse inclusion is proven in a similar way.
  \end{proof}
  
%%%%%%%%%%%%%%%%%%%%%%%%%%%%%%%%%%%%%%%%%%%%%%%%%%%%%%%%%%%%%%%%%%%%%%%%%%%%%%%%%%%%%%%%%%%%%%%%%%%%%%%%%%%%%%%%%%%%%%%%%%%%%%%%%%%%%%%%%%%%%%%%%%%%%%%%%%%%

\section{Thick Quantum Graphs and quasi-$\delta$ boundary conditions}\label{sec:TQG}

  A large class of self-adjoint extensions of the Laplace-Beltrami operator on a Riemannian manifold with smooth boundary can be described with the formalism depicted in the previous section.
  As already stated in the introduction, our main motivation is to study the problem of Quantum Control at the boundary in circuit-like settings.
  We are interested in presenting results in higher-dimensional analogues of Quantum Graphs.
  There are several ways for implementing such generalisation.
  For instance, there are works describing higher-dimensional graph-like spaces generalising Quantum Graphs, which are particularly useful when studying the limit on which this higher-dimensional spaces converge to a graph (see \cite{Post2012} and references therein).
  However, we are going to take an approach focused instead on the boundary conditions, which contains the situation of Quantum Graphs as particular cases.

  Thick Quantum Graphs are meant to describe the dynamics of a quantum particle on a closed circuit, made out of a collection of wires interconnected through some {interfaces.
  We shall give now the precise definition, whose components are illustrated in Figure~\ref{fig:example-notation}.
  At the end of this subsection there are some meaningful examples of this construction; see also Fig.~\ref{fig:examples}.

  \begin{figure}[t]
    \renewcommand{\thesubfigure}{\Alph{subfigure}}
    \begin{subfigure}[b]{0.45\textwidth}
      \centering
      \includegraphics[width=\textwidth]{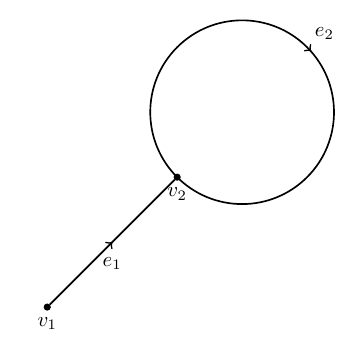}
      \caption{}\label{fig:example-notation-graph}
    \end{subfigure}
    \begin{subfigure}[b]{0.45\textwidth}
      \centering
      \includegraphics[width=\textwidth]{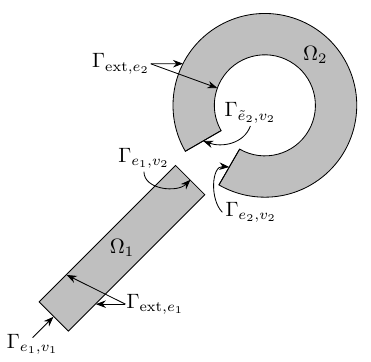}
      \caption{}\label{fig:example-notation-bands}
    \end{subfigure}
    \caption{
      {
        Illustration of the elements in Definition~\ref{def:thick-quantum-graph} for a two-dimensional Thick Quantum Graph with 2D bands as wires.
        The graph (A) describes the structure of the connections while (B) represents both the wire manifolds and the partition of their boundaries.
        Note the notation $e_2, \tilde{e}_2$ on the part of the boundary of $\Omega_2$ participating of the connection in vertex $v_2$, accounting for the two ends of the loop participating on the connection at that vertex, c.f.\ Remark~\ref{remark:loops-edges}.
      }
    }
    \label{fig:example-notation}
  \end{figure}  
  }

  \begin{definition}\label{def:thick-quantum-graph}
    An \emph{$n$-dimensional Thick Quantum Graph} is a triple $(G, \Omega, \Gamma)$ where:
    \begin{enumerate}[label=\textit{(\roman*)},nosep, leftmargin=*]
      \item $G = (V, E)$ is a finite graph with vertex set $V$ and edge set $E$.
      \item $\Omega$ is a non-connected $n$-dimensional Riemannian manifold with as many connected components as the edges of the graph, i.e., $\Omega=\cup_{e\in E}\Omega_e$ where each $\Omega_e$ is a connected Riemannian manifold with non-trivial boundary.
      \item $\Gamma$ is a partition of the boundary $\partial\Omega$ defined as follows.
        Let $V_e\subset V$ denote the set of vertices joined by the edge $e\in E$.
        Let $\Gamma_{\mathrm{ext},e}$ and $\Gamma_{v,e}$, with $e \in E$ and $v\in V_e$, be connected submanifolds of $\partial \Omega_e$, the boundary of the oriented manifold $\Omega_e$.
        We will require that for any $\Gamma_{v,e}$, $e\in E$, $v\in V_e$, there exists a connected open neighbourhood $\mathcal{V}_{v,e}\subset\partial\Omega$, with $\overline{\Gamma_{v,e}}\subset\mathcal{V}_{v,e}$, 
        such that the latter are pairwise disjoint, i.e.\ $\mathcal{V}_{v,e}\cap \mathcal{V}_{v',e} = \emptyset$ for $e\in E$ and $v,v'\in V_e$ with $v\neq v'$.
        Then $$ \partial \Omega= \bigsqcup_{e\in E} \Gamma_{\mathrm{ext},e}\bigsqcup_{v\in V_e} \Gamma_{v,e}.$$
          \end{enumerate}
  \end{definition} 
  {
  \begin{remark}\label{remark:loops-edges}
    As is it customary when dealing with loops, if an edge forms a loop it must appear twice in $E_v$, allowing to account for the two endpoints of the edge that take part in the vertex, cf.\ \cite[Definition 2.1.1]{Post2012}.
    For concrete examples of this, see Figure~\ref{fig:example-notation} and Examples~\ref{ex:5} and \ref{ex:6}.
  \end{remark}
}
    
  When it leads to no confusion, we will abuse the notation and denote by $\Omega$ the Thick Quantum Graph itself.
  Also, for the sake of simplicity we will assume from now on that every $\Omega_e$ of a Thick Quantum Graph is a compact differentiable manifold with piecewise smooth and compact boundary.

  {
  Let us add some comments on the role of its components.
  A Thick Quantum Graph is made up by a graph $G$, a possibly non-connected Riemannian manifold with boundary $\Omega$, and a partition $\Gamma$ of $\partial\Omega$.
  The graph $G$ represents the topology of the physical circuit, each of its edges representing a \emph{wire} and each of its vertices representing an \emph{interface}.
  The role of the manifold $\Omega$ is to describe the shape of the \emph{wires} and so each of its connected components describes one of the \emph{wires}.
  Although it could be possible to consider situations in which the connected components of the manifold $\Omega$ have different dimension, cf.\ \cite{CarlonePosilicano2017,ExnerSeba1987}, we restrict ourselves to the case in which all the components have the same dimension and thus $\Omega$ is a differentiable manifold of given dimension.
  Finally, the role of the partition $\Gamma$ is twofold.
  On the one hand, it specifies which parts of the boundary of each \emph{wire} is external, and thus not participating on any interface.
  On the other hand, it specifies the structure of the \emph{interfaces}, describing which part of the boundary of each \emph{wire} participates on which \emph{interface}.
  In the definition, there is also a technical condition ensuring that, for each edge, the different pieces of the boundary participating on interfaces are separated one from each other, as it will be important when proving Theorem~\ref{thm:quasi_delta_BCS_equivalence}.
}

  {
  Before we move into the discussion of the boundary conditions for our systems, let us add a few words on the relation between our notion of Thick Quantum Graphs and some other higher-dimensional versions of quantum graphs.
  Generally, other versions are built as \emph{fattened} versions of quantum graphs, the interest put on the limit on which its width converges to $0$ and the system becomes a quantum graph.
  However, we are not interested on the thin limit as a way of modelling a thin structure by a graph with certain conditions.
  We aim at describing a non-one-dimensional situation with interfaces.
  These interfaces are considered infinitesimally thin and modelled by boundary conditions (see Figure~\ref{fig:TQG-vs-fattened})

  \begin{figure}[b]
    \renewcommand{\thesubfigure}{\Alph{subfigure}}
    \begin{subfigure}[b]{0.45\textwidth}
      \centering
      \includegraphics[scale=1.25]{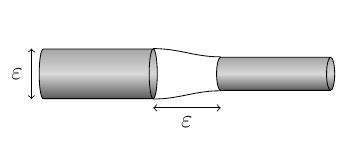}
      \caption{}\label{subfig:comparative-fattened}
    \end{subfigure}
    \begin{subfigure}[b]{0.45\textwidth}
      \centering
      \includegraphics[scale=1.25]{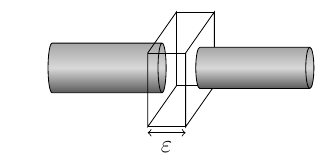}
      \caption{}\label{subfig:comparative-TQG}
    \end{subfigure}
    \caption{
      {
        On the left, a neighbourhood-like fattened generalisation of a Quantum Graph.
        On that case, the thickness of the \emph{edges} is sent to 0 at the same time as the size of the vertex neighbourhoods (i.e., the part of the fat graph \emph{connecting} the two \emph{edges}) \cite{Post2012}.
        On the right, an example of our notion of Thick Quantum Graphs: two wires with different thickness are connected through an interface.
        When one considers the thickness of the interface to be very small, boundary conditions appear as an effective description.
      }
    }
    \label{fig:TQG-vs-fattened}
  \end{figure}
  }
  
  We will only consider \emph{finite} Thick Quantum Graphs, i.e.\ having finitely many wires.
  In this case, the relevant Hilbert spaces defined for the complete graph are just the direct sum of the corresponding Hilbert spaces for its wires.
  For instance,
  \begin{equation*}
    L^2(\Omega) = \bigoplus_{e \in E} L^2(\Omega_e);\qquad
    \langle \Phi, \Psi \rangle_{L^2(\Omega)} = \sum_{e \in E} \langle \Phi_e, \Psi_e \rangle_{
    L^2(\Omega_e)},
  \end{equation*}
  and similarly for the Sobolev spaces.
  
  It is also worth to make some comments about the Hilbert space at the boundary.
  It is clear that, even in this case, one has $L^2(\partial\Omega) = \bigoplus_{e \in E} L^2(\partial\Omega_e)$; however, for the boundary space, a further decomposition will be needed.
  Let $v\in V$ and let $E_v\subset E$ be the subset of edges that share the vertex $v$.
  Denote by $\Gamma_v = \bigcup_{e \in E_v} \Gamma_{v,e}$ the part of the boundary of $\Omega$ that is involved in the connection represented by the vertex $v \in V$, and by $\Gamma_{\text{ext}} = \bigcup_{e \in E} \Gamma_{\text{ext},e}$ the subset of the boundary of $\Omega$ that is not part of a junction; i.e., the external boundary of the Thick Quantum Graph.
  In the one-dimensional case, the latter is formed by the exterior vertices of the graph.
  Therefore, we can write the following decomposition:
  \begin{equation*}
    L^2(\partial\Omega) = L^2(\Gamma_{\text{ext}}) \oplus \bigoplus_{v \in V} L^2(\Gamma_v).
  \end{equation*}
  Particles inside the Thick Quantum Graph need to be able to move
  \emph{freely} throughout the \emph{wires} and to \emph{jump} from a piece of wire to an adjacent
  one: therefore, we are going to consider the quantum evolution given by the free Hamiltonian,
  that is, the Laplacian $\Delta$ which acts wire by wire, i.e.\, for any $\Phi = \bigoplus_{e \in
  E} \Phi_e$ in $H^2(\Omega)$, $$\Delta\Phi = \bigoplus_{e \in E} \Delta_e \Phi_e,$$ with
  $\Delta_e$ the Laplace-Beltrami operator in $\Omega_e$.
  Since we are considering the Thick Quantum Graph $\Omega$ as the collection of the different wires $\Omega_e$, we will model the transitions from one $\Omega_e$ to another by imposing some particular boundary conditions involving the parts of the boundary which are (physically) connected.
  It is natural to assume that allowing the particles to move through this boundary regions should result in a continuous probability of finding the particle at the junctions $v \in V$: hence, the boundary conditions imposed should implement this requirement.
  Furthermore, particles are not supposed to escape the graph through $\Gamma_{\text{ext}}$ which makes natural to impose Dirichlet boundary conditions in $\Gamma_{\text{ext}}$.
  Other types of local boundary conditions on $\Gamma_{\text{ext}}$, such as Neumann or Robin, could be considered as well; the choice of these other types of boundary conditions would not alter significantly the results presented in what follows.

  From this discussion, it is clear that the unitaries implementing (see Subsection \ref{subsec:sa_extensions_quad_forms}) the physically relevant boundary conditions can be written as
  \begin{equation} \label{eq:total_unitary_structure}
    U = U_D \oplus \bigoplus_{v \in V} U_v,
  \end{equation}
  where $U_D: L^2(\Gamma_{\text{ext}}) \to L^2(\Gamma_{\text{ext}})$ is given by $U_D = -\mathbb{I}_{L^2(\Gamma_{\text{ext}})}$ and $U_v: L^2(\Gamma_v) \to L^2(\Gamma_v)$ are the unitaries specifying the boundary conditions in distinct regions of the boundary of $\Omega$.
  
      Some practical information can be extracted from the proof of Theorem~\ref{thm:sa_extensions_laplacian}.
    It has been established that the equation defining $\mathcal{D}_U$, which we will call \emph{boundary equation} since it fixes conditions on the boundary data of $H^2(\Omega)$ functions in the domain, is equivalent to
    \begin{equation*}
      P \varphi = 0, \qquad P^\perp \dot{\varphi} = \mathcal{C}_U \varphi.
    \end{equation*}
    A consequence of Theorem~\ref{thm:sa_extensions_laplacian} is that, for functions in $H^2(\Omega)$, boundary conditions involving only the trace of the functions in the domain must be implemented through the equation $P\varphi=0$.
    Therefore, these conditions should fix the eigenspace associated with the eigenvalue $-1$ of the unitary $U$.
    
  A particular family of self-adjoint extensions preserving the topology of a Thick Quantum Graph,
  in the sense described previously, is what we call quasi-$\delta$ boundary conditions, which are a generalisation of the standard periodic boundary conditions (Kirchhoff vertex conditions in the case of Quantum Graphs).
  In the simplest case, i.e.\ a vertex connecting two edges, Kirchhoff boundary conditions consist on identifying the two joined boundaries, by requiring the functions to be continuous at that vertex and the conservation of the so-called \emph{flux of normal derivatives}.
  The term quasi-Kirchhoff is used when allowing for a relative phase-change.
  Quasi-$\delta$ is a further generalisation allowing the derivatives of the function to be discontinuous at the vertex.
  We will be more specific in the following.

  In order to keep the mathematical description as simple as possible, from now on we are going to focus on a particular way of connecting the wires.
  Let $v\in V$ and denote by $E_v\subset E$ the subset of edges that share the vertex $v$.
  If an edge forms a loop it must appear twice in $E_v$, allowing to account for the two endpoints of $\partial\Omega_e$ to take part in the junction, cf.\ \cite[Definition 2.1.1]{Post2012}.
  We will assume that, for every $e\in E_v$, {all $\Gamma_{v,e}$ are diffeomorphic to each other, and we will chose one of them as reference and denote it as ${\Gamma_{v,0}}$. Thus, there is a diffeomorphism $g_{v,e}: {\Gamma_{v,0}} \to \Gamma_{v,e}$.}
  Let $\d\mu_0$ be a fixed volume element in ${\Gamma_{v,0}}$.
  Using the diffeomorphism we can define an isometry $T_{v,e}: L^2(\Gamma_{v,e}) \to L^2({\Gamma_{v,0}}, \d\mu_0)$ by
  \begin{equation} \label{eq:definition_Te}
    T_{v,e} \varphi \coloneqq \sqrt{|J_{v,e}|} (\varphi \circ g_{v,e}),
  \end{equation}
  where 
  $|J_{v,e}|$ is the Jacobian determinant of the transformation $g_{v,e}$, i.e.\ the proportionality factor between the pull-back of the induced Riemannian metric at the boundary and the reference volume at ${\Gamma_{v,0}}$, i.e., $(g_{v,e})^* \d\mu_{\Gamma_{v,e}} = |J_{v,e}|\d\mu_0$.
  Using these isometries we can define the unitaries $\mathbb{T}_v: L^2(\Gamma_v) \to \bigoplus_{e \in E_v} L^2({\Gamma_{v,0}})$ as the direct sum $\mathbb{T}_v = \bigoplus_{e \in E_v} T_{v,e}$.
  In the case of quasi-Kirchhoff (and also quasi-$\delta$) boundary conditions, the structure of the unitaries $U_v$ in \eqref{eq:total_unitary_structure} can be easily written using this notation.

  \begin{definition}
    \label{def:quasi_delta_BC}
    For every $v \in V$, and every $e \in E_v$, let $\chi_{v,e}\in [0,2\pi),\, \delta_{v} \in (-\pi,\pi)$.
    We call \emph{quasi-$\delta$ boundary conditions
    with parameters} $\chi_{v,e},\, \delta_{v}$ the {boundary conditions} given by the unitary (see Equation \eqref{eq:total_unitary_structure}),
    \begin{equation*}
      U = U_D \oplus \bigoplus_{v \in V} U_v, \quad\text{with}\quad
      U_v = \mathbb{T}_v^*\left( (e^{i \delta_{v}} + 1) P_{v}^\perp - \mathbb{I} \right) \mathbb{T}_v,
    \end{equation*}
    where $P_v^\perp: \bigoplus_{e \in E_v} L^2({\Gamma_{v,0}}) \to \bigoplus_{e \in E_v} L^2({\Gamma_{v,0}})$ is given by the blocks
    \begin{equation*}
      (P_{v}^\perp)_{ee'} = \frac{1}{|E_v|} e^{i (\chi_{v,e} - \chi_{v,e'})},\quad e,e'\in E_v.
    \end{equation*}
  \end{definition}

  By a straightforward calculation, it can be checked that $P_v^\bot$ is an orthogonal projection on $L^2(\Gamma_v)$ and that $P_v:=\mathbb{I}-P_v^\bot$ satisfies $\mathbb{T}^*_vP_v\mathbb{T}_vU_v = U_v\mathbb{T}_v^* P_v \mathbb{T}_v=-\mathbb{T}^*_vP_v\mathbb{T}_v$ and $\mathbb{T}^*_vP^\bot_v\mathbb{T}_vU_v = U_v\mathbb{T}_v^* P_v^\bot \mathbb{T}_v = e^{i\delta_v}\mathbb{T}_v^* P_v^\bot \mathbb{T}_v$ which shows that $U_v$ has eigenvalues $-1$ and $e^{i\delta_v}$.
  Consequently, 
  \begin{equation*}
    P = \mathbb{I}_{L^2(\Gamma_{\text{ext}})} \oplus \bigoplus_{v \in V} \mathbb{T}_v^* (1 - P_v^\perp) \mathbb{T}_v,
  \end{equation*}
  is the orthogonal projector onto the eigenspace of $U$ associated with $-1$, and 
  \begin{equation*}
    P^\perp = 1-P = 0_{L^2(\Gamma_{\text{ext}})} \oplus \bigoplus_{v \in V} \mathbb{T}_v^* P_{v}^\perp \mathbb{T}_v.
  \end{equation*}
  
  {
  \begin{remark}\label{remark:constant-delta-parameters}
    On this definition, we assume that $\chi_{v,e}$ and $\delta_v$ are constant parameters.
    This is a convenient, but not essential, simplification from the more general case on which $\chi_{v,e}$ and $\delta_v$ are regular enough functions defined on $\Gamma_{v,0}$.
    The results on this work can be immediately extended to that case, which can be used as a resource in order to make the last condition on Theorem~\ref{thm:chambrion_controllability} hold.
  \end{remark}
}
    
  The unitaries above define closable sesquilinear forms associated with self-adjoint extensions of the Laplace-Beltrami operator.
  We refer to \cite[Section 5]{IbortLledoPerezPardo2015} for further details.
  Note that our use of \emph{boundary conditions} is different from the usual one: instead of using the boundary conditions to define the domain of a differential operator, by using the unitary operator $U$ we define a closed Hermitian sesquilinear form and consider the unique self-adjoint extension of the differential operator associated with it.
  However, as discussed at the beginning of this section, for functions in $\mathcal{H}^2 \cap \dom \Delta_U$ the usual sense of boundary conditions can be recovered and the boundary equation of Definition~\ref{def:boundary_equation_operator} is equivalent to the equations, cf.\ Corollary~\ref{corol:magnetic_laplacian_equivalence},
  \begin{equation}\label{eq:boundary_eqs}
    P \varphi = 0, \qquad P^\perp \dot{\varphi} = \mathcal{C}_U \varphi.
  \end{equation}
  The partial Cayley transform is
  \begin{equation} \label{eq:circuits_cayley}
    \mathcal{C}_U = 0_{L^2(\Gamma_{\text{ext}})} \oplus \bigoplus_{v \in V} \mathbb{T}_{v}^* \left[-\tan\left( \frac{\delta_{v}}{2} \right) P_{v}^\perp \right]
    \mathbb{T}_{v}.
  \end{equation}
  Because of this block structure, the equations \eqref{eq:boundary_eqs} hold \emph{block by block}, yielding the following conditions.
  For any vertex $v\in V$ choose one of the adjacent edges $e$ and define $\varphi|_v := e^{-i\chi_{v,e}}T_{v,e} \varphi_{v,e} \in H^{\sfrac{1}{2}}({\Gamma_{v,0}})$ and then
  \renewcommand\arraystretch{1.33}
  \begin{equation*}
    \begin{array}{c c}
      \displaystyle
      \varphi|_{\Gamma_{\text{ext}}} = 0, \\
      \displaystyle
      e^{-i\chi_{v,e'}}T_{v,e'} \varphi_{v,e'} = e^{-i\chi_{v,e}} T_{v,e}\varphi|_{v,e}, \quad e'\neq e\\
      \displaystyle
      \sum_{e' \in E_v} \frac{1}{|E_v|}e^{-i\chi_{v,e'}} T_{v,e'} \dot{\varphi}_{v,e'} = -\tan \left( \frac{\delta_v}{2} \right) \varphi|_v.
    \end{array}
  \end{equation*}
  \renewcommand\arraystretch{1}
  Analogously, for $\Phi \in \mathcal{H}^2 \cup \dom \Delta_{A,U}$ one gets the boundary conditions
  \renewcommand\arraystretch{1.33}
  \begin{equation*}
    \begin{array}{c c}
      \displaystyle
            \varphi|_{\Gamma_{\text{ext}}} = 0, \\
      \displaystyle
      e^{-i\chi_{v,e'}}T_{v,e'} \varphi_{v,e'} = e^{-i\chi_{v,e}} T_{v,e}\varphi|_{v,e}; \quad e\neq e'\\
      \displaystyle
      \sum_{e' \in E_v} \frac{1}{|E_v|}e^{-i\chi_{v,e'}} T_{v,e'} (\dot{\varphi}_A)_{v,e'} = -\tan \left( \frac{\delta_v}{2} \right) \varphi|_v.
    \end{array}
  \end{equation*}
  \renewcommand\arraystretch{1}
  where again $\dot{\varphi}_A = \gamma(\d_A\Phi(\nu))$.
  
  Once the relation imposed between functions on the junctions has been written explicitly, let
  us review the role of the parameters $\delta_v$ and $\chi_{v,e}$.
  It is clear that whenever
  $\delta_v = 0$, the boundary conditions are quasi-Kirchhoff boundary conditions.
  When, in addition to $\delta_v = 0$, we have $\chi_{v,e} = 0$, the boundary conditions are just Kirchhoff boundary conditions imposing continuity of the function $\Phi$ and the conservation of the so-called flux of the normal derivatives at that point.
  Regarding $\delta_v$, note that, when $\chi_{v,e} = 0$, the boundary conditions impose continuity for the function $\Phi$ on the connections but a net flux of the normal derivatives proportional to the trace of $\Phi$ at that vertex.
  In other words, $\delta_v$ represents a delta-like interaction supported on the connections.

  To illustrate these situations, let us now introduce some concrete systems to exemplify the construction above.
  In all the examples we will consider the canonically flat metric.
  \begin{figure}[t]
    \renewcommand{\thesubfigure}{\Alph{subfigure}}
    \begin{subfigure}[b]{0.45\textwidth}
      \centering
      \includegraphics[width=\textwidth]{img/example01_graph.pdf}
      \caption{}\label{fig:example_graph}
    \end{subfigure}
    \begin{subfigure}[b]{0.45\textwidth}
      \centering
      \includegraphics[width=\textwidth]{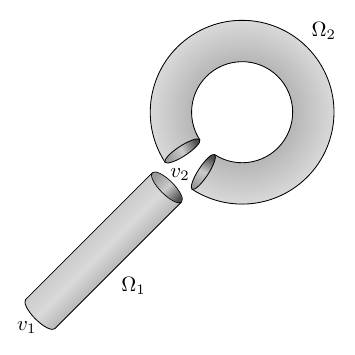}
      \caption{}\label{fig:example_cylinders}
    \end{subfigure}
    \caption{Examples of Thick Quantum Graphs defined on the same graph $G$.
    The Thick Quantum Graph (A) is one-dimensional (Quantum Graph).
    The Thick Quantum Graph (B) is two-dimensional.}
    \label{fig:examples}
  \end{figure}
  \begin{example}\label{ex:4}
    Consider the graph $G$ depicted in Fig~\ref{fig:example_graph}.
    For this first example we are going to consider the simplest class of Thick Quantum Graphs: those of dimension 1.
    For the manifold $\Omega$, consider two non-connected copies of the unit interval $\Omega_1 = [0,1]$, $\Omega_2 = [0,1]$.
    It is clear that $\partial\Omega_e = \{0_e, 1_e\}$, and the elements in the partition are the following:
    \begin{equation*}
      \Gamma_{v_1,e_1} = \{0_{e_1}\}, \quad
      \Gamma_{v_2,e_1} = \{1_{e_1}\}, \quad
      \Gamma_{v_2,e_2} = \{0_{e_2}, 1_{e_2}\}, \quad
      \Gamma_{\text{ext},e} = \emptyset \quad (e = e_1, e_2).
    \end{equation*}
    In such a case, our Thick Quantum Graph reduces to a Quantum Graph, and our definition of quasi-$\delta$, quasi-Kirchhoff and Kirchhoff boundary conditions respectively reduce to the equally named vertex conditions (cf.\ \cite{BerkolaikoKuchment2012}).
    In this simple case $L^2(\partial\Omega) = \mathbb{C}^4$, and there is no need for the unitaries $\mathbb{T}_v$, as all the pieces of the partition of the boundary are canonically diffeomorphic.
    At the vertex $v_1$, which is connected only to one edge, the boundary conditions become Robin boundary conditions.\triang
  \end{example}
  \begin{example}\label{ex:5}
    Consider the situation depicted in Fig.~\ref{fig:example_cylinders}.
    The graph associated with this Thick Quantum Graph is the same $G$ (depicted in Fig.~\ref{fig:example_graph}) from the previous example.
    The manifold is now given by two cylindrical surfaces, $\Omega_1$ and $\Omega_2$.
    Now $\partial\Omega_e = c_{e,-}\cup c_{e,+}$, where $c_{e,\pm}$ denote the circumferences at the edges of the cylindrical surfaces.
    As before, $\Gamma_{\text{ext},e} = \emptyset$ for both edges, and now
    \begin{equation*}
      \Gamma_{v_1,e_1} = c_{e_1,-}, \quad
      \Gamma_{v_2,e_1} = c_{e_1,+}, \quad
      \Gamma_{v_2,e_2} = c_{e_2,-}, \quad
      \Gamma_{v_2,\tilde{e}_2} = c_{e_2,+}.
    \end{equation*}
    If both cylinders have the same radius, then the isometries $T_{v,e}$ defining the unitaries $\mathbb{T}_v$ are just the natural ones transforming $c_{e,\pm}$ into the unit circle.
    If the radii are different, then the isometries include a scaling factor.\triang
  \end{example}
  \begin{example}\label{ex:6}
    Consider a Thick Quantum Graph with the same graph $G$ as in the previous examples.
    $\Omega$ is the disjoint union of two rectangles.
    The boundary of each rectangle is the disjoint union of four intervals.
    Two opposite sides of each rectangle will provide the elements $\Gamma_{v_1,e}$, $e = e_1,e_2$, of the partition while the other four sides, two for each rectangle, will be the elements $\Gamma_{\text{ext},e}$, $e = e_1,e_2,\tilde{e}_2$.
    {This is the Thick Quantum Graph depicted in Figure~\ref{fig:example-notation-bands}.}
    \triang
  \end{example}
  
%%%%%%%%%%%%%%%%%%%%%%%%%%%%%%%%%%%%%%%%%%%%%%%%%%%%%%%%%%%%%%%%%%%%%%%%%%%%%%%%%%%%%%%%%%%%%%%%%%%%%%%%%%%%%%%%%%%%%%%%%%%%%%%%%%%%%%%%%%%%%%%%%%%%%%%%%%%%%%%%%%%%%%%%%%%%%%%%%%%%%%%%%%%%%%%%%%%%%%%%%%%%%%%%%%%%%%%%%%%%%%%%%%%%%%%%%%%%%%%%%%%%%%%%%%%%%%%%%%%%%%%%

\section{Dynamics and Controllability on Thick Quantum Graphs} \label{sec:circuits}

{
\subsection{The controllability problem}
  Let us start this section by recalling some of the basic ideas on Quantum Control.
  Our concern here is to address the control in the infinite-dimensional quantum setting, that is, for quantum systems whose Hilbert spaces are infinite-dimensional.
  Therefore, we will not discuss the case of finite-dimensional quantum systems, to which the classic theory of control and all its results apply directly (we refer to \cite{DAlessandro2007} for a comprehensive and thorough revision of the topic).

  For the purposes of this work, we consider a system with a Hamiltonian $H(u)$ depending on a control parameter $u \in \mathcal{C}$ in a suitable space of controls $\mathcal{C} \subset \mathbb{R}$.
  The evolution from an initial state $\Psi_0$ is given by the Schrödinger equation
  \begin{equation}\label{eq:controlled-schrodinger}
    i \frac{\d}{\d t} \Psi(t) = H(u) \Psi(t), \qquad \Psi(0) = \Psi_0.
  \end{equation}
  The controllability problem is, for a given valid choice of initial and target states, to answer whether it is possible or not to construct a control curve $u(t)$ in $\mathcal{C}$ such that the solution of Eq.~\eqref{eq:controlled-schrodinger} for $H(u(t))$ starting from the initial state reaches the target state.

  Let us now present the precise definition of quantum control system we use on this work.
}

{
  \begin{definition}\label{def:control_system}
    Let $\mathcal{C}\subset\R^n$ be the space of controls.
    Let $\{\D_c\}_{c\in\mathcal{C}}$ be a family of dense subsets of $\H$ and let $\{H(c)\}_{c\in\mathcal{C}}$ be a family of self-adjoint operators such that, for each $c\in\mathcal{C}$, the operator $H(c)$ has domain $\D_c$.
    A \emph{quantum control system} is the family of dynamical systems defined by the Schrödinger equation with time-dependent Hamiltonians $H(u(t))$, where $u\colon [0,T] \subset \mathbb{R} \to \mathcal{C}$ are the control functions.
  \end{definition}

  As mentioned in the introduction, for quantum control systems defined on infinite-dimensional Hilbert spaces, the notion of exact controllability is too restrictive and needs to be relaxed.
  An appropriate notion that bypasses the no-go results and is suitable for the infinite-dimensional case is the concept of approximate controllability.
  {
  For a comprehensive discussion on this matter, see Sections~II and III in \cite{BoscainChambrionSigalotti2013}.
}

  \begin{definition}\label{strong_controllability}
    A quantum control system is \emph{approximately controllable} if for every $\Psi_0, \Psi_T \in \mathcal{H}$, with $\norm{\Psi_0}=\norm{\Psi_T}$, and every $\varepsilon >0$, there exists $T>0$ and a measurable control function $u:[0,T]\to\mathcal{C}$ such that the Schrödinger equation has a solution $\Psi(t)$ that satisfies
    \begin{equation*}
      \Psi(0) = \Psi_0 \quad
      \text{and} \quad
      \|\Psi_T - \Psi(T)\| < \varepsilon.
    \end{equation*}
  \end{definition}

  In \cite{ChambrionMasonSigalottiEtAl2009} Chambrion et al.\ study the approximate controllability of bilinear control systems defined as follows.
  \begin{definition}
  Let $r>0$ and assume that:
  \begin{enumerate}[%
    label=\textit{(A\arabic*)},
    nosep,
    labelindent=0.5\parindent,
    leftmargin=*
    ]
    \item\label{H1} $H_0, H_1$ are self-adjoint operators.
    \item\label{H2} There exists an orthonormal basis $\{\Phi_n\}_{n \in \mathbb{N}}$ of $\mathcal{H}$ made of eigenvectors of $H_0$.
    \item\label{H3} $\Phi_n \in \dom H_1$ for every $n \in \mathbb{N}$.
  \end{enumerate}
  A \emph{normal bilinear control system} is a quantum control system with controls given by \hbox{$\mathcal{C} = \{c\in\R | c<r\}$} and family of Hamiltonians $\{H(c)\}_{c\in \mathcal{C}}$ determined by $$H(c) = H_0 + c H_1.$$
  \end{definition}

  For bilinear control systems Chambrion et al.\ prove the following result.
  \begin{theorem}[Chambrion et al.\ {\cite[Thm.~2.4]{ChambrionMasonSigalottiEtAl2009}}] \label{thm:chambrion_controllability}
    Consider a normal bilinear control system as described above, and denote by $\lambda_n$ the eigenvalue of $H_0$ associated with the eigenfunction $\Phi_n$.
    Then, if the elements of the sequence $\{\lambda_{n+1} - \lambda_n\}_{n \in \mathbb{N}}$ are $\mathbb{Q}$-linearly independent and if $\langle \Phi_{n+1}, H_1 \Phi_n \rangle \neq 0$ for every $n \in \mathbb{N}$, the system is approximately controllable.
  \end{theorem}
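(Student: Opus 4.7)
The plan is to prove this theorem through a Galerkin truncation combined with resonant (rotating-wave) averaging, which is the standard approach for bilinear Schrödinger control in infinite dimensions. Since the eigenfunctions $\{\Phi_n\}_{n\in\mathbb{N}}$ form an orthonormal basis of $\H$, every state is norm-approximated by its projection onto $\H_N := \mathrm{span}\{\Phi_1,\dots,\Phi_N\}$, so it suffices to steer $\Psi_0$ into any prescribed $\epsilon$-neighbourhood of $\Psi_T$ when both lie in $\H_N$, while tolerating a controlled leakage into $\H_N^{\perp}$ that vanishes as $N \to \infty$.

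On the finite-dimensional subspace $\H_N$, denoting by $\pi_N$ the orthogonal projector, the truncated operators $H_0^{(N)} = \pi_N H_0 \pi_N$ and $H_1^{(N)} = \pi_N H_1 \pi_N$ define a bilinear control system on $\H_N$ whose controllability I would establish by the Lie algebra rank condition. The $\mathbb{Q}$-linear independence of the spectral gaps $\lambda_{n+1}-\lambda_n$ implies that iterated commutators $[iH_0^{(N)},[iH_0^{(N)},\dots,[iH_0^{(N)},iH_1^{(N)}]\dots]]$ isolate each individual matrix entry of $H_1^{(N)}$ via a Vandermonde-type argument, and the hypothesis $\langle \Phi_{n+1}, H_1 \Phi_n\rangle \neq 0$ then provides a connected chain $1 \leftrightarrow 2 \leftrightarrow \dots \leftrightarrow N$ of nonzero nearest-neighbour couplings whose Lie algebra exhausts $\mathfrak{u}(N)$. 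This gives approximate controllability at the finite-dimensional level: every element of $\mathrm{U}(N)$ is implementable by a finite concatenation of piecewise-constant controls.

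The analytic core is to realise these finite-dimensional rotations via controls of the full system without introducing large errors. For a constant control $c$ applied for time $T$, I would pass to the interaction picture $V(t) = e^{itH_0}U_c(t)$, which solves $i\dot{V} = c\,e^{itH_0}H_1 e^{-itH_0}V$; its matrix elements carry oscillatory phases $e^{it(\lambda_m-\lambda_n)}$. Taking $c$ small and $T$ large while keeping $cT$ fixed, and tuning the switching times to target a single resonance, the $\mathbb{Q}$-linear independence of the gaps guarantees that only the transition $\Phi_n \leftrightarrow \Phi_{n+1}$ survives at leading order, producing an effective $\mathrm{SU}(2)$ rotation in $\mathrm{span}\{\Phi_n,\Phi_{n+1}\}$ with angle proportional to $cT\,\langle \Phi_{n+1}, H_1 \Phi_n\rangle$, while all other matrix elements average to zero up to a small-divisor remainder.

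The principal obstacle I expect is making the averaging quantitative \emph{uniformly} in $N$, so that the accumulated error over the $O(N^2)$ elementary pulses needed to synthesise a generic element of $\mathrm{U}(N)$ still tends to zero as $N \to \infty$ jointly with the refinement of the control parameters. Two sources of error must be controlled: the non-resonant oscillatory integrals, which require a Diophantine-type quantitative use of the $\mathbb{Q}$-linear independence of the gaps, and the leakage from $\H_N$ into $\H_N^{\perp}$, for which the hypothesis $\Phi_n \in \dom H_1$ provides the regularity needed to bound the off-block matrix elements of $H_1$. Once these two uniform estimates are secured, combining the finite-dimensional Lie-algebraic synthesis with the resonant-averaging lift yields a sequence of piecewise-constant controls producing a propagator that approximates the desired target in the strong topology, completing the proof.
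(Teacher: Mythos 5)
First, note that the paper does not prove this statement at all: it is imported verbatim as Theorem~2.4 of Chambrion, Mason, Sigalotti and Boscain and used as a black box, so there is no internal proof to compare against. Judged on its own terms, your proposal is an outline of the broadly correct strategy (Galerkin truncation, a chain of nonzero couplings $\langle\Phi_{n+1},H_1\Phi_n\rangle$, and resonant averaging made possible by the $\mathbb{Q}$-linear independence of the spectral gaps), but it is not a proof. The entire analytic content of the theorem lies exactly in the two points you defer to the last paragraph --- making the averaging quantitative uniformly in $N$ and controlling leakage out of $\H_N$ --- and you state these as obstacles to be overcome rather than overcoming them. As written, the argument has a genuine gap precisely where you say it does.

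Moreover, the specific route you propose makes that gap harder to close than necessary. Synthesising a generic element of $\mathrm{U}(N)$ via the Lie algebra rank condition and then concatenating $O(N^2)$ resonant pulses forces you to propagate the rotating-wave error through a number of stages that grows with $N$, jointly with the truncation error; it is not clear this double limit can be controlled, and this is not how Chambrion et al.\ close the argument. Their proof only ever needs (approximate) rotations in the two-dimensional spans of \emph{consecutive} eigenvectors, transporting mass along the chain $1\to2\to\cdots\to N$, and it handles the infinite-dimensional validity not by bounding off-block matrix elements of $H_1$ but by a time-reparametrisation plus a norm-conservation argument: if a Galerkin trajectory of sufficiently high order ends up with almost all of its $L^2$ mass on a prescribed finite-dimensional target, unitarity of the full propagator forces the full trajectory to do the same. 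Without either that mechanism or a worked-out quantitative (Diophantine) averaging estimate uniform in $N$, the proposal does not establish the theorem; since the paper itself relies on the published proof, the honest course here is to keep the citation rather than substitute this sketch.
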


  The conditions on the eigenvalues and eigenfunctions are not restrictive: as it is shown in \cite{BoscainCaponigroChambrionEtAl2012, MasonSigalotti2010, PrivatSigalotti2010,PrivatSigalotti2010b} they are satisfied generically and they can also be bypassed, cf.\ \cite[Section 6.1]{ChambrionMasonSigalottiEtAl2009}.
  {Moreover, taking non-constant values for the parameters defining quasi-$\delta$ boundary conditions can help to fulfil the last condition on certain circumstances, cf.\ Remark~\ref{remark:constant-delta-parameters}.}
  We will assume hereafter that these conditions are met.
}
\subsection{Dynamics}

  In the previous sections we have introduced Thick Quantum Graphs and families of self-adjoint differential operators on them that are well-defined quantum Hamiltonians.
  As stated in the introduction, the aim of this work is to study the feasibility of controlling a Thick Quantum Graph using the boundary conditions as controls; in particular, for showing the viability of this Quantum Control scheme, we are going to restrict ourselves to quasi-$\delta$ boundary conditions.
  We define now the two families of Quantum Control problems whose controllability is studied.

\begin{definition}\label{def:induction_control_system}
  Let $(G,\Omega,\Gamma)$ be a Thick Quantum Graph.
  Let $V$ be the vertex set of the graph $G$ and, for $v\in V$, let $E_v$ be the set of edges in $G$ that share the vertex $v$.
  Let $r$ be a positive real number, $A_0 \in \Lambda(\Omega)$ a smooth differential one-form and $\Theta_0 \in C^\infty (\Omega)$ a smooth function on $\Omega$ such that $\d\Theta_0 =A_0$.
  Let $U$ be the unitary operator defining quasi-$\delta$ boundary conditions with parameters $\chi_{v,e} =0$ and $\delta_v \in (-\pi, \pi)$, $v\in V$, $e\in E_v$.
A \emph{quantum induction control system} is a quantum control system with \hbox{$\mathcal{C} = \{ (a,b)\in\R^2\mid b <r\}$} and family of Hamiltonians $H(a,b) = \Delta_{aA_0,U} + b\Theta_0$, where $\Delta_{aA_0,U}$ is the self-adjoint extension of the magnetic Laplacian associated with the magnetic potential $aA_0\in\Lambda(\Omega)$ and the unitary $U$, and such that the control function is of the type
$$\begin{array}{rccc}u&:\R &\to& \mathcal{C}\\ & t &\mapsto & (a(t), b(t)),\end{array}$$
with $b(t) = \frac{\d a}{\d t}(t)$ almost everywhere.
\end{definition}

{
\begin{remark}
  Notice that the boundary conditions of the magnetic Laplacian, cf.\ Definition~\ref{def:boundary_equation_operator} and Theorem~\ref{corol:magnetic_laplacian_equivalence}, depend explicitly on the magnetic potential.
  Therefore, even if the parameters of the boundary conditions in Def.~\ref{def:induction_control_system} do not depend explicitly on time, the boundary conditions and thus the domains of the Hamiltonians do depend on time.
\end{remark}
}

The reason for letting $b(t)$ be the derivative of $a(t)$ almost everywhere is because we shall consider functions that are piecewise differentiable and thus the derivative might be undefined at certain points.

The term \emph{quantum induction} refers to the fact that the Schrödinger equation associated with quantum induction control systems is the one corresponding to a particle moving in the Thick Quantum Graph subject to the action of two objects: a time-dependent magnetic field concentrated on the loops of the graph, and an electric field whose strength is proportional to the variation of the magnetic field in the same way that is described by Faraday’s Induction Law.

\begin{definition}\label{def:boundary_control_system}
  Let $(G,\Omega,\Gamma)$ be a Thick Quantum Graph.
  Let $V$ be the vertex set of the graph $G$ and for $v\in V$ let $E_v$ be the set of edges in $G$ that share the vertex $v$.
  Let $\bar{\chi}_{v,e} \in [0,2\pi]$, $v\in V$, $e \in E_v$.
  A \emph{quasi-$\delta$ boundary control system} is the quantum control system with space of controls $\mathcal{C} = \R$ and family of Hamiltonians given by the Laplace-Beltrami operator on $\Omega$ with quasi-$\delta$ boundary conditions with parameters $\chi_{v,e} = c \bar{\chi}_{v,e} $ and $\delta_v \in (-\pi, \pi)$, $v\in V$, $e\in E_v$, and where $c\in \mathcal{C}$.
\end{definition}

  For addressing the controllability of quasi-$\delta$ boundary control systems, it is first necessary to study the existence of solutions for the corresponding Schrödinger equation.
  Quasi-$\delta$-type boundary control systems do not have constant form domain.
  The following theorem determines equivalent systems that will be proven to have constant form domain.

  \begin{theorem} \label{thm:quasi_delta_BCS_equivalence}
    Let $(G,\Omega,\Gamma)$ be a Thick Quantum Graph.
    Let $V$ be the vertex set of the graph $G$ and for $v\in V$ let $E_v$ be the set of edges in $G$ that share the vertex $v$.
    Let $U(t)$, $t\in\R$, be a time-dependent family of unitary operators defining quasi-$\delta$ boundary conditions with parameters $\chi_{v,e}(t)\in C_\mathrm{{pw}}^1(\mathbb{R})$ and $\delta_v \in (-\pi,\pi)$, $v\in V$ and $e\in E_v$.
    Let $H(t)$ be the time-dependent Hamiltonian defined by the family of Laplace-Beltrami operators $\Delta_{U(t)}$.
    Then there exists $\Theta(t) \in C_\mathrm{{pw}}^1(\mathbb{R}) \times C^\infty(\Omega)$ and a family of unitary operators $\{J(t)\}_{t\in\R}$ such that the curve $\Psi(t)$ is a {piecewise weak solution} of the Schrödinger equation determined by $H(t)$, cf.\ Definition~\ref{def:piecewise-sol}, if and only if $J(t)\Psi(t)$ is a {piecewise weak solution} of the Schrödinger equation determined by the time-dependent magnetic Hamiltonian
    \begin{equation}
      \tilde{H}(t) = \Delta_{A(t),\tilde{U}} + \Theta'(t),
    \end{equation}
    with the time-dependent magnetic potential $A(t) = \d\Theta(t)$, $\tilde{U}$ the unitary operator associated with \mbox{quasi-$\delta$} type boundary conditions with parameters $\tilde{\chi}_{v,e} = 0$, $\tilde{\delta}_v = \delta_v$, $v\in V$, $e\in E_v$, and $\Theta'|_I = \frac{\d}{\d t} \Theta|_{I}$ for any open interval $I$ on which every $\chi_{v,e}$ is $C^1$.
  \end{theorem}
  
  \begin{proof}
    The weak solutions of the Schrödinger equation determined by $H(t)$ satisfy
    \begin{equation} \label{eq:schrodinger_quasi_delta}
      \frac{\d}{\d t} \langle \xi, \Phi(t) \rangle = -i Q_{U(t)}(\xi, \Phi(t)),\qquad
      \text{for every }\xi \in \mathcal{H}^+,
    \end{equation}
    where $U(t)$ is the unitary defining quasi-$\delta$ boundary conditions with parameters $\chi_{v,e}(t), \delta_v$, cf.\ Definition~\ref{def:quasi_delta_BC}, and $Q_{U(t)}$ is the sesquilinear form associated with $\Delta_{U(t)}$ (see Def.~\ref{def:associated_form}).

    By definition, the closure of each $\Gamma_{v,e}\in \Gamma$, $e\in E$, $v\in V_e$ is contained in a connected open neighbourhood $\mathcal{V}_{v,e}$ and these are pairwise disjoint.
    Therefore, for each $e\in E$ there exists $\Theta_e\in C^\infty(\Omega_e)$ such that for any vertex $v \in V_e$ one has $T_{v,e}(\gamma(\Theta_e)|_{\Gamma_{v,e}}) = \chi_{v,e}$ where $T_{v,e}$ is the isometry defined in Eq.~\eqref{eq:definition_Te}.
    Moreover, if $\chi_{v,e}(\cdot)\in C_\mathrm{{pw}}^{a}(\R)$ for $a\in\mathbb{N}$, then $\Theta_e$ can be chosen such that $\Theta_e(\cdot) \in C_\mathrm{{pw}}^{a}(\R)\times C^\infty(\Omega_e)$.
    Let $\Theta(t) := \bigoplus_{e \in E} \Theta_e(t) \in C^\infty(\Omega)$, $t\in\R$.
    By construction $T_{v,e}(\gamma(\Theta(t))|_{\Gamma_{v,e}}) = \chi_{v,e}(t)$.
    Define the magnetic potential $A(t) = \d \Theta(t) = \bigoplus_{e \in E} \d \Theta_e(t)$ and denote by $J(t)$ the family of unitary transformations on $L^2(\Omega)$ defined by
    $J(t) = \bigoplus_{e \in E} J_e(t)$, where $J_e(t) : \Phi_e \in L^2(\Omega_e)
    \mapsto e^{-i\Theta_e(t)} \Phi_e \in L^2(\Omega_e)$.
    Some consequences follow straightforwardly:
    \begin{enumerate}[label=\textit{(\roman*)}]
      \item For every $t$ in an interval on which the functions $\chi_{v,e}(t)$ are $C^1$ and $\Phi \in H^1(\Omega)$, the product rule yields
        \begin{equation*}
          \d_{A(t)} (J(t) \Phi) = \d(J(t)\Phi) + i A(t) J(t)\Phi = J(t) \d \Phi.
        \end{equation*}
      \item For every $t \in I$ and $\Phi \in \dom Q_{U(t)}$, $\Psi = J(t) \Phi$ is in the constant form domain $\dom Q_{A(t), \tilde{U}}$.
        This is a direct application of Proposition~\ref{prop:magnetic_form_equivalence}.
      \item For a curve $\Phi(t) \in \dom Q_{U(t)}$ that is a weak solution of the Schrödinger equation
        \eqref{eq:schrodinger_quasi_delta} in an interval $I$, $\Psi(t) = J(t) \Phi(t)$ satisfies also in $I$
        \begin{equation*}
          \begin{alignedat}{2}
            \frac{\d}{\d t} \langle \xi, \Psi(t) \rangle
            &= \left\langle\xi, \left( \frac{\d}{\d t}J(t) \right) \Phi(t) \right\rangle + \left\langle \xi, J(t) \frac{\d}{\d t} \Phi(t)\right\rangle \\
            &= -i \langle \xi, \Theta'(t) \Psi(t) \rangle 
            -i \left\langle J(t)^{-1} \xi, \frac{\d}{\d t} \Phi(t) \right\rangle \\
            &= -i \left[ \langle \xi, \Theta'(t) \Psi(t) \rangle 
            + Q_{U(t)}(J(t)^{-1} \xi, J(t)^{-1} \Psi(t)) \right]
          \end{alignedat}
        \end{equation*}
        By Proposition~\ref{prop:magnetic_form_equivalence}, this is equivalent to
        \begin{equation*}
          \frac{\d}{\d t} \langle \xi, \Psi(t) \rangle
          = -i \left[
            Q_{A(t), \tilde{U}}(\xi, \Psi(t)) + \langle \xi, \Theta'(t) \Psi(t) \rangle 
          \right].
        \end{equation*}
        where $\tilde{U}$ is the unitary associated with $\delta$-type boundary conditions with parameters $\delta_v$.
    \end{enumerate}
    The result follows noticing that the sesquilinear form associated with $\tilde{H}(t)$ is
    \begin{equation*}
      h_t(\Psi,\Phi) = Q_{A(t), \tilde{U}}(\Psi, \Phi) + \langle \Psi, \Theta'(t) \Phi \rangle. \qedhere
    \end{equation*}
  \end{proof}

  This theorem allows us to convert the original quasi-$\delta$-type boundary control system into
  an induction control system.
  Part of the time-dependence we had in the domain of the initial problem has been transferred to the analytical form of the Hamiltonian, but we still have a system with time-dependent operator domain.
  However, the domains of the family of sesquilinear forms associated with the self-adjoint operators of the quantum induction control problem do not depend on the parameter $t$ and have therefore constant form domain.
  This is shown in the next result.

  \begin{corollary} \label{cor:magnetic_system_form_domain}
    Let $I \subset \mathbb{R}$ be a compact interval and let $H(t)$, $t\in I$, be the time-dependent Hamiltonian of the quantum induction control problem associated with a quasi-$\delta$ boundary control problem with control functions $\chi_{v,e}(t)$; $v\in V$, $e\in E_v$.
    Then $H(t)$ is a time-dependent Hamiltonian with constant form domain, cf.\ Def.~\ref{def:hamiltonian_const_form}.

    In addition, if $\chi_{v,e}(t) \in C_\mathrm{{pw}}^2(I)$, there exists a {piecewise weak solution} of the Schrödinger equation determined by $H(t)$.
    Moreover, if $\chi_{v,e}(t)$ is in $C^3(I)$, then the {piecewise weak solution} is also a strong solution of the Schrödinger equation.
  \end{corollary}
  \begin{proof}
    Notice that the unitary operators defining the boundary conditions for the quantum induction system do not depend on the parameters $\chi_{v,e}(t)$ and are therefore constant.
    Since the graph has a finite number of vertices, $\inf_{v\in V} |\delta_v \pm \pi| > 0$.
    From \cite[Prop.~3.11]{IbortLledoPerezPardo2015} it follows that the corresponding Cayley transforms satisfy trivially the inequality $\sup_t \|\mathcal{C}_{U}\| < \infty$.
    Since $\Omega$ is compact, $\Theta'(t)$ is a bounded operator on $L^2(\Omega)$.
    For each $t\in I$, $A(t)$ is a differentiable form on $\Omega$ and, since $I$ is a compact interval, $\sup_{t\in I} \norm{A}_{\infty} < \infty$.
    Corollary~\ref{corol:uniform_semibound_equivalence_formnorms} implies the uniform lower bound for $H(t)$.
    The form domain of $H(t)$ is
    \begin{equation*}
      \mathcal{H}^+ = \dom Q_{A(t),U} = \{\Phi \in H^1(\Omega) \mid P\varphi = 0\},
    \end{equation*}
    where $P$ is the orthogonal projector onto the eigenspace of $U$ associated with the eigenvalue -1 of the unitary operator (see Definition \ref{def:quasi_delta_BC}), which only depends on the topology of the graph $G$ associated with the Thick Quantum Graph.

    The existence of {piecewise weak solutions}, respectively strong solutions, of the Schrödinger equation is ensured by \cite[Theorem~5]{BalmasedaLonigroPerezPardo2022a} since the associated form is given by
    \begin{equation*}
      h_t(\Phi,\Phi) = \|\d\Phi\|^2 + \|\Phi A(t)\|^2 + 2\Im\langle \d\Phi, \Phi A(t) \rangle - \langle \varphi, \mathcal{C}_{U}\varphi \rangle + \langle \Phi, \Theta'(t)\Phi \rangle
    \end{equation*}
    with $\Theta(t)$ and $A(t)$ defined in Theorem~\ref{thm:quasi_delta_BCS_equivalence} and $\mathcal{C}_{U}$ given in Eq.~\eqref{eq:circuits_cayley}.
  \end{proof}

%%%%%%%%%%%%%%%%%%%%%%%%%%%%%%%%%%%%%%%%%%%%%%%%%%%%%%%%%%%%%%%%%%%%%%%%%%%%%%%%%%%%%%%%%%%%%%%%%%%%%%%%%%%%%%%%%%%%%%%%%%%%%%%%%%%%%%%%%%%%%%%%%%%%%%%%%%%%%%%%%%%%%%%%%%%%%%%%%%%%%%%%%%%%%%%%%%%%%%%%%%%%%%%%%%%%%%%%%%%%%%%%%%%%%%%%%%%%%%%%%%%%%%%%%%%%%%%%%%%%%%%%

\subsection{Controllability}

The approximate controllability problem for a quantum boundary control system consists in answering whether it is possible to drive the system from any initial state to a small neighbourhood of any target state by only modifying its boundary conditions.
In our particular setting of quasi-$\delta$-type boundary control systems, this is done by choosing a family of curves $\chi_{v,e}(t)$.

By Theorem~\ref{thm:quasi_delta_BCS_equivalence}, the control problem for a quasi-$\delta$-type boundary control system is closely related to the control problem for the associated induction control problem, given by the magnetic Hamiltonian
\begin{equation} \label{eq:magnetic_controlled_H}
  H(t) = \Delta_{A(t)} + \Theta'(t),
\end{equation}
with controls $A(t), \Theta'(t)$ such that $A(t) = \d\Theta(t)$ and with $\delta$-type boundary conditions.

We will prove now Theorem~\ref{thm:controllability_piecewise_smooth}, i.e., approximate controllability of quantum induction control systems.
In order to do that, we rely on Chambrion et al.'s theorem (i.e., Theorem~\ref{thm:chambrion_controllability}) and a stability Theorem proven in \cite{BalmasedaPhD}.
Notice that, even if the Hamiltonians of quantum induction control systems are similar to those of normal bilinear control system, the fact that the control function appears with its derivative does not allow for a direct application of Theorem~\ref{thm:chambrion_controllability}.
To circumvent this issue, we will proceed in two steps.
First, we will define an auxiliary system to which Chambrion et al.'s Theorem applies.
Then, we will use the controls provided by Theorem~\ref{thm:chambrion_controllability} to construct a sequence of Hamiltonians converging to the quantum induction one.
The next theorem collects some results and applies them to this sequence of Hamiltonians.
We shall use the scales of Hilbert spaces associated with the Hamiltonians with constant form domain.

 \begin{definition} \label{def:scales_hilbert}
    Let $I \subset \mathbb{R}$ a compact interval, let $\mathcal{H}^+$ be a dense subspace of $\mathcal{H}$ and ${H(t)}$, ${t \in I}$, a time-dependent Hamiltonian with constant form domain $\H^+$.
    The \emph{scale of Hilbert spaces} defined by $H(t)$, cf.\ \cite[Section 1.1]{Berezanskii1968}, is the triple of Hilbert spaces
    \begin{equation*}
      (\mathcal{H}^+, \langle \cdot, \cdot \rangle_{+,t}) \subset
      (\mathcal{H}, \langle \cdot, \cdot \rangle) \subset
      (\mathcal{H}^-_t, \langle \cdot, \cdot \rangle_{-,t}),
    \end{equation*}
    where $\langle \Psi, \Phi \rangle_{\pm,t} := \langle (H(t) + m + 1)^{\pm\sfrac{1}{2}}\Psi, (H(t) + m + 1)^{\pm\sfrac{1}{2}} \Phi \rangle$ and $\mathcal{H}^-_t$ denotes the closure of $\mathcal{H}$ with respect to the norm defined by $\|\Phi\|_{-,t}^2 \coloneqq \langle \Phi, \Phi \rangle_{-,t}$.
 We will denote by $(\argdot,\argdot)_t : \H^+_t\times\H^-_t \to \mathbb{C}$ the canonical pairings.
 \end{definition}

  An important property of the scales of Hilbert spaces, which will be needed, is the following one: $$\norm{\argdot}_{-,t}\leq \norm{\argdot}\leq \norm{\argdot}_{+,t}.$$

\begin{theorem} \label{lemma:magnetic_family_formlinear}
  Let $I \subset \mathbb{R}$ be a compact interval, let $A_0$ be a magnetic potential and let $\Theta_0$ be a function such that $\d\Theta_0 = A_0$.
  For $n \in \boldsymbol{N} \subset \mathbb{N}$, let $u_n(t), v_n(t) \in C_\mathrm{{pw}}^2(I)$ such that $\sup_{n,t} |u_n(t)| < \infty$ and $\sup_{n,t} |v_n(t)| < \infty$.
  For each $t \in I$, denote by $\Delta_{u_n(t) A_0}$ the magnetic Laplacian with potential $u_n(t)A_0$ and constant $\delta$-type boundary conditions.
  Define the Hamiltonians $H_n(t) \coloneqq \Delta_{u_n(t) A_0} + v_n(t)\Theta_0$.
  The following statements hold:
  \begin{enumerate}[label=\textit{(\alph*)},nosep,leftmargin=*,ref=\ref{lemma:magnetic_family_formlinear}\alph*]
    \item \label{lemma:magnetic_formlinear_dynamics}
      $\{H_n(t)\}_{n \in \boldsymbol{N}}$ is a family of form-linear, time-dependent Hamiltonians, and for each $n \in \boldsymbol{N}$ there exists a unitary propagator $U_n(t,s)$ that is a {piecewise weak solution} of Schrödinger equation for $H_n(t)$, cf.\ Definition~\ref{def:piecewise-sol}.
    \item \label{lemma:magnetic_formlinear_normequiv}
      {For any $n_0 \in \boldsymbol{N}$ and any $t_0 \in I$, there is a real number $c > 1$} independent of $n$ and $t$ such that
      \begin{equation*}
        c^{-1} \|\cdot\|_{\pm,n,t} \leq \|\cdot\|_{\pm,n_0,t_0} \leq c\|\cdot\|_{\pm,n,t}.
      \end{equation*}
      {
        We will take a reference $n_0 \in \boldsymbol{N}$ and $t_0 \in I$, and denote $\|\cdot\|_{\pm} \coloneqq \|\cdot\|_{\pm,n_0,t_0}$.
      }
    \item \label{lemma:magnetic_formlinear_stability}
      If, in addition, $\sup_{n \in \boldsymbol{N}} \sum_j \|u_n'\|_{L^1(I_j)} < \infty$ and $\sup_{n \in \boldsymbol{N}} \sum_j \|v_n'\|_{L^1(I_j)} < \infty$, where $\{I_j\}_{j=1}^\nu$ denotes the family of open intervals on which $u_n, v_n$ are differentiable, then there is a constant $L$ such that for every $n \in \boldsymbol{N}$ and $t, s \in I$ it holds
      \begin{alignat*}{2}
        \|U_n(t,s) - U_{n'}(t,s)\|_{+,-}
        \leq L &\left(\|u_n - u_{n'}\|_{L^1(s,t)}
        + \|u_n^2 - u_{n'}^2\|_{L^1(s,t)} \right. \\
        &\left.+ \|v_n - v_{n'}\|_{L^1(s,t)}\right),
      \end{alignat*}
      {where $\|\cdot\|_{+,-}$ denotes the norm on the space of bounded operators from $\mathcal{H}^+$ to $\mathcal{H}^-$.}

  \end{enumerate}
\end{theorem}
\begin{proof}
  The sesquilinear form defined by $H_n$ can be written as
  \begin{equation*}
    h_{n,t}(\Phi,\Phi) = h_0(\Phi,\Phi) + u_n(t)^2 \|\Phi A_0\|^2 + 2u_n(t) \Im\langle \d\Phi, \Phi A_0 \rangle + v_n(t) \langle \Phi, \Theta_0 \Phi \rangle,
  \end{equation*}
  where $h_0(\Phi,\Phi) = \|\d\Phi\|^2 - \langle \varphi, \mathcal{C}_{U} \varphi \rangle$ does not depend on $t$ since $\delta_v$ is constant (see Eq.~\eqref{eq:circuits_cayley}).
  Since $\Omega$ is compact, $\Theta_0$ and $A_0$ are bounded and the boundedness of $u_n(t), v_n(t)$ and Corollary~\ref{corol:uniform_semibound_equivalence_formnorms} shows that the Hamiltonians $H_n(t)$ are semibounded from below uniformly.
  By Corollary~\ref{cor:magnetic_system_form_domain}, \emph{(a)} follows.

  Property \emph{(b)} follows from applying \cite[Proposition~4.4.3]{BalmasedaPhD}.
  Now, since $\Omega$ is bounded, $h_1(\Phi,\Phi) \coloneqq \|\Phi A_0\|^2$ and $h_3(\Phi,\Phi)\coloneqq \langle \Phi, \Theta_0 \Phi \rangle$ are bounded with respect to the norm $\|\cdot\|\leq \|\cdot\|_+$.
  Finally, since $\|\cdot\|_{+} \sim \|\cdot\|_1$ (cf.\ Theorem~\ref{thm:quadratic_form_H1bounded}), the form $h_2(\Phi,\Phi) \coloneqq 2 \Im \,\langle \d\Phi, \Phi A_0 \rangle$ is bounded with respect to $\|\cdot\|_+$.
  Therefore, there is $K$ such that $h_i(\Phi,\Phi) \leq K \|\Phi\|_+$ for $i = 1,2,3$.
  Since $\sup_{n,t} |u_n(t)| <\infty$, $I$ being compact, and $\sup_{n} \sum_j \|u_n'(t)\|_{L^1(I_{j})} <\infty$, it follows that $\sup_{n} \sum_j \|\frac{\d}{\d t}[u_n]^2\|_{L^1(I_j)} <\infty$.
  Therefore, \cite[Theorem~4.4.5]{BalmasedaPhD} applies, which concludes the proof.
\end{proof}

\begin{proof}[Proof of Theorem~\ref{thm:controllability_piecewise_smooth}]
  For any $u_0 > 0$, define the auxiliary system with Hamiltonian
    \begin{equation*}
      H_0(t) = \Delta_{u_0 A_0} + v(t)\Theta_0.
    \end{equation*}
    for some magnetic potential $A_0$ and $\Theta_0$ such that $d\Theta_0=A_0$.
    We have omitted the subindex $U$ denoting the boundary conditions of the magnetic Laplacian as it will remain fixed.
    Since the Thick Quantum Graph is defined on a compact manifold, $\Theta_0$ defines a bounded potential.
    Moreover, since $u_0A_0$ is fixed, the operator domain of $H_0(t)$ does not depend on $t$, and $\Delta_{u_0 A_0}$ has compact resolvent since the Thick Quantum Graph is a compact manifold.
    We will assume that the conditions on the eigenvalues and eigenfunctions of Theorem~\ref{thm:chambrion_controllability} are met.
    These conditions are met generically in the systems under study.

    Hence, for every initial and target states $\Psi$, $\Psi_T$ with $\|\Psi_T\|=\|\Psi\|$, every $\varepsilon > 0$ and every $r > 0$, there exists $T>0$ and $v(t): [0, T] \to (0, r)$ piecewise constant such that the evolution induced by $H_0(t)$, $\Psi_0(t) = U_0(t,0)\Psi$, satisfies $\Psi_0(0) = \Psi$ and
    \begin{equation*}
      \left\| \Psi_0(T) - \Psi_T \right\| < \frac{\varepsilon}{2}.
    \end{equation*}
    Note that, since the operator domain is fixed, for a piecewise constant $v(t)$ there exist a {piecewise weak solution} of the Schrödinger equation determined by $H_0(t)$.
    
    Now we will construct a sequence of Hamiltonians whose dynamics will converge to the auxiliary one.
    For each $n \in \mathbb{N}$, divide the time interval $I=[0,T]$ into $n$ pieces of length $\tau = T/n$.
    Let $\{I_{n,j}\}_{j=1}^{\nu_n}$ be the coarsest refinement of the partition $\{((k-1)\tau, k\tau)\}_{k=1}^{n}$ such that $v(t)$ is constant with value $v_{n,j}$ on $I_{n,j}$ for $j = 1, 2, \dots, n$.
    Define $t_{n,j}$ such that $I_{n,j} = (t_{n,j}, t_{n,j+1})$.
    By construction $t_{n, j+1} - t_{n,j} \leq \tau$ for $1\leq j \leq \nu_n$.

    For $1 \leq j \leq \nu_n$, define the functions $u_{n,j}: [0,T] \to \mathbb{R}$ by
    \begin{equation*}
      u_{n,j}(t) =
      \begin{cases}
        0 & \text{if }t \notin I_{n,j}, \\
        u_0 + \int_{t_{n,j}}^t v(s) \,\d s &\text{if } t \in I_{n,j}.
      \end{cases}
    \end{equation*}
    Take $u_n(t) = \sum_{j=1}^{\nu_n} u_{n,j}(t)$.
    By Theorem~\ref{lemma:magnetic_formlinear_dynamics}, there is a unitary propagator
    $U_n(t,s)$, $t,s \in I$, that is a {piecewise weak solution} of the Schrödinger equation with Hamiltonian
    \begin{equation}\label{eq:induction_hamiltonian}
      H_n(t) = \Delta_{u_n(t)A_0} + v(t)\Theta_0.
    \end{equation}
    Note that, by definition, $u_n'(t) = v(t)$ for almost every $t \in I$ and for every $n$ and every $t \in I$ we have $|u_n'(t)| = |v(t)| < r$, $|u_n(t)| \leq u_0 + r T$ and $v'(t) = 0$.

    For every $t \in \bigcup_{j=1}^{\nu_n} I_{n,j}$ we have 
    \begin{equation*}
      |u_n(t) - u_0| = \left|\int_{t_{n,j}}^{t} v(s)\,\d s \right|
      \leq \frac{rT}{n},
    \end{equation*}
    and
    \begin{equation*}
      |u_n^2(t) - u_0^2| = 2u_0\int_{t_{n,j}}^t v(s)\,\d s + \left( \int_{t_{n,j}}^t v(s)\,\d s \right)^2
      \leq u_0 \frac{rT}{n} + \frac{r^2T^2}{n^2}.
    \end{equation*}
    It follows
    \begin{equation} \label{eq:convergence_un}
      \|u_n - u_0\|_{L^1(I)} \leq \frac{rT^2}{n} \quad\text{and}\quad
      \|u_n^2 - u_0^2\|_{L^1(I)} \leq u_0 \frac{rT^2}{n} + \frac{r^2T^3}{n^2}.
    \end{equation}
    Finally, Theorem~\ref{lemma:magnetic_formlinear_stability} applies, yielding
    \begin{equation*}
      \|U_0(t, s) - U_n(t, s)\|_{+,-} < L (\|u_n(t) - u_0\|_{L^1(s,t)} + \|u_n(t)^2 - u_0^2\|_{L^1(s,t)}),
    \end{equation*}
    where the constant $L$ is independent of $t, s, n$.
    By Equation~\eqref{eq:convergence_un} it follows
    \begin{equation*}
      \lim_{n \to \infty} \|U_0(t, s) - U_n(t, s)\|_{+,-} = 0
    \end{equation*}
    uniformly on $s,t \in I$.
    Suppose that $\Psi_0, \Psi_T\in \H^+$.
    We get that, for $n$ large enough,
  \begin{equation*}
  \norm{U_n(T,0)\Psi_0-\Psi_T}_- \leq \norm{U_n(T,0)\Psi_0 - U_0(T,0)\Psi_0}_- + \norm{U_0(T,0)\Psi_0-\Psi_T}_- < \varepsilon.
  \end{equation*}

Finally, we have to show that the approximation can be done with respect to the norm $\norm{\cdot}$ of the Hilbert space $\H$.
Remember that $\norm{\Psi_T} = \norm{\Psi_0}$.
By the results above, for any $\varepsilon>0$ there exists a piecewise linear function $u:I\mapsto\R$ satisfying the conditions of the Theorem such that there is a {piecewise weak solution} $U_u(t,s)$, $t,s \in I$, of the Schrödinger equation that satisfies
$$\norm{\Psi_T - U_u(T,0)\Psi_0}_-<\frac{\varepsilon^2}{2\norm{\Psi_T}_+}.$$

We have

\begin{align*}
  \norm{\Psi_T - U_u(T,0)\Psi_0}^2 &= \scalar{\Psi_T}{\Psi_T-U_u(T,0)\Psi_0} + \norm{\Psi_0}^2 + \scalar{-U_u(T,0)\Psi_0}{\Psi_T}\\
      &= \scalar{\Psi_T}{\Psi_T-U_u(T,0)\Psi_0} + \norm{\Psi_0}^2 + \scalar{\Psi_T-U_u(T,0)\Psi_0}{\Psi_T} - \norm{\Psi_T}^2\\
      &= 2\Re \scalar{\Psi_T}{\Psi_T-U_u(T,0)\Psi_0}\\
      &\leq 2\norm{\Psi_T}_+\norm{\Psi_T - U_u(T,0)\Psi_0}_- < \varepsilon^2.
\end{align*}
Since $\H^+$ is dense in $\H$, an $\varepsilon/2$ argument shows that the approximation holds for any \mbox{$\Psi_0, \Psi_T \in \H$}.
\end{proof}

\begin{proposition}\label{prop:induction_app_controllability_H-}
  Let $r \in \mathbb{R}$ be a positive real number, $u_1,u_0\in\R$ and $\mathcal{C} = \{(a,b) \in \mathbb{R}^2 \mid b<r\}$.
  A quantum induction control system is approximately controllable with control function
  \begin{equation*}
    \begin{array}{rccc}
      u&:[0,T] &\to& \mathcal{C}\\
       & t &\mapsto & (a(t), b(t)),
    \end{array}
  \end{equation*}
  piecewise linear and such that $b(t) = \frac{\d a}{\d t}(t)$ almost everywhere, $a(T)=u_1$ and $a(0)=u_0$.
\end{proposition}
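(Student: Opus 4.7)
The plan is to derive this proposition from Theorem~\ref{thm:controllability_piecewise_smooth} by concatenating a short linear \emph{cleanup ramp} at the end of the control produced there. Inspection of the proof of Theorem~\ref{thm:controllability_piecewise_smooth} shows that the condition $a(0)=u_0$ is already available: the auxiliary system used in that proof has Hamiltonian $H_0(t)=\Delta_{u_0 A_0}+v(t)\Theta_0$ with an arbitrary real parameter $u_0$, and every element of the approximating sequence $u_n(t)$ satisfies $u_n(0)=u_0$ by design. Only the terminal condition $a(T)=u_1$ requires additional work.

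Given $\varepsilon>0$, I fix $\tau>0$ such that $b^{+}\coloneqq (u_1-u_0)/\tau<r$, and define on an auxiliary interval $[T_1,T_1+\tau]$ the linear ramp $a^{+}(t)=u_0+b^{+}(t-T_1)$ joining $u_0$ and $u_1$. The associated Hamiltonian $H^{+}(t)\coloneqq \Delta_{a^{+}(t)A_0,U}+b^{+}\Theta_0$ is form-linear with constant form domain $\mathcal{H}^{+}$, so by Corollary~\ref{prop:magnetic_system_form_domain} the corresponding weak Schrödinger equation admits a unitary propagator $W^{+}$. Combining Lemma~\ref{lemma:propagators_bound_simon} with the norm equivalence of Proposition~\ref{prop:formlinear_equiv_norms} applied to the single Hamiltonian $H^{+}$, $W^{+}$ extends to a bounded operator $\mathcal{H}^{-}\to\mathcal{H}^{-}$ with norm at most some constant $C'>0$.

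Next I apply Theorem~\ref{thm:controllability_piecewise_smooth} with modified target $(W^{+})^{-1}\Psi_T\in\mathcal{H}$, base value $u_0$, and tolerance $\varepsilon/C'$, producing a time $T_1>0$ and a piecewise linear control $a_n\colon[0,T_1]\to\R$ with $a_n(0)=u_0$ whose weak propagator $U_n(T_1,0)$ satisfies $\|U_n(T_1,0)\Psi_0-(W^{+})^{-1}\Psi_T\|_{-}<\varepsilon/C'$. The concatenated control $\tilde a$, equal to $a_n$ on $[0,T_1]$ and to $a^{+}$ on $(T_1,T_1+\tau]$, is piecewise linear (with a possible jump of size $|a_n(T_1)-u_0|$ at $t=T_1$), starts at $u_0$, ends at $u_1$, and has derivative bounded above by $r$ almost everywhere. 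Its associated weak propagator is $W^{+}U_n(T_1,0)$, and the bound on $\|W^{+}\|$ between $\mathcal{H}^-$ spaces yields
\[
  \|W^{+}U_n(T_1,0)\Psi_0-\Psi_T\|_{-} = \|W^{+}\bigl(U_n(T_1,0)\Psi_0-(W^{+})^{-1}\Psi_T\bigr)\|_{-}\leq C'\cdot\frac{\varepsilon}{C'}=\varepsilon,
\]
which is the required weak approximate controllability with control of the prescribed form.

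The main technical point is the $\mathcal{H}^{-}\to\mathcal{H}^{-}$ boundedness of $W^{+}$, a statement not made explicit in Section~\ref{sec:form_dynamics}. It follows by extending the bound $\|U(t,s)\Phi\|_{-,s}\leq e^{\frac{1}{2}\int_s^t C(\tau)\,\d\tau}\|\Phi\|_{-,s}$ of Lemma~\ref{lemma:propagators_bound_simon}, initially established for $\Phi\in\mathcal{H}^{+}$, by density of $\mathcal{H}^{+}$ in $\mathcal{H}^{-}$ together with the uniform norm equivalence from Proposition~\ref{prop:formlinear_equiv_norms}.
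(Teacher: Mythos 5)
Your argument is correct, but it follows a genuinely different route from the paper's. The paper obtains both endpoint conditions by concatenating \emph{constant} controls: a frozen segment $a\equiv u_0$ of length $q$ at the beginning and a frozen segment $a\equiv u_1$ of length $p$ at the end, and it absorbs the error of the final piece by letting $p\to 0$, using $\lim_{p\to 0}\norm{(e^{-i\tilde{H}(u_1)p}-\mathbb{I})\Phi}_-=0$ for fixed $\Phi\in\H^+$ (itself extracted from Theorem~\ref{thm:stability_bound}); no modification of the target state is required. You instead (i) read $a(0)=u_0$ off the explicit construction $u_{n,j}(t)=u_0+\int_{t_{n,j}}^t v(s)\,\d s$ inside the proof of Theorem~\ref{thm:controllability_piecewise_smooth} --- true, but it means the proposition is not deduced from the \emph{statement} of that theorem alone, which is precisely what the paper's prepended frozen segment buys --- and (ii) append a fixed-duration linear ramp from $u_0$ to $u_1$ and pre-compensate the target by $(W^+)^{-1}$. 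Step (ii) hinges on the $\H^-$-boundedness of $W^+$, and your justification holds up: Lemma~\ref{lemma:propagators_bound_simon} together with the norm equivalence of Proposition~\ref{prop:formlinear_equiv_norms} gives $\norm{W^+\Phi}_-\leq c^2e^{\frac{1}{2}\tau c^2M}\norm{\Phi}_-$ for $\Phi\in\H^+$, and since $\H^+$ is dense in $\H$ and $\norm{\argdot}_-\leq\norm{\argdot}$, the bound passes to the vector you actually apply $W^+$ to, which lies in $\H$. Two points of hygiene: fix the ramp (hence $C'$) \emph{before} invoking Theorem~\ref{thm:controllability_piecewise_smooth}, which is harmless since $C'$ depends only on $u_0$, $u_1$, $\tau$ and not on $T_1$; and for $u_1>u_0$ the choice of $\tau$ must satisfy $\tau>(u_1-u_0)/r$ so that $b^+<r$ remains strict. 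The trade-off is clear: your route costs an extra (short) boundedness argument but appends a segment of fixed, prescribed shape, while the paper's route is shorter and purely statement-level at the price of an infinitesimal terminal hold at $u_1$.
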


\begin{proof}
  Let $H(a,b)$ be the family of Hamiltonians of the quantum induction control system, and define $\tilde{H}(a) := H(a,0)$ for $a\in\R$.
  A direct application of Stone's Theorem shows that, for any $\Phi\in\H$,
  $$\lim_{p\to0} \norm{(e^{-i\tilde{H}(a)p} - \mathbb{I})\Phi} = 0.$$
  Now take $\Psi_0,\Psi_T\in\H$ with $\norm{\Psi_T} = \norm{\Psi_0}$ and $q>0$.
  By Theorem~\ref{thm:controllability_piecewise_smooth}, for any $\varepsilon>0$ there exist $\tilde{T}>0$ and a piecewise linear control function $u\colon[q,\tilde{T}]\to\R$, with $|\frac{du}{dt}|\leq r$ almost everywhere such that the {piecewise weak solution} $\tilde{U}(\tilde{T},q)$ of the quantum induction control problem satisfies
  $$\norm{\Psi_T - \tilde{U}(\tilde{T},q)e^{-i\tilde{H}(u_0)q}\Psi_0} < \frac{\varepsilon}{3}.$$
  Define $\xi\in\H$ by $\xi:= \Psi_T - \tilde{U}(\tilde{T},q)e^{-i\tilde{H}(u_1)q}\Psi_0$.
  For any $p>0$ we have
  \begin{equation*}
  \norm{\Psi_T - e^{-i\tilde{H}(u_1)p}\tilde{U}(\tilde{T},q)e^{-i\tilde{H}(u_0)q}\Psi_0} \leq \norm{(\mathbb{I} - e^{-i\tilde{H}(u_1)p} )\Psi_T} + \norm{(e^{-i\tilde{H}(u_1)p} - \mathbb{I})\xi} + \norm{\xi}.
  \end{equation*}
  Taking $p$ small enough the right hand side can be made smaller than $\varepsilon$.
  Defining $T:=\tilde{T}+p$ the statement follows.
\end{proof}

\begin{proof}[Proof of Theorem~\ref{thm:controllability_boundary}]
  Let $V$ be the vertex set of the Thick Quantum Graph and, for $v\in V$, let $E_v$ be the set of edges of the Thick Quantum Graph that share the vertex $v$.
  Let $\delta_v\in (-\pi,\pi)$ for $v\in V$.
  Let $U_0$ be the admissible unitary operator defining quasi-$\delta$ boundary conditions with parameters $\chi_{v,e} = u_0\bar{\chi}_{v,e}$ and $\delta_v$, $v\in V$ and $e\in E_v$; $U_1$ is defined analogously with $u_0$ replaced by $u_1$.
  Let $\Psi_i\in\dom(\Delta_{U_i})$, $\norm{\Psi_i}=1$, $i=0,1$.

  By Proposition~\ref{prop:magnetic_form_equivalence}, the sesquilinear forms associated with the Laplace-Beltrami operators $\Delta_{U_i}$, $i=0,1$, are unitarily equivalent to the ones associated with magnetic Laplacians with magnetic potential $u_iA_0$ and quasi-$\delta$ boundary conditions determined by $\chi_{v,e}=0$ and $\delta_v$, $v\in V$ and $e\in E_v$.
  The unitary operators that implement the equivalence are completely determined by the values of the parameters $u_0$ and $u_1$, and will be respectively denoted by $J_{u_0}$ and $J_{u_1}$.

  Take $\Phi_i=J_{u_i}\Psi_i\in \dom (\Delta_{u_i A_0})$, $i=0,1$.
  By Theorem~\ref{thm:quasi_delta_BCS_equivalence} the solutions of the Schrödinger equation of the quasi-$\delta$ boundary control system are isomorphic to the solutions of a quantum induction control system, the isomorphism being a time-dependent unitary operator $J(t)$.
  Applying Theorem~\ref{thm:controllability_piecewise_smooth} to the initial and target states $\Phi_0$ and $\Phi_1$ it follows that, for any $\varepsilon>0$, there exists $T>0$ and a piecewise linear control function $u:[0,T]\to\R$ with $|\frac{du}{dt}| < r$, $u(T)=u_1$ and $u(0)=u_0$ such that 
  $$\|\Phi_1 - U(T,0)\Phi_0\| <\varepsilon,$$
  where $U$ is the unitary propagator that solves the Schrödinger equation of the quantum induction system.
  The curve $J^\dagger(t)U(t,0)\Phi_0$ is a solution of the Schrödinger equation of the quasi-$\delta$ boundary control system.
  For each $t\in[0,T]$, the unitary operator $J(t)$ depends only on the value of the magnetic potential at time $t$ and in fact $J(T)=J_{u_1}$ and $J(0) = J_{u_0}$.
  Therefore, we have
  $$J^\dagger(0)U(0,0)\Phi_0 = J^\dagger_{u_0}\Phi_0= \Psi_0$$
  and using the fact that $J_{u_1}$ is a unitary operator on $\H$
  \begin{equation*}
    \|\Psi_1 - J^\dagger(T)U(T,0)\Phi_0\| \leq \|\Phi_1 - U(T,0)\Phi_0\| < \varepsilon
  \end{equation*}
  as we wanted to show.
\end{proof}

\section{Conclusions and some further applications to approximate controllability}\label{sec:examples}
  In the previous sections we have obtained fairly general results that can be applied to a wide family of systems.
  However, the controllability results developed hold only with {piecewise weak solutions} of the Schrödinger equation, cf.\ Definition~\ref{def:piecewise-sol}, since the time-dependence of the Hamiltonians' domains hinders the existence of solutions in the strong sense.
  The main obstruction is the fact that the controls obtained by applying Theorem~\ref{thm:chambrion_controllability} are piecewise constant.
  If they were smooth, the results in \cite{BalmasedaLonigroPerezPardo2022a, Kisynski1964} would ensure that the {piecewise weak solutions} are also strong solutions.
  A first attempt to get approximate controllability with strong solutions of the Schrödinger equation would be to use the stability results developed in \cite{BalmasedaPhD} to extend Theorem~\ref{thm:controllability_piecewise_smooth} to the case of smooth controls.
  A straightforward application of this procedure is not possible for these systems.
  Piecewise constant functions do not have $L^1$ integrable weak derivatives, and the fact that the control and its derivative appear on the Hamiltonian of quantum induction systems forbids the application of these results.
  Nevertheless, the techniques developed can be applied to other related systems for which it is possible to obtain approximate controllability with strong solutions.
  Of course, the following results also apply to the particular case in which the Thick Quantum Graph is a Quantum Graph like in the Examples of Section~\ref{sec:preliminaries}.

  \begin{definition} \label{def:electromagnetic_control_system}
    Let $\mathcal{G}$ be a Thick Quantum Graph, let $A_0$ be a magnetic potential on $\mathcal{G}$, and let $\Theta_0$ be such that $A_0 = d\Theta_0$.
    We call \emph{electromagnetic quantum control system} to the quantum control system with real control function $u(t)$ and Hamiltonian
    \begin{equation*}
      H(t) = \Delta_{A_0} + u(t) \Theta_0,
    \end{equation*}
    where $\Delta_{A_0}$ is the magnetic Laplacian operator on $\mathcal{G}$ with quasi-$\delta$ boundary conditions with fixed parameters $\chi_{v,e}$, $\delta_v$.
  \end{definition}

  Using the ideas from the previous sections, this system can be proven to be approximately controllable with smooth control functions.

  \begin{proposition}\label{prop:6.2}
    Let $c > 0$ be fixed, and $u_0, u_1 \in (0, c)$.
    The electromagnetic quantum control system is approximately controllable with smooth control function $u: [0, T] \to (0, c)$ such that $u(0) = u_0$ and $u(T) = u_1$.
  \end{proposition}
  \begin{proof}
    Theorem~\ref{thm:chambrion_controllability} ensures approximate controllability with piecewise constant controls $\tilde{u}: [0, T] \to (0, c)$, and an argument analogous to Proposition~\ref{prop:induction_app_controllability_H-} shows that the controls can be chosen such that $\tilde{u}(0) = u_0$ and $\tilde{u}(T) = u_1$.
    
    Since piecewise constant functions can be approximated on $L^1$ by smooth functions, the result follows by a straightforward application of Theorem~\ref{lemma:magnetic_family_formlinear}.
  \end{proof}

  A relation analogous to the one between quantum induction control systems and boundary control systems can be established for electromagnetic quantum control systems.

  \begin{proposition}\label{prop:6.3}
    Consider an electromagnetic quantum control system with magnetic potential $A_0 = d\Theta_0$, and denote by $H(t)$ its Hamiltonian.
    Let $J(t)$ be the family of unitary operators defined edgewise by $J(t): \Phi \in \mathcal{H} \mapsto e^{iu(t)\Theta_0} \Phi \in \mathcal{H}$.
    A curve $\Phi(t)$ is a solution of the Schrödinger equation with Hamiltonian $H(t)$ if and only if $\Psi(t) \coloneqq J(t)\Phi(t)$ is a solution of the Schrödinger equation with Hamiltonian
    \begin{equation*}
      \tilde{H}(t) = \Delta_{A(t)},
    \end{equation*}
    where $\Delta_{A(t)}$ is the magnetic Laplacian with magnetic field $A(t) = (1 + u(t)) A_0$ and $t$-dependent quasi-$\delta$ boundary conditions with parameters $(1 + u(t))\chi_{v,e}$, $\delta_v$.
  \end{proposition}
  \begin{proof}
    This result is a consequence of Proposition~\ref{prop:magnetic_form_equivalence}.
    Using the relation between the operator domain and the form domain, it can be shown that $\Phi \in \dom H(t)$ if and only if $J(t)\Phi \in \dom \tilde{H}(t)$.
    Moreover, by the chain rule, it follows
    \begin{equation*}
      \frac{\d}{\d t} \Psi(t) = -i J(t) \Delta_{A_0} J(t)^{-1} \Psi(t) = -i \Delta_{(1+u(t))A_0} \Psi(t).
      \qedhere
    \end{equation*}
  \end{proof}

  This relationship motivates the following definition.
  \begin{definition}
    Let $\mathcal{G}$ be a Thick Quantum Graph, let $A_0$ be a magnetic potential on $\mathcal{G}$, and let $\Theta_0$ be such that $A_0 = d\Theta_0$.
    We call \emph{magnetic boundary control system} to the quantum control system with real control function $u(t)$ and Hamiltonian
    \begin{equation*}
      \tilde{H}(t) = \Delta_{(1 + u(t))A_0},
    \end{equation*}
    where $\Delta_{A_0}$ is the magnetic Laplacian operator on $\mathcal{G}$ with quasi-$\delta$ boundary conditions with parameters $(1 + u(t))\chi_{v,e}$, $\delta_v$.
  \end{definition}

  Combining Proposition~\ref{prop:6.2} and Proposition~\ref{prop:6.3} leads to the following approximate controllability result on magnetic boundary control systems:

  \begin{proposition}\label{prop:6.5}
    Let $c > 0$ be fixed, and $u_0, u_1 \in (0, c)$.
    The magnetic boundary control system is approximately controllable with smooth control function $u: [0, T] \to (0, c)$ such that $u(0) = u_0$ and $u(T) = u_1$.
  \end{proposition}

%%%%%%%%%%%%%%%%%%%%%%%%%%%%%%%%%%%%%%%%%%%%%%%%%%%%%%%%%%%%%%%%%%%%%%%%%%%%%%%%%%%%%%%%%%%%%%%%%%%%%%%%%%%%%%%%%%%%%%%%%%%%%%%%%%%%%%%%%%%%%%%%%%%%%%%%%%%%%%%%%%%%%%%%%%%%%%%%%%%%%%%%%%%%%%%%%%%%%%%%%%%%%%%%%%%%%%%%%%%%%%%%%%%%%%%%%%%%%%%%%%%%%%%%%%%%%%%%%%%%%%%%

\subsection*{Acknowledgements}

  A.B.\ and J.M.P.P.\ acknowledge support provided by the ``Agencia Estatal de Investigación (AEI)'' Research Project PID2020-117477GB-I00, by the QUITEMAD Project P2018/TCS-4342 funded by the Madrid Government (Comunidad de Madrid-Spain) and by the Madrid Government (Comunidad de Madrid-Spain) under the Multiannual Agreement with UC3M in the line of ``Research Funds for Beatriz Galindo Fellowships'' (C\&QIG-BG-CM-UC3M), and in the context of the V PRICIT (Regional Programme of Research and Technological Innovation).
  J.M.P.P acknowledges financial support from the Spanish Ministry of Science and Innovation, through the ``Severo Ochoa Programme for Centres of Excellence in R\&D'' (CEX2019-000904-S).
  A.B.\ acknowledges financial support from the Spanish Ministry of Universities through the UC3M Margarita Salas 2021-2023 program (``Convocatoria de la Universidad Carlos III de Madrid de Ayudas para la recualificación del sistema universitario español para 2021-2023''), and from ``Universidad Carlos III de Madrid'' through Ph.D.\ program grant PIPF UC3M 01-1819, UC3M mobility grant in 2020 and from the EXPRO grant No.\ 20-17749X of the Czech Science Foundation.
  D.L.\ was partially supported by ``Istituto Nazionale di Fisica Nucleare'' (INFN) through the project ``QUANTUM'' and the Italian National Group of Mathematical Physics (GNFM-INdAM), and acknowledges support by MIUR via PRIN 2017 (Progetto di Ricerca di Interesse Nazionale), project QUSHIP (2017SRNBRK), and by European Union–NextGenerationEU (CN00000013 – ``National Centre for HPC, Big Data and Quantum Computing'').
  He also thanks the Department of Mathematics at ``Universidad Carlos III de Madrid'' for its hospitality.

\printbibliography

@book{AdamsFournier2003,
  title = {Sobolev {{Spaces}}},
  author = {Adams, R.A. and Fournier, J.J.F.},
  year = {2003},
  series = {Pure and {{Applied Mathematics}}},
  publisher = {{Elsevier Science}},
  file = {C\:\\Users\\Aitor\\Zotero\\storage\\KS9ZW88L\\[Pure and Applied Mathematics] Robert A. Adams, John J. F. Fournier - Sobolev spaces (2003, Academic Press).djvu}
}

@article{Alexander1985,
  title = {Superconductivity of networks. A percolation approach to the effects of disorder.},
  author = {Alexander, S.},
  year = {1985},
  journal = {Phys. Rev. B},
  volume = {27},
  number = {},
  pages = {1541--1557},
  publisher = {},
  issn = {},
}

@book{AgrachevSachkov2004,
  title = {Control {{Theory}} from the {{Geometric Viewpoint}}},
  author = {Agrachev, A.A. and Sachkov, Y.L.},
  editor = {Gamkrelidze, R.V.},
  year = {2004},
  series = {Encyclopaedia of {{Mathematical Sciences}}},
  volume = {87},
  publisher = {{Springer Berlin Heidelberg}},
  address = {{Berlin, Heidelberg}},
  doi = {10.1007/978-3-662-06404-7},
  isbn = {978-3-642-05907-0 978-3-662-06404-7},
  language = {en},
  file = {C\:\\Users\\Aitor\\Zotero\\storage\\MRTZXEHT\\Agrachev y Sachkov - 2010 - Control theory from the geometric viewpoint.pdf}
}

@article{AsoreyIbortMarmo2005,
  title = {Global {{Theory}} of {{Quantum Boundary Conditions}} and {{Topology Change}}},
  author = {Asorey, M. and Ibort, A. and Marmo, G.},
  year = {2005},
  month = feb,
  journal = {International Journal of Modern Physics A},
  volume = {20},
  number = {05},
  pages = {1001--1025},
  issn = {0217-751X, 1793-656X},
  doi = {10.1142/S0217751X05019798},
  abstract = {We analyze the theory of boundary conditions for a quantum particle moving on a compact Riemannian manifold M with regular boundary {$\Gamma$} = {$\partial$}M . The space M of selfadjoint extensions of the covariant Laplacian on M is shown to have interesting geometrical and topological properties which are related to the different topological closures of M . In this sense, the change of topology of M is connected with the non-trivial structure of M.},
  language = {en},
  file = {C\:\\Users\\Aitor\\Zotero\\storage\\IQZMCSUS\\Asorey et al. - 2005 - GLOBAL THEORY OF QUANTUM BOUNDARY CONDITIONS AND T.pdf;C\:\\Users\\Aitor\\Zotero\\storage\\SRWLQ9N3\\Asorey et al. - 2005 - GLOBAL THEORY OF QUANTUM BOUNDARY CONDITIONS AND T.pdf}
}

@article{BallMarsdenSlemrod1982,
  ids = {BallMarsden1982},
  title = {Controllability for {{Distributed Bilinear Systems}}},
  author = {Ball, J. M. and Marsden, J. E. and Slemrod, M.},
  year = {1982},
  month = jul,
  journal = {SIAM Journal on Control and Optimization},
  volume = {20},
  number = {4},
  pages = {575--597},
  publisher = {{Society for Industrial and Applied Mathematics}},
  issn = {0363-0129},
  doi = {10.1137/0320042},
  abstract = {This paper studies controllability of systems of the form \$\{\{dw\} / \{dt\}\} = \textbackslash mathcal \{A\}w + p(t)\textbackslash mathcal \{B\}w\$ where \$\textbackslash mathcal\{A\}\$ is the infinitesimal generator of a \$C\^0\$ semigroup of bounded linear operators \$e\^\{\textbackslash mathcal\{A\}t\} \$ on a Banach space X, \$\textbackslash mathcal\{B\}:X \textbackslash to X\$ is a \$C\^1\$ map, and \$p \textbackslash in L\^1 ([0,T];\textbackslash mathbb\{R\})\$ is a control. The paper (i) gives conditions for elements of X to be accessible from a given initial state \$w\_0\$ and (ii) shows that controllability to a full neighborhood in X of \$w\_0\$ is impossible for \$\textbackslash dim X = \textbackslash infty \$. Examples of hyperbolic partial differential equations are provided.},
  file = {C\:\\Users\\Aitor\\Zotero\\storage\\S4X28BHI\\Ball et al. - 1982 - Controllability for Distributed Bilinear Systems.pdf}
}

@article{BalmasedaDiCosmoPerezPardo2019,
  title = {On {{Z}} -{{Invariant Self}}-{{Adjoint Extensions}} of the {{Laplacian}} on {{Quantum Circuits}}},
  author = {Balmaseda, A. and Di Cosmo, F. and {P{\'e}rez-Pardo}, J.M.},
  year = {2019},
  month = aug,
  journal = {Symmetry},
  volume = {11},
  number = {8},
  pages = {1047},
  issn = {2073-8994},
  doi = {10.3390/sym11081047},
  abstract = {An analysis of the invariance properties of self-adjoint extensions of symmetric operators under the action of a group of symmetries is presented. For a given group G, criteria for the existence of G-invariant self-adjoint extensions of the Laplace\textendash Beltrami operator over a Riemannian manifold are illustrated and critically revisited. These criteria are employed for characterising self-adjoint extensions of the Laplace\textendash Beltrami operator on an infinite set of intervals, {$\Omega$}, constituting a quantum circuit, which are invariant under a given action of the group Z. A study of the different unitary representations of the group Z on the space of square integrable functions on {$\Omega$} is performed and the corresponding Z-invariant self-adjoint extensions of the Laplace\textendash Beltrami operator are introduced. The study and characterisation of the invariance properties allows for the determination of the spectrum and generalised eigenfunctions in particular examples.},
  language = {en},
  file = {C\:\\Users\\Aitor\\Zotero\\storage\\VIHY4PEC\\Balmaseda et al. - 2019 - On Z -Invariant Self-Adjoint Extensions of the Lap.pdf}
}

@inproceedings{BalmasedaPerezPardo2019,
  title = {Quantum {{Control}} at the {{Boundary}}},
  booktitle = {Classical and {{Quantum Physics}}},
  author = {Balmaseda, A. and {P{\'e}rez-Pardo}, J.M.},
  editor = {Marmo, G. and {Mart{\'i}n de Diego}, D. and Mu{\~n}oz Lecanda, M.},
  year = {2019},
  series = {Springer {{Proceedings}} in {{Physics}}},
  pages = {57--84},
  publisher = {{Springer International Publishing}},
  address = {{Cham}},
  doi = {10.1007/978-3-030-24748-5_5},
  abstract = {We present a scheme for controlling the state of a quantum system by modifying the boundary conditions. This constitutes an infinite-dimensional control problem. We provide conditions for the existence of solutions of the dynamics and prove that this system is approximately controllable.},
  isbn = {978-3-030-24748-5},
  language = {en},
  keywords = {Mathematical Physics,Mathematics - Optimization and Control},
  file = {C\:\\Users\\Aitor\\Zotero\\storage\\AL6V55X4\\Balmaseda y Pérez-Pardo - 2019 - Quantum Control at the Boundary.pdf;C\:\\Users\\Aitor\\Zotero\\storage\\CDIAH8GJ\\Balmaseda y Pérez-Pardo - 2018 - Quantum Control at the Boundary.pdf;C\:\\Users\\Aitor\\Zotero\\storage\\I8JLZL3V\\1811.html}
}

@article{BalmasedaLonigroPerezPardo2022a,
  title = {On the {{Schr\"odinger Equation}} for {{Time-Dependent Hamiltonians}} with a {{Constant Form Domain}}},
  author = {Balmaseda, Aitor and Lonigro, Davide and {P{\'e}rez-Pardo}, Juan Manuel},
  year = {2022},
  journal = {Mathematics},
  volume = {10},
  number = {2},
  pages = {218},
  publisher = {{Multidisciplinary Digital Publishing Institute}},
  issn = {2227-7390},
  doi = {10.3390/math10020218},
  copyright = {http://creativecommons.org/licenses/by/3.0/},
  keywords = {Hilbert scales,Schr\"odinger equation,time-dependent domain,time-dependent Hamiltonian}
}

@article{AugierBoscainSigalotti2022,
  title = {Effective Adiabatic Control of a Decoupled {{Hamiltonian}} Obtained by Rotating Wave Approximation},
  author = {Augier, Nicolas and Boscain, Ugo and Sigalotti, Mario},
  year = {2022},
  journal = {Automatica},
  volume = {136},
  pages = {110034},
  publisher = {{Elsevier}}
}

@article{DucaJolyTuraev2023,
  title = {Control of the {{Schr\"odinger}} Equation by Slow Deformations of the Domain},
  author = {Duca, Alessandro and Joly, Romain and Turaev, Dmitry},
  year = {2023},
  journal = {Annales de l'Institut Henri Poincar\'e, Analyse non lineaire, 2023, in press.},
  keywords = {adiabatic control,Fermi acceleration,global approximate controllability,PDEs on moving domains,Schr\"odinger equation}
}

@article{RobinAugierBoscainEtAl2022,
  title = {Ensemble Qubit Controllability with a Single Control via Adiabatic and Rotating Wave Approximations},
  author = {Robin, R{\'e}mi and Augier, Nicolas and Boscain, Ugo and Sigalotti, Mario},
  year = {2022},
  journal = {Journal of Differential Equations},
  volume = {318},
  pages = {414--442},
  publisher = {{Elsevier}}
}

@online{BalmasedaLonigroPerezPardo2022,
  title={On global approximate controllability of a quantum particle in a box by moving walls},
  author={Balmaseda, Aitor and Lonigro, Davide and P{\'e}rez-Pardo, Juan Manuel},
  eprint={2208.13475},
  eprinttype = {arxiv},
  year={2022},
  keywords={IP},
}

@phdthesis{BalmasedaPhD,
  title={Quantum control at the boundary},
  author={Balmaseda, Aitor},
  year={2021},
  school={Universidad Carlos III de Madrid},
  eprint={2201.05480},
  eprinttype = {arxiv},
}

@article{Beauchard2005a,
  title = {Local {{Controllability}} of a 1-{{D Schr\"odinger Equation}}},
  author = {Beauchard, K.},
  year = {2005},
  month = jul,
  journal = {Journal de Math\'ematiques Pures et Appliqu\'ees},
  volume = {84},
  number = {7},
  pages = {851--956},
  issn = {0021-7824},
  doi = {10.1016/j.matpur.2005.02.005},
  abstract = {We consider a nonrelativistic charged particle in a 1-D box of potential. This quantum system is subject to a control, which is a uniform electric field. It is represented by a complex probability amplitude solution of a Schr\"odinger equation. We prove the local controllability of this nonlinear system around the ground state. Our proof uses the return method, a Nash\textendash{} Moser implicit function theorem and moment theory. R\'esum\'e On consid\`ere une particule charg\'ee non relativiste dans un puits de potentiel en dimension un d'espace. Ce syst\`eme quantique est soumis \`a un champ \'electrique uniforme, qui constitue un contr\^ole. Il est repr\'esent\'e par une densit\'e de probabilit\'e complexe, solution d'une \'equation de Schr\"odinger. On d\'emontre la contr\^olabilit\'e locale de ce syst\`eme non-lin\'eaire au voisinage de l'\'etat fondamental. La d\'emonstration utilise la m\'ethode du retour, un th\'eor\`eme de Nash\textendash{} Moser et la th\'eorie des moments.},
  keywords = {Controllability,Moment theory,Nash–Moser theorem,Schrödinger equation}
}

@article{BeauchardCoron2006a,
  title = {Controllability of a {{Quantum Particle}} in a {{Moving Potential Well}}},
  author = {Beauchard, K. and Coron, J.},
  year = {2006},
  month = mar,
  journal = {Journal of Functional Analysis},
  volume = {232},
  number = {2},
  pages = {328--389},
  issn = {0022-1236},
  doi = {10.1016/j.jfa.2005.03.021},
  abstract = {We consider a nonrelativistic charged particle in a 1D moving potential well. This quantum system is subject to a control, which is the acceleration of the well. It is represented by a wave function solution of a Schr\"odinger equation, the position of the well together with its velocity. We prove the following controllability result for this bilinear control system: given {$\psi$}0 close enough to an eigenstate and {$\psi$}f close enough to another eigenstate, the wave function can be moved exactly from {$\psi$}0 to {$\psi$}f in finite time. Moreover, we can control the position and the velocity of the well. Our proof uses moment theory, a Nash\textendash{} Moser implicit function theorem, the return method and expansion to the second order.},
  keywords = {Controllability,Nash–Moser,Schrödinger}
}

@article{beauchard2010local,
  title={Local controllability of 1D linear and nonlinear Schr{\"o}dinger equations with bilinear control},
  author={Beauchard, Karine and Laurent, Camille},
  journal={Journal de math{\'e}matiques pures et appliqu{\'e}es},
  volume={94},
  number={5},
  pages={520--554},
  year={2010},
  publisher={Elsevier}
}

@article{beauchard2010controllability,
  title={Controllability issues for continuous-spectrum systems and ensemble controllability of Bloch equations},
  author={Beauchard, Karine and Coron, Jean-Michel and Rouchon, Pierre},
  journal={Communications in Mathematical Physics},
  volume={296},
  number={2},
  pages={525--557},
  year={2010},
  publisher={Springer}
}

@article{mirrahimi2009lyapunov,
  title={Lyapunov control of a quantum particle in a decaying potential},
  author={Mirrahimi, Mazyar},
  journal={Annales de l'IHP Analyse non lin{\'e}aire},
  volume={26},
  number={5},
  pages={1743--1765},
  year={2009}
}

@article{nersesyan2010global,
  title={Global approximate controllability for Schr{\"o}dinger equation in higher Sobolev norms and applications},
  author={Nersesyan, Vahagn},
  journal={Annales de l'Institut Henri Poincar{\'e} C, Analyse Non Lin{\'e}aire},
  volume={27},
  number={3},
  pages={901--915},
  year={2010},
  organization={Elsevier}
}

@article{BehrndtLanger2012,
  title = {Elliptic Operators, {{Dirichlet-to-Neumann}} Maps and Quasi Boundary Triples},
  author = {Behrndt, Jussi and Langer, Matthias},
  year = {2012},
  journal = {Operator Methods for Boundary Value Problems, London Math. Soc. Lecture Note Series},
  volume = {404},
  pages = {121--160}
}

@book{Berezanskii1968,
  ids = {Berezanskii1968a},
  title = {Expansions in {{Eigenfunctions}} of {{Self}}-{{Adjoint Operators}}},
  author = {Berezanskii, Y.M.},
  year = {1968},
  series = {Translations of {{Mathematical Monographs}}},
  volume = {17},
  publisher = {{American Mathematical Society}},
  file = {C\:\\Users\\Aitor\\Zotero\\storage\\BAXXRHBI\\Berezansky - 1968 - Expansions in Eigenfunctions of Selfadjoint Operat.djvu}
}

@book{BerkolaikoKuchment2012,
  title = {Introduction to {{Quantum Graphs}}},
  author = {Berkolaiko, G. and Kuchment, P.},
  year = {2012},
  series = {Mathematical {{Surveys}} and {{Monographs}}},
  volume = {186},
  publisher = {{American Mathematical Society}},
  address = {{Providence, RI}},
  doi = {10.1090/surv/186},
  isbn = {978-0-8218-9211-4 978-0-8218-9455-2},
  language = {en},
  file = {C\:\\Users\\Aitor\\Zotero\\storage\\5PW88YFN\\Berkolaiko y Kuchment - 2012 - Introduction to Quantum Graphs.pdf}
}

@article{BoscainCaponigroChambrionEtAl2012,
  title = {A {{Weak Spectral Condition}} for the {{Controllability}} of the {{Bilinear Schr\"odinger Equation}} with {{Application}} to the {{Control}} of a {{Rotating Planar Molecule}}},
  author = {Boscain, U. and Caponigro, M. and Chambrion, T. and Sigalotti, M.},
  year = {2012},
  month = apr,
  journal = {Communications in Mathematical Physics},
  volume = {311},
  number = {2},
  pages = {423--455},
  issn = {0010-3616, 1432-0916},
  doi = {10.1007/s00220-012-1441-z},
  abstract = {In this paper we prove an approximate controllability result for the bilinear Schr\"odinger equation. This result requires less restrictive non-resonance hypotheses on the spectrum of the uncontrolled Schr\"odinger operator than those present in the literature. The control operator is not required to be bounded and we are able to extend the controllability result to the density matrices. The proof is based on fine controllability properties of the finite dimensional Galerkin approximations and allows to get estimates for the L1 norm of the control. The general controllability result is applied to the problem of controlling the rotation of a bipolar rigid molecule confined on a plane by means of two orthogonal external fields.},
  language = {en},
  file = {C\:\\Users\\Aitor\\Zotero\\storage\\RMGFFPV6\\Boscain et al. - 2012 - A Weak Spectral Condition for the Controllability .pdf;C\:\\Users\\Aitor\\Zotero\\storage\\SLH5KR64\\Boscain et al. - 2012 - A Weak Spectral Condition for the Controllability .pdf}
}

@inproceedings{BoscainChambrionSigalotti2013,
  title = {On Some Open Questions in Bilinear Quantum Control},
  booktitle = {European Control Conference ({{ECC}})},
  author = {Boscain, Ugo and Chambrion, Thomas and Sigalotti, Mario},
  year = {2013},
  pages = {2080--2085},
  address = {{Zurich, Switzerland}},
  hal_id = {hal-00818216},
  hal_version = {v1},
  keywords = {bilinear Schr\"odinger equations,Quantum control}
}

@article{BruningGeylerPankrashkin2008,
  title = {Spectra of {{Self}}-{{Adjoint Extensions}} and {{Applications}} to {{Solvable Schr\"odinger Operators}}},
  author = {Br{\"u}ning, J. and Geyler, V. and Pankrashkin, K.},
  year = {2008},
  month = feb,
  journal = {Reviews in Mathematical Physics},
  volume = {20},
  number = {01},
  pages = {1--70},
  issn = {0129-055X},
  doi = {10.1142/S0129055X08003249},
  abstract = {We give a self-contained presentation of the theory of self-adjoint extensions using the technique of boundary triples. A description of the spectra of self-adjoint extensions in terms of the corresponding Krein maps (Weyl functions) is given. Applications include quantum graphs, point interactions, hybrid spaces and singular perturbations.},
  file = {C\:\\Users\\Aitor\\Zotero\\storage\\DP6RYTJX\\Brüning et al. - 2008 - Spectra of Self-Adjoint Extensions and Application.pdf}
}

@article{CarlonePosilicano2017,
  title = {A {{Quantum Hybrid}} with a {{Thin Antenna}} at the {{Vertex}} of a {{Wedge}}},
  author = {Carlone, R. and Posilicano, A.},
  year = {2017},
  journal = {Phys. Letters A},
  volume = {381},
  number = {12},
  pages = {1076--1080}
}

@article{CarrascoRoganValdivia2017,
  title={Controlling the Quantum State with a time varying potential},
  author={Carrasco, Sebasti{\'a}n and Rogan, Jos{\'e} and Valdivia, Juan Alejandro},
  journal={Scientific reports},
  volume={7},
  number={1},
  pages={1--7},
  year={2017},
  publisher={Nature Publishing Group}
}

@article{ChambrionMasonSigalottiEtAl2009,
  title = {Controllability of the {{Discrete}}-{{Spectrum Schr\"odinger Equation Driven}} by an {{External Field}}},
  author = {Chambrion, T. and Mason, P. and Sigalotti, M. and Boscain, U.},
  year = {2009},
  month = jan,
  journal = {Annales de l'Institut Henri Poincare (C) Non Linear Analysis},
  volume = {26},
  number = {1},
  pages = {329--349},
  issn = {02941449},
  doi = {10.1016/j.anihpc.2008.05.001},
  abstract = {We prove approximate controllability of the bilinear Schr\"odinger equation in the case in which the uncontrolled Hamiltonian has discrete non-resonant spectrum. The results that are obtained apply both to bounded or unbounded domains and to the case in which the control potential is bounded or unbounded. The method relies on finite-dimensional techniques applied to the Galerkin approximations and permits, in addition, to get some controllability properties for the density matrix. Two examples are presented: the harmonic oscillator and the 3D well of potential, both controlled by suitable potentials.},
  language = {en},
  file = {C\:\\Users\\Aitor\\Zotero\\storage\\5EA55GPF\\Chambrion et al. - 2009 - Controllability of the Discrete-Spectrum Schröding.pdf}
}

@article{cheon2004quantum,
  title={Quantum abacus},
  author={Cheon, Taksu and Tsutsui, Izumi and F{\"u}l{\"o}p, Tam{\'a}s},
  journal={Physics Letters A},
  volume={330},
  number={5},
  pages={338--342},
  year={2004},
  publisher={Elsevier}
}

@book{DAlessandro2007,
  title = {Introduction to {{Quantum Control}} and {{Dynamics}}},
  author = {D'Alessandro, D.},
  year = {2007},
  month = aug,
  publisher = {{CRC Press}},
  abstract = {The introduction of control theory in quantum mechanics has created a rich, new interdisciplinary scientific field, which is producing novel insight into important theoretical questions at the heart of quantum physics. Exploring this emerging subject, Introduction to Quantum Control and Dynamics presents the mathematical concepts and fundamental physics behind the analysis and control of quantum dynamics, emphasizing the application of Lie algebra and Lie group theory.After introducing the basics of quantum mechanics, the book derives a class of models for quantum control systems from fundamental physics. It examines the controllability and observability of quantum systems and the related problem of quantum state determination and measurement. The author also uses Lie group decompositions as tools to analyze dynamics and to design control algorithms. In addition, he describes various other control methods and discusses topics in quantum information theory that include entanglement and entanglement dynamics. The final chapter covers the implementation of quantum control and dynamics in several fields.Armed with the basics of quantum control and dynamics, readers will invariably use this interdisciplinary knowledge in their mathematical, physics, and engineering work.},
  isbn = {978-1-58488-883-3},
  language = {en},
  keywords = {Mathematics / Applied,Mathematics / General,Science / Physics / Quantum Theory,Technology \& Engineering / Electrical,Technology \& Engineering / Power Resources / General},
  file = {C\:\\Users\\Aitor\\Zotero\\storage\\76U72XEM\\D'Alessandro - 2007 - Introduction to Quantum Control and Dynamics.pdf}
}

@inproceedings{DellAntonioFigariTeta2000,
  title = {The {{Schr\"odinger}} Equation with Moving Point Interactions in Three Dimensions},
  booktitle = {Stochastic {{Processes}}, {{Physics}} and {{Geometry}}: {{New Interplays}}, {{I}} ({{Leipzig}}, 1999)},
  author = {Dell'Antonio, G.F. and Figari, R. and Teta, A.},
  year = {2000},
  series = {{{CMS Conference Proceedings}}},
  volume = {28},
  pages = {99--113},
  address = {{American Mathematical Society, Providence, RI}}
}

@article{DiMartinoAnzaFacchiEtAl2013,
  title = {A Quantum Particle in a Box with Moving Walls},
  author = {Di Martino, S. and Anza, F. and Facchi, P. and Kossakowski, A. and Marmo, G. and Messina, A. and Militello, B. and Pascazio, S.},
  year = {2013},
  journal = {J. Phys. A: Math. Theor.},
  volume = {46},
  number = {36},
  pages = {365301}
}

@article{ErvedozaPuel2009,
  title = {Approximate {{Controllability}} for a {{System}} of {{Schr\"odinger Equations Modeling}} a {{Single Trapped Ion}}},
  author = {Ervedoza, S. and Puel, J.P.},
  year = {2009},
  journal = {Ann. Inst. H. Poincar\'e Anal. Non Lin\'eaire},
  volume = {26},
  pages = {2111--2136}
}

@article{ExnerSeba1987,
  title = {Quantum Motion on a Halfline Connected to a Plabe},
  author = {Exner, P. and Seba, P.},
  year = {1987},
  journal = {J. Math. Phys.},
  volume = {28},
  pages = {386--391}
}

@inproceedings{exner1989electrons,
  title={Electrons in semiconductor microstructures: a challenge to operator theorists},
  author={Exner, P and Seba, P},
  booktitle={Proceedings of the Workshop on Schr{\"o}dinger Operators, Standard and Nonstandard (Dubna 1988), World Scientific, Singapore},
  pages={79--100},
  year={1989}
}

@article{ExnerPost13,
  title = {A general approximation of quantum graph vertex couplings by scaled Schrödinger operators on thin branched manifolds},
  author = {Exner, P. and Post, O.},
  year = {2013},
  journal = {Commun. Math. Phys.},
  volume = {322},
  pages ={207--227}
}

@article{GlaserBoscainCalarcoEtAl2015,
  title = {Training {{Schr\"odinger}}'s {{Cat}}: {{Quantum Optimal Control}}},
  author = {Glaser, S.J. and Boscain, U. and Calarco, T. and Koch, C.P. and K{\"o}ckenberger, W. and Kosloff, R. and Kuprov, I. and Luy, B. and Schirmer, S. and {Schulte-Herbr{\"u}ggen}, T. and Sugny, D. and Wilhelm, F.K.},
  year = {2015},
  month = dec,
  journal = {The European Physical Journal D},
  volume = {69},
  number = {12},
  pages = {279},
  issn = {1434-6079},
  doi = {10.1140/epjd/e2015-60464-1},
  abstract = {It is control that turns scientific knowledge into useful technology: in physics and engineering it provides a systematic way for driving a dynamical system from a given initial state into a desired target state with minimized expenditure of energy and resources. As one of the cornerstones for enabling quantum technologies, optimal quantum control keeps evolving and expanding into areas as diverse as quantum-enhanced sensing, manipulation of single spins, photons, or atoms, optical spectroscopy, photochemistry, magnetic resonance (spectroscopy as well as medical imaging), quantum information processing and quantum simulation. In this communication, state-of-the-art quantum control techniques are reviewed and put into perspective by a consortium of experts in optimal control theory and applications to spectroscopy, imaging, as well as quantum dynamics of closed and open systems. We address key challenges and sketch a roadmap for future developments.}
}

@book{Grisvard1985,
  title = {Elliptic {{Problems}} in {{Nonsmooth Domains}}},
  author = {Grisvard, P.},
  year = {1985},
  series = {Classics in {{Applied Mathematics}}},
  publisher = {{Society for Industrial and Applied Mathematics}},
  doi = {10.1137/1.9781611972030},
  abstract = {Since the publication of Pierre Grisvard's monograph in 1985, the theory of elliptic problems in nonsmooth domains has become increasingly important for research in partial differential equations and their numerical solutions. While significant advances have occurred during the last two decades, Grisvard's monograph remains an excellent introduction to the subject and a good source for the basic material.},
  isbn = {978-1-61197-202-3},
  file = {C\:\\Users\\Aitor\\Zotero\\storage\\5HUT7A3S\\Grisvard - 1985 - Elliptic Problems in Nonsmooth Domains.djvu}
}

@article{Grubb1968,
  title = {A {{Characterization}} of the {{Non Local Boundary Value Problems Associated}} with an {{Elliptic Operator}}},
  author = {Grubb, G.},
  year = {1968},
  journal = {Annali della Scuola Normale Superiore di Pisa - Classe di Scienze},
  volume = {22},
  number = {3},
  pages = {425--513},
  language = {eng},
  keywords = {partial differential equations},
  file = {C\:\\Users\\Aitor\\Zotero\\storage\\ATXM8GNT\\Grubb - 1968 - A Characterization of the Non Local Boundary Value.pdf}
}

@article{IbortLlavonaLledoEtAl2021,
  title = {Representation of Non-Semibounded Quadratic Forms and Orthogonal Additivity},
  author = {Ibort, A. and Llavona, J.G. and Lled{\'o}, F. and {P{\'e}rez-Pardo}, J.M.},
  year = {2021},
  month = mar,
  journal = {Journal of Mathematical Analysis and Applications},
  volume = {495},
  number = {2},
  pages = {124783},
  issn = {0022-247X},
  doi = {10.1016/j.jmaa.2020.124783},
  abstract = {A representation theorem for non-semibounded Hermitian quadratic forms in terms of a (non-semibounded) self-adjoint operator is proven. The main assumptions are closability of the Hermitian quadratic form, the direct integral structure of the underlying Hilbert space and orthogonal additivity. We apply this result to several examples, including the position operator in quantum mechanics and quadratic forms invariant under a unitary representation of a separable locally compact group. The case of invariance under a compact group is also discussed in detail.},
  language = {en},
  keywords = {Direct integrals,Orthogonal additivity,Representations non-semibounded quadratic forms,Spectral theorem},
  file = {C\:\\Users\\Aitor\\Zotero\\storage\\ZBV7PES7\\Ibort et al. - 2021 - Representation of non-semibounded quadratic forms .pdf;C\:\\Users\\Aitor\\Zotero\\storage\\9RU9CRF7\\S0022247X2030946X.html}
}

@article{IbortLledoPerezPardo2015,
  title = {Self-{{Adjoint Extensions}} of the {{Laplace}}\textendash{{Beltrami Operator}} and {{Unitaries}} at the {{Boundary}}},
  author = {Ibort, A. and Lled{\'o}, F. and {P{\'e}rez-Pardo}, J.M.},
  year = {2015},
  month = feb,
  journal = {Journal of Functional Analysis},
  volume = {268},
  number = {3},
  pages = {634--670},
  issn = {00221236},
  doi = {10.1016/j.jfa.2014.10.013},
  abstract = {We construct in this article a class of closed semi-bounded quadratic forms on the space of square integrable functions over a smooth Riemannian manifold with smooth boundary. Each of these quadratic forms specifies a semi-bounded self-adjoint extension of the Laplace-Beltrami operator. These quadratic forms are based on the Lagrange boundary form on the manifold and a family of domains parametrized by a suitable class of unitary operators on the boundary that will be called admissible. The corresponding quadratic forms are semi-bounded below and closable. Finally, the representing operators correspond to semi-bounded self-adjoint extensions of the Laplace-Beltrami operator. This family of extensions is compared with results existing in the literature and various examples and applications are discussed.},
  language = {en}
}

@article{IbortLledoPerezPardo2015a,
  title = {On {{Self}}-{{Adjoint Extensions}} and {{Symmetries}} in {{Quantum Mechanics}}},
  author = {Ibort, A. and Lled{\'o}, F. and {P{\'e}rez-Pardo}, J.M.},
  year = {2015},
  month = oct,
  journal = {Annales Henri Poincar\'e},
  volume = {16},
  number = {10},
  pages = {2367--2397},
  issn = {1424-0637, 1424-0661},
  doi = {10.1007/s00023-014-0379-4},
  abstract = {Given a unitary representation of a Lie group G on a Hilbert space H, we develop the theory of G-invariant self-adjoint extensions of symmetric operators using both von Neumann's theorem and the theory of quadratic forms. We also analyze the relation between the reduction theory of the unitary representation and the reduction of the G-invariant unbounded operator. We also prove a G-invariant version of the representation theorem for closed and semi-bounded quadratic forms. The previous results are applied to the study of G-invariant self-adjoint extensions of the Laplace\textendash Beltrami operator on a smooth Riemannian manifold with boundary on which the group G acts. These extensions are labeled by admissible unitaries U acting on the L2-space at the boundary and having spectral gap at -1. It is shown that if the unitary representation V of the symmetry group G is traceable, then the self-adjoint extension of the Laplace\textendash Beltrami operator determined by U is G-invariant if U and V commute at the boundary. Various significant examples are discussed at the end.},
  language = {en},
  file = {C\:\\Users\\Aitor\\Zotero\\storage\\WADNNP67\\Ibort et al. - 2015 - On Self-Adjoint Extensions and Symmetries in Quant.pdf}
}

@article{IbortMarmoPerezPardo2014a,
  title = {Boundary {{Dynamics Driven Entanglement}}},
  author = {Ibort, A. and Marmo, G. and {P{\'e}rez-Pardo}, J.M.},
  year = {2014},
  month = sep,
  journal = {Journal of Physics A: Mathematical and Theoretical},
  volume = {47},
  number = {38},
  pages = {385301},
  issn = {1751-8113, 1751-8121},
  doi = {10.1088/1751-8113/47/38/385301},
  language = {en}
}

@article{IbortPerezPardo2009,
  title = {Quantum {{Control}} and {{Representation Theory}}},
  author = {Ibort, A. and {P{\'e}rez-Pardo}, J.M.},
  year = {2009},
  month = may,
  journal = {Journal of Physics A: Mathematical and Theoretical},
  volume = {42},
  number = {20},
  eprint = {1203.2380},
  eprinttype = {arxiv},
  pages = {205301},
  issn = {1751-8113, 1751-8121},
  doi = {10.1088/1751-8113/42/20/205301},
  abstract = {A new notion of controllability for quantum systems that takes advantage of the linear superposition of quantum states is introduced. We call such notion von Neumann controllabilty and it is shown that it is strictly weaker than the usual notion of pure state and operator controlability. We provide a simple and effective characterization of it by using tools from the theory of unitary representations of Lie groups. In this sense we are able to approach the problem of control of quantum states from a new perspective, that of the theory of unitary representations of Lie groups. A few examples of physical interest and the particular instances of compact and nilpotent dynamical Lie groups are discussed.},
  archiveprefix = {arXiv},
  language = {en},
  keywords = {Mathematical Physics,Mathematics - Optimization and Control,Quantum Physics},
  file = {C\:\\Users\\Aitor\\Zotero\\storage\\7AGAT8SF\\Ibort y Pérez-Pardo - 2009 - Quantum Control and Representation Theory.pdf}
}

@article{IbortPerezPardo2013,
  title = {Numerical {{Solutions}} of the {{Spectral Problem}} for {{Arbitrary Self}}-{{Adjoint Extensions}} of the {{One}}-{{Dimensional Schr\"odinger Equation}}},
  author = {Ibort, A. and {P{\'e}rez-Pardo}, J. M.},
  year = {2013},
  month = jan,
  journal = {SIAM Journal on Numerical Analysis},
  volume = {51},
  number = {2},
  pages = {1254--1279},
  issn = {0036-1429, 1095-7170},
  doi = {10.1137/110856800},
  abstract = {A numerical algorithm to solve the spectral problem for arbitrary self-adjoint extensions of one-dimensional regular Schro\textasciidieresis dinger operators is presented. It is shown that the set of all self-adjoint extensions of one-dimensional regular Schro\textasciidieresis dinger operators is in one-to-one correspondence with the group of unitary operators on the finite-dimensional Hilbert space of boundary data, and they are characterized by a generalized class of boundary conditions that include the well-known Dirichlet, Neumann, Robin, and (quasi-)periodic boundary conditions. The numerical algorithm is based on a nonlocal boundary extension of the finite element method and their convergence is proved. An appropriate basis of boundary functions must be introduced to deal with arbitrary boundary conditions and the conditioning of its computation is analyzed. Some significant numerical experiments are also discussed as well as the comparison with some standard algorithms. In particular it is shown that appropriate perturbations of standard boundary conditions for the free particle lead to the theoretically predicted result of very large absolute values of the negative groundlevels of the system as well as the localization of the corresponding eigenvectors at the boundary (edge states).},
  language = {en},
  file = {C\:\\Users\\Aitor\\Zotero\\storage\\Q3AIPTSB\\Ibort y Pérez-Pardo - 2013 - Numerical Solutions of the Spectral Problem for Ar.pdf;C\:\\Users\\Aitor\\Zotero\\storage\\WKW8FVFB\\Ibort y Pérez-Pardo - 2013 - Numerical Solutions of the Spectral Problem for Ar.pdf}
}

@article{IbortPerezPardo2015,
  title = {On the Theory of Self-Adjoint Extensions of Symmetric Operators and Its Applications to Quantum Physics},
  author = {Ibort, A. and {P{\'e}rez-Pardo}, J.M.},
  year = {2015},
  journal = {International Journal of Geometric Methods in Modern Physics},
  volume = {12},
  number = {06},
  pages = {1560005},
  doi = {10.1142/S0219887815600051},
  abstract = {This is a series of five lectures around the common subject of the construction of self-adjoint extensions of symmetric operators and its applications to Quantum Physics. We will try to offer a brief account of some recent ideas in the theory of self-adjoint extensions of symmetric operators on Hilbert spaces and their applications to a few specific problems in Quantum Mechanics.},
  archiveprefix = {arXiv},
  keywords = {Mathematical Physics,Quantum Physics},
  file = {C\:\\Users\\Aitor\\Zotero\\storage\\KPVTGN8D\\Ibort y Perez-Pardo - 2015 - On the theory of self-adjoint extensions of symmet.pdf}
}

@book{Jurdjevic1996,
  title = {Geometric {{Control Theory}}},
  author = {Jurdjevic, V.},
  year = {1996},
  series = {Cambridge {{Studies}} in {{Advanced Mathematics}}},
  publisher = {{Cambridge University Press}},
  doi = {10.1017/CBO9780511530036},
  file = {C\:\\Users\\Aitor\\Zotero\\storage\\LSX9HZLT\\Jurdjevic - 1996 - Geometric Control Theory.djvu}
}

@article{Kato1973,
  title = {Linear Evolution Equations of ``Hyperbolic'' Type. {{II}}.},
  author = {Kato, T.},
  year = {1973},
  journal = {J. Math. Soc. Japan},
  volume = {25},
  pages = {648--666},
  language = {en}
}

@book{Kato1995,
  title = {Perturbation {{Theory}} for {{Linear Operators}}},
  author = {Kato, T.},
  year = {1995},
  series = {Classics in {{Mathematics}}},
  publisher = {{Springer}},
  address = {{Berlin}},
  isbn = {978-3-540-58661-6},
  language = {en},
  lccn = {QA329.2 .K37 1995},
  keywords = {Linear operators,Perturbation (Mathematics)},
  file = {C\:\\Users\\Aitor\\Zotero\\storage\\X5TD5RXE\\Katō - 1995 - Perturbation theory for linear operators.pdf}
}

@article{Kisynski1964,
  ids = {Kisynski1964a},
  title = {{Sur les op\'erateurs de Green des probl\`emes de Cauchy abstraits}},
  author = {Kisy{\'n}ski, J.},
  year = {1964},
  journal = {Studia Mathematica},
  volume = {3},
  number = {23},
  pages = {285--328},
  issn = {0039-3223},
  language = {French},
  file = {C\:\\Users\\Aitor\\Zotero\\storage\\2KE69K7A\\Kisyński - 1964 - Sur les opérateurs de Green des problèmes de Cauch.pdf;C\:\\Users\\Aitor\\Zotero\\storage\\VGWNST3B\\bwmeta1.element.html}
}

@article{Kochubei1975,
  title = {Extensions of {{Symmetric Operators}} and {{Symmetric Binary Relations}}},
  author = {Kochubei, A.N.},
  year = {1975},
  month = jan,
  journal = {Mathematical notes of the Academy of Sciences of the USSR},
  volume = {17},
  number = {1},
  pages = {25--28},
  issn = {1573-8876},
  doi = {10.1007/BF01093837},
  abstract = {Various classes of extensions of symmetric operators with equal (finite or infinite) defect numbers are described in terms of abstract boundary conditions. The dual problem of the description of extensions of a symmetric binary relation is also considered.}
}

@article{KostrykinSchrader2003,
  title = {Quantum {{Wires}} with {{Magnetic Fluxes}}},
  author = {Kostrykin, V. and Schrader, R.},
  year = {2003},
  month = jun,
  journal = {Communications in Mathematical Physics},
  volume = {237},
  number = {1},
  eprint = {math-ph/0212001},
  eprinttype = {arxiv},
  pages = {161--179},
  issn = {0010-3616, 1432-0916},
  doi = {10.1007/s00220-003-0831-7},
  abstract = {In the present article magnetic Laplacians on a graph are analyzed. We provide a complete description of the set of all operators which can be obtained from a given self-adjoint Laplacian by perturbing it by magnetic fields. In particular, it is shown that generically this set is isomorphic to a torus. We also describe the conditions under which the operator is unambiguously (up to unitary equivalence) defined by prescribing the magnetic fluxes through all loops of the graph.},
  archiveprefix = {arXiv},
  language = {en},
  keywords = {Mathematical Physics}
}

@article{Kuchment2004,
  title = {Quantum Graphs: {{I}}. {{Some}} Basic Structures},
  author = {Kuchment, P.},
  year = {2004},
  journal = {Waves in Random Media},
  volume = {14},
  number = {1},
  eprint = {https://www.tandfonline.com/doi/pdf/10.1088/0959-7174/14/1/014},
  pages = {S107-S128},
  publisher = {{Taylor \& Francis}},
  doi = {10.1088/0959-7174/14/1/014}
}

@article{kuchment2002differential,
  title={Differential operators on graphs and photonic crystals},
  author={Kuchment, Peter and Kunyansky, L},
  journal={Advances in Computational Mathematics},
  volume={16},
  number={2},
  pages={263--290},
  year={2002},
  publisher={Springer}
}

@article{LawEberly1996,
  title = {Arbitrary Control of a Quantum Electromagnetic Field},
  author = {Law, C.K. and Eberly, J.H.},
  year = {1996},
  month = feb,
  journal = {Physical Review Letters},
  volume = {76},
  number = {7},
  pages = {1055--1058},
  publisher = {{American Physical Society}},
  doi = {10.1103/PhysRevLett.76.1055}
}

@book{LionsMagenes1972,
  title = {Non-{{Homogeneous Boundary Value Problems}} and {{Applications}}},
  author = {Lions, J.L. and Magenes, E.},
  year = {1972},
  publisher = {{Springer Science \& Business Media}},
  isbn = {978-3-642-65161-8},
  language = {en},
  keywords = {Mathematics / Calculus,Mathematics / Mathematical Analysis},
  file = {C\:\\Users\\Aitor\\Zotero\\storage\\8N2N22F9\\Lions y Magenes - 1972 - Non-Homogeneous Boundary Value Problems and Applic.pdf}
}

@book{Lions1971,
  title = {Optimal Control of Systems Governed by Partial Differential Equations},
  author = {Lions, J.L.},
  year = {1971},
  publisher = {{Springer Berlin, Heidelberg}},
  isbn = {978-3-642-65026-0},
  language = {en},
  keywords = {Control, Control of Systems, Differential Equations, Kontrolle (Math.), Partielle Differentialgleichung},
}

@article{LopezYelaPerezPardo2017,
  title = {Finite Element Method to Solve the Spectral Problem for Arbitrary Self-Adjoint Extensions of the {{Laplace}}\textendash{{Beltrami}} Operator on Manifolds with a Boundary},
  author = {{L{\'o}pez-Yela}, A. and {P{\'e}rez-Pardo}, J.M.},
  year = {2017},
  month = oct,
  journal = {Journal of Computational Physics},
  volume = {347},
  pages = {235--260},
  issn = {00219991},
  doi = {10.1016/j.jcp.2017.06.043},
  abstract = {A numerical scheme to compute the spectrum of a large class of self-adjoint extensions of the Laplace\textendash Beltrami operator on manifolds with boundary in any dimension is presented. The algorithm is based on the characterisation of a large class of self-adjoint extensions of Laplace\textendash Beltrami operators in terms of their associated quadratic forms. The convergence of the scheme is proved. A two-dimensional version of the algorithm is implemented effectively and several numerical examples are computed showing that the algorithm treats in a unified way a wide variety of boundary conditions.},
  language = {en},
  file = {C\:\\Users\\Aitor\\Zotero\\storage\\XUY3QSHU\\López-Yela y Pérez-Pardo - 2017 - Finite element method to solve the spectral proble.pdf;C\:\\Users\\Aitor\\Zotero\\storage\\YKQKUFLN\\López-Yela y Pérez-Pardo - 2017 - Finite element method to solve the spectral proble.pdf}
}

@book{MarsdenAbrahamRatiu1988,
  title = {Manifolds, {{Tensor Analysis}}, and {{Applications}}},
  author = {Marsden, J.E. and Abraham, R. and Ratiu, T.},
  year = {1988},
  series = {Applied {{Mathematical Sciences}}},
  edition = {Third},
  publisher = {{Springer-Verlag}},
  address = {{New York}},
  abstract = {The purpose of this book is to provide core material in nonlinear analysis for mathematicians, physicists, engineers, and mathematical biologists. The main goal is to provide a working knowledge of manifolds, dynamical systems, tensors, and differential forms. Some applications to Hamiltonian mechanics, fluid me\textbackslash - chanics, electromagnetism, plasma dynamics and control thcory arc given in Chapter 8, using both invariant and index notation. The current edition of the book does not deal with Riemannian geometry in much detail, and it does not treat Lie groups, principal bundles, or Morse theory. Some of this is planned for a subsequent edition. Meanwhile, the authors will make available to interested readers supplementary chapters on Lie Groups and Differential Topology and invite comments on the book's contents and development. Throughout the text supplementary topics are given, marked with the symbols and. This device enables the reader to skip various topics without disturbing the main flow of the text. Some of these provide additional background material intended for completeness, to minimize the necessity of consulting too many outside references. We treat finite and infinite-dimensional manifolds simultaneously. This is partly for efficiency of exposition. Without advanced applications, using manifolds of mappings, the study of infinite-dimensional manifolds can be hard to motivate.\vphantom\{\}},
  isbn = {978-0-387-96790-5},
  language = {en},
  file = {C\:\\Users\\Aitor\\Zotero\\storage\\XNKUEUBY\\Marsden et al. - Manifolds, Tensor Analysis, and Applications.pdf;C\:\\Users\\Aitor\\Zotero\\storage\\D26PBCZH\\9780387967905.html}
}

@article{MasonSigalotti2010,
  ids = {MasonSigalotti2009a},
  title = {Generic Controllability Properties for the Bilinear {{Schr\"odinger}} Equation},
  author = {Mason, P. and Sigalotti, M.},
  year = {2010},
  journal = {Part. Diff. Eqs.},
  volume = {35},
  eprint = {0911.5650},
  eprinttype = {arxiv},
  pages = {685--706},
  archiveprefix = {arXiv},
  keywords = {Mathematics - Analysis of PDEs,Mathematics - Optimization and Control}
}

@inbook{MolchanovVainberg2006,
  title = {Transition from a Network of Thin Fibers to the Quantum Graph: An Explicitly Solvable Model},
  booktitle = {Quantum Graphs and Their Applications},
  author = {Molchanov, Stanislav and Vainberg, Boris},
  year = {2006},
  series = {Contemp. {{Math}}.},
  number = {415},
  pages = {227--239},
  publisher = {{Amer. Math. Soc.}},
  address = {{Providence, RI}}
}

@article{PerezPardo2017,
  title = {Dirac-like Operators on the {{Hilbert}} Space of Differential Forms on Manifolds with Boundaries},
  author = {{P{\'e}rez-Pardo}, J.M.},
  year = {2017},
  month = may,
  journal = {International Journal of Geometric Methods in Modern Physics},
  volume = {14},
  number = {08},
  pages = {1740004},
  issn = {0219-8878},
  doi = {10.1142/S0219887817400047},
  abstract = {The problem of self-adjoint extensions of Dirac-type operators in manifolds with boundaries is analyzed. The boundaries might be regular or non-regular. The latter situation includes point-like interactions, also called delta-like potentials, in manifolds of dimension higher than one. Self-adjoint boundary conditions for the case of dimension 2 are obtained explicitly.},
  file = {C\:\\Users\\Aitor\\Zotero\\storage\\RBNCUP89\\Pérez-Pardo - 2017 - Dirac-like operators on the Hilbert space of diffe.pdf;C\:\\Users\\Aitor\\Zotero\\storage\\C583HQI6\\S0219887817400047.html}
}

@article{PerezPardoBarberoLinanIbort2015,
  title = {Boundary {{Dynamics}} and {{Topology Change}} in {{Quantum Mechanics}}},
  author = {{P{\'e}rez-Pardo}, J.M. and {Barbero-Li{\~n}{\'a}n}, M. and Ibort, A.},
  year = {2015},
  month = sep,
  journal = {International Journal of Geometric Methods in Modern Physics},
  volume = {12},
  number = {08},
  eprint = {1501.02826},
  eprinttype = {arxiv},
  pages = {1560011},
  issn = {0219-8878, 1793-6977},
  doi = {10.1142/S0219887815600117},
  abstract = {We show how to use boundary conditions to drive the evolution on a Quantum Mechanical system. We will see how this problem can be expressed in terms of a time-dependent Schr\"odinger equation. In particular we will need the theory of self-adjoint extensions of differential operators in manifolds with boundary. An introduction of the latter as well as meaningful examples will be given. It is known that different boundary conditions can be used to describe different topologies of the associated quantum systems. We will use the previous results to study how this topology change can be accomplished in a dynamical way.},
  archiveprefix = {arXiv},
  keywords = {Quantum Physics},
  file = {C\:\\Users\\Aitor\\Zotero\\storage\\MM9JUN43\\Pérez-Pardo et al. - 2015 - Boundary dynamics and topology change in quantum m.pdf;C\:\\Users\\Aitor\\Zotero\\storage\\XY4USHDI\\1501.html}
}

@article{PrivatSigalotti2010,
  ids = {PrivatSigalotti2010c},
  title = {The Squares of the {{Laplacian}}-{{Dirichlet}} Eigenfunctions Are Generically Linearly Independent},
  author = {Privat, Y. and Sigalotti, M.},
  year = {2010},
  journal = {ESAIM. Control, Optimisation and Calculus of Variations},
  volume = {16},
  number = {3},
  pages = {794--805}
}

@article{PrivatSigalotti2010b,
  ids = {PrivatSigalotti2010a},
  title = {Erratum of ``{{The}} Squares of the {{Laplacian}}-{{Dirichlet}} Eigenfunctions Are Generically Linearly Independent''},
  author = {Privat, Y. and Sigalotti, M.},
  year = {2010},
  journal = {ESAIM. Control, Optimisation and Calculus of Variations},
  volume = {16},
  number = {3},
  pages = {806--807}
}

@article{Rouchon2003,
  title={Control of a quantum particle in a moving potential well},
  author={Rouchon, Pierre},
  journal={IFAC Proceedings Volumes},
  volume={36},
  number={2},
  pages={287--290},
  year={2003},
  publisher={Elsevier}
}

@article{rubinstein1998multiply,
  title={On multiply connected mesoscopic superconducting structures},
  author={Rubinstein, Jacob and Schatzman, Michelle},
  journal={S{\'e}minaire de th{\'e}orie spectrale et g{\'e}om{\'e}trie},
  volume={15},
  pages={207--220},
  year={1998}
}

@inproceedings{rubinstein2006,
  title={Quantum Mechanics, Superconductivity and Fluid Flow in Narrow Networks},
  author={Rubinstein, Jacob},
  booktitle={Quantum Graphs and Their Applications: Proceedings of an AMS-IMS-SIAM Joint Summer Research Conference on Quantum Graphs and Their Applications, June 19-23, 2005, Snowbird, Utah},
  volume={415},
  pages={251},
  year={2006},
  organization={American Mathematical Soc.},
  keywords={NOIP}
}

@book{Simon1971,
  title = {Quantum {{Mechanics}} for {{Hamiltonians Defined}} as {{Quadratic Forms}}},
  author = {Simon, B.},
  year = {1971},
  series = {Princeton {{Series}} in {{Physics}}},
  publisher = {{Princeton Univ Pr}},
  isbn = {978-0-691-08090-1 0-691-08090-9},
  file = {C\:\\Users\\Aitor\\Zotero\\storage\\L349V4B6\\Simon - 1971 - Quantum Mechanics for Hamiltonians Defined as Quad.djvu}
}

@article{StredaKucera2015,
  title = {Orbital momentum and topological phase transformation},
  author = {Streda, P. and Kucera, J.},
  year = {2015},
  journal = {Phys. Rev. B},
  volume = {92},
  number = {2015},
  pages = {235152},
  publisher = {},
  issn = {},
}

@article{StredaVyborny2022,
  title = {Anomalous Hall conductivity and quantum friction},
  author = {Streda, P. and Vyborny, K.},
  year = {2022},
  journal = {Phys. Rev. B},
  volume = {Accepted30 November 2022},
  number = {},
  pages = {},
  publisher = {},
  issn = {},
}

@book{Taylor2011,
  title = {Partial {{Differential Equations}}},
  author = {Taylor, M.E.},
  year = {2011},
  series = {Applied {{Mathematical Sciences}}},
  edition = {Second},
  number = {v. 115-116},
  publisher = {{Springer}},
  address = {{New York}},
  abstract = {This text provides an introduction to the theory of partial differential equations. It introduces basic examples of partial differential equations, arising in continuum mechanics, electromagnetism, complex analysis and other areas, and develops a number of tools for their solution, including particularly Fourier analysis, distribution theory, and Sobolev spaces. These tools are applied to the treatment of basic problems in linear PDE, including the Laplace equation, heat equation, and wave equation, as well as more general elliptic, parabolic, and hyperbolic equations. Companion texts, which take the theory of partial differential equations further, are AMS volume 116, treating more advanced topics in linear PDE, and AMS volume 117, treating problems in nonlinear PDE. \textendash},
  isbn = {978-1-4419-7054-1 978-1-4419-7055-8 978-1-4419-7051-0 978-1-4419-7052-7 978-1-4419-7048-0 978-1-4419-7049-7},
  language = {en},
  lccn = {QA374 .T3797 2011},
  keywords = {Differential equations,Differential equations; Partial,Partial},
  file = {C\:\\Users\\Aitor\\Zotero\\storage\\I48ECFTY\\Taylor - 2011 - Partial differential equations.pdf}
}

@article{tanaka2007quasienergy,
  title={Quasienergy anholonomy and its application to adiabatic quantum state manipulation},
  author={Tanaka, Atushi and Miyamoto, Manabu},
  journal={Physical review letters},
  volume={98},
  number={16},
  pages={160407},
  year={2007},
  publisher={APS}
}

@article{tanaka2010adiabatic,
  title={Adiabatic quantum computation along quasienergies},
  author={Tanaka, Atushi and Nemoto, Kae},
  journal={Physical Review A},
  volume={81},
  number={2},
  pages={022320},
  year={2010},
  publisher={APS}
}

@article{Turinici2000,
  title = {On the {{Controllability}} of {{Bilinear Quantum Systems}}},
  author = {Turinici, G.},
  year = {2000},
  journal = {In M. De Franceschi and C. Le Bris, editors, Mathematical models and methods for ab initio Quantum Chemistry, Lecture Notes in Chemistry. Springer},
  volume = {74}
}

@article{Wendin2017,
  title = {Quantum Information Processing with Superconducting Circuits: A Review},
  author = {Wendin, G.},
  year = {2017},
  month = sep,
  journal = {Reports on Progress in Physics},
  volume = {80},
  number = {10},
  pages = {106001},
  publisher = {{IOP Publishing}},
  doi = {10.1088/1361-6633/aa7e1a},
  abstract = {During the last ten years, superconducting circuits have passed from being interesting physical devices to becoming contenders for near-future useful and scalable quantum information processing (QIP). Advanced quantum simulation experiments have been shown with up to nine qubits, while a demonstration of quantum supremacy with fifty qubits is anticipated in just a few years. Quantum supremacy means that the quantum system can no longer be simulated by the most powerful classical supercomputers. Integrated classical-quantum computing systems are already emerging that can be used for software development and experimentation, even via web interfaces. Therefore, the time is ripe for describing some of the recent development of superconducting devices, systems and applications. As such, the discussion of superconducting qubits and circuits is limited to devices that are proven useful for current or near future applications. Consequently, the centre of interest is the practical applications of QIP, such as computation and simulation in Physics and Chemistry.}
}

@article{DuffinDijkstra2019,
  title={Controlling a quantum system via its boundary conditions},
  author={Duffin, Christian and Dijkstra, Arend G},
  journal={The European Physical Journal D},
  volume={73},
  number={10},
  pages={1--5},
  year={2019},
  publisher={Springer}
}

@article{Fattorini1968,
  title={Boundary control systems},
  author={Fattorini, Hector O},
  journal={SIAM Journal on Control},
  volume={6},
  number={3},
  pages={349--385},
  year={1968},
  publisher={SIAM}
}

@article{KochBoscainCalarcoEtAl2022,
  title = {Quantum Optimal Control in Quantum Technologies. {{Strategic}} Report on Current Status, Visions and Goals for Research in {{Europe}}},
  author = {Koch, Christiane P. and Boscain, Ugo and Calarco, Tommaso and Dirr, Gunther and Filipp, Stefan and Glaser, Steffen J. and Kosloff, Ronnie and Montangero, Simone and {Schulte-Herbr{\"u}ggen}, Thomas and Sugny, Dominique and Wilhelm, Frank K.},
  year = {2022},
  journal = {EPJ Quantum Technology},
  volume = {9},
  number = {1},
  pages = {1--60},
  publisher = {{SpringerOpen}},
  issn = {2196-0763},
  doi = {10.1140/epjqt/s40507-022-00138-x},
}

@inproceedings{BoussaidCaponigroChambrion2019,
  title={Impulsive control of the bilinear Schrodinger equation propagators and attainable sets},
  author={Boussaid, Nabile and Caponigro, Marco and Chambrion, Thomas},
  booktitle={2019 IEEE 58th Conference on Decision and Control (CDC)},
  pages={2316--2321},
  year={2019},
  organization={IEEE}
}

@article{boussaid2013weakly,
  title={Weakly coupled systems in quantum control},
  author={Boussaid, Nabile and Caponigro, Marco and Chambrion, Thomas},
  journal={IEEE transactions on automatic control},
  volume={58},
  number={9},
  pages={2205--2216},
  year={2013},
  publisher={IEEE}
}

@article{BoscainPozzoliSigalotti2021,
  title={Classical and quantum controllability of a rotating symmetric molecule},
  author={Boscain, Ugo and Pozzoli, Eugenio and Sigalotti, Mario},
  journal={SIAM Journal on Control and Optimization},
  volume={59},
  number={1},
  pages={156--184},
  year={2021},
  publisher={SIAM}
}

@book{Post2012,
  title={Spectral analysis on graph-like spaces},
  author={Post, Olaf},
  volume={2039},
  year={2012},
  publisher={Springer Science \& Business Media}
}

@article{Post2016,
  title = {Boundary Pairs Associated with Quadratic Forms},
  author = {Post, Olaf},
  year = {2016},
  journal = {Mathematische Nachrichten},
  volume = {289},
  number = {8-9},
  pages = {1052--1099},
  issn = {1522-2616},
  doi = {10.1002/mana.201500048},
  keywords = {35J25,47A10,47B65,47F05,abstract boundary value problems,abstract Dirichlet problem,Boundary triples,Krein-type resolvent formulas,spectral characterisation,variational Dirichlet-to-Neumann operator,Zaremba problem}
}

@article{ExnerPost2005,
  title={Convergence of spectra of graph-like thin manifolds},
  author={Exner, Pavel and Post, Olaf},
  journal={Journal of Geometry and Physics},
  volume={54},
  number={1},
  pages={77--115},
  year={2005},
  publisher={Elsevier}
}

@article{ExnerPost2007,
  title={Convergence of resonances on thin branched quantum waveguides},
  author={Exner, Pavel and Post, Olaf},
  journal={Journal of mathematical physics},
  volume={48},
  number={9},
  pages={092104},
  year={2007},
  publisher={American Institute of Physics}
}

@article{KrejcirikRaymond2014,
  title={Magnetic effects in curved quantum waveguides},
  author={Krej{\v{c}}i{\v{r}}{\'i}k, David and Raymond, Nicolas},
  journal={Annales Henri Poincar{\'e}},
  volume={15},
  number={10},
  pages={1993--2024},
  year={2014},
}

@article{KrejcirikRaymondTusek2015,
  title={The magnetic Laplacian in shrinking tubular neighborhoods of hypersurfaces},
  author={Krej{\v{c}}i{\v{r}}{\'i}k, D and Raymond, Nicolas and Tu{\v{s}}ek, M},
  journal={The Journal of Geometric Analysis},
  volume={25},
  number={4},
  pages={2546--2564},
  year={2015},
  publisher={Springer}
}

@book{AlbeverioGesztesyHoeghKrohnEtAl2005,
  title={Solvable models in quantum mechanics},
  author={Albeverio, Sergio and Gesztesy, Friedrich and Hoegh-Krohn, Raphael and Holden, Helge},
  year={2005},
  publisher={American Mathematical Society}
}

\end{document}